\documentclass[11pt,a4paper]{article}

\usepackage[T1]{fontenc}
\usepackage{lmodern}
\usepackage[margin=25mm]{geometry}
\usepackage{amsmath,amssymb,amsthm,mathtools}
\usepackage{microtype}
\usepackage{xcolor}
\usepackage{mdframed}
\usepackage{hyperref}
\usepackage{fancyhdr}

\hypersetup{
 colorlinks=true,
 allcolors=blue,
 pdftitle={The exact strong converse exponent for private communication over quantum channels},
 pdfauthor={Hao-Chung Cheng, Christoph Hirche, Marco Tomamichel}
}
\allowdisplaybreaks[2]
\numberwithin{equation}{section}

\newtheorem{theorem}{Theorem}[section]
\newtheorem{lemma}[theorem]{Lemma}
\newtheorem{proposition}[theorem]{Proposition}
\newtheorem{corollary}[theorem]{Corollary}
\theoremstyle{definition}
\newtheorem{definition}{Definition}[section]
\theoremstyle{remark}
\newtheorem*{remark}{Remark}

\definecolor{definitionbackground}{RGB}{237,244,252}
\definecolor{lemmabackground}{RGB}{239,247,239}
\definecolor{theorembackground}{RGB}{255,245,230}
\mdfdefinestyle{statement}{
 hidealllines=true,
 leftmargin=-8pt,rightmargin=-8pt,
 innerleftmargin=8pt,innerrightmargin=8pt,
 innertopmargin=4pt,innerbottommargin=4pt,
 skipabove=8pt,skipbelow=8pt
}
\surroundwithmdframed[style=statement,backgroundcolor=definitionbackground]{definition}
\surroundwithmdframed[style=statement,backgroundcolor=lemmabackground]{lemma}
\surroundwithmdframed[style=statement,backgroundcolor=lemmabackground]{proposition}
\surroundwithmdframed[style=statement,backgroundcolor=lemmabackground]{corollary}
\surroundwithmdframed[style=statement,backgroundcolor=theorembackground]{theorem}

\newcommand{\Tr}{\operatorname{Tr}}
\newcommand{\Sc}{\mathcal S}
\newcommand{\Sp}{\mathcal S_{>}}
\newcommand{\norm}[1]{\left\lVert#1\right\rVert}
\newcommand{\ket}[1]{\lvert#1\rangle}
\newcommand{\bra}[1]{\langle#1\rvert}

\newcommand{\arxiv}[1]{\href{https://arxiv.org/abs/#1}{arXiv:#1}}
\newcommand{\doi}[2]{\href{https://doi.org/#1}{#2}}

\title{The exact strong converse exponent\\for private communication over quantum channels}
\author{Hao-Chung Cheng\thanks{Department of Electrical Engineering and Graduate Institute of Communication Engineering, National Taiwan University, Taipei, Taiwan.}
\and Christoph Hirche\thanks{Institute for Information Processing  (tnt/L3S), Leibniz Universit\"at Hannover, Hannover, Germany.}
\and Marco Tomamichel\thanks{Department of Electrical and Computer Engineering and Centre for Quantum Technologies, National University of Singapore, Singapore.}}
\date{September 2026}

\begin{document}
\maketitle
\enlargethispage{3pt}

\begin{abstract}
We determine the exact strong converse exponent for secret-key
transmission and generation over general finite-dimensional quantum wiretap
channels, measuring reliability and secrecy jointly by squared fidelity to an
ideal secret key. We introduce a novel R\'enyi private information whose
regularization characterizes this exponent. An exact integral representation
in terms of ordinary private information gives a uniform continuity
bound, showing that the regularized quantity converges to private capacity
as the R\'enyi order approaches one. This establishes for any wiretap channel an exponential fidelity
decay at every rate above private capacity.
\end{abstract}

\section{Introduction}\label{sec:introduction}
Private communication over a quantum channel aims to transmit classical
information reliably to Bob while keeping it secret from Eve. When Eve
receives the channel environment, private capacity lies between quantum
and classical capacity: reliable quantum transmission also permits private
classical transmission, while removing the secrecy requirement gives
ordinary classical communication \cite{Devetak}. We study
\emph{secret-key transmission}: Alice is given a uniformly distributed
classical message, which Bob must recover without revealing it to Eve.
We assess reliability and secrecy jointly through the fidelity of the final state with
an ideal secret key. In the ideal state, Alice's message and Bob's estimate
agree and are independent of Eve.

We consider unassisted transmission over a fixed finite-dimensional
memoryless joint channel $\mathcal W:A\to BE$, where Bob receives $B$ and
Eve receives $E$. This includes the case in which $B$ and $E$ are
complementary outputs of a quantum channel.

The \emph{private capacity} $P(\mathcal W)$ is the supremum of rates
achievable by secret-key transmission with fidelity tending to one as the
blocklength increases. The coding theorems of Cai, Winter, and Yeung and
of Devetak \cite{Devetak,CaiWinterYeung} characterize it as
\begin{equation}
 P(\mathcal W)=\sup_{n\ge1}\frac1nP^{(1)}(\mathcal W^{\otimes n}),
 \qquad P^{(1)}(\mathcal W)=\sup_{\omega}
 \bigl\{I(X:B)_\omega-I(X:E)_\omega\bigr\},
 \label{eq:capacity}
\end{equation}
where the second supremum is over finite classical-quantum-quantum (cqq)
output states obtained by sending input ensembles through $\mathcal W$.
The mutual information difference measures Bob's advantage over Eve
in information about the classical label. The regularization allows
Alice to encode jointly across arbitrarily many channel uses.

The capacity theorem characterizes the rates at which asymptotically
perfect transmission is possible, but does not determine the optimal
fidelity above capacity. A \emph{strong converse} asserts that
the optimal fidelity tends to zero for every rate $R>P(\mathcal W)$.
An \emph{exponential strong converse} strengthens this conclusion to
exponential decay.

An $(n,M)$ secret-key transmission code
$\mathcal C$ transmits a uniform message from $M$ possibilities using
$n$ channel uses. Alice encodes the message in a state on $A^n$, and
Bob measures $B^n$ to produce an estimate. Let
$\rho_{M\widehat M E^n}$ be the state of the uniform message, Bob's
estimate, and Eve's output after decoding. The code's squared fidelity is
\begin{equation}
 F(\mathcal C;\mathcal W):=\max_{\sigma_{E^n}}
 F(\rho_{M\widehat M E^n},\kappa_{M\widehat M}\otimes\sigma_{E^n}),
 \qquad \kappa_{M\widehat M}=\frac1M\sum_{m=1}^M|mm\rangle\langle mm|,
 \label{eq:key-fidelity}
\end{equation}
where the maximum is over states $\sigma_{E^n}$ on Eve's output and
$F(\rho,\sigma)=\|\sqrt\rho\sqrt\sigma\|_1^2$. For complementary
channel outputs, this is equivalent to the squared-fidelity criterion
used by Wilde, Tomamichel, and Berta \cite[Eq.~(3.3)]{WTB}, with the
optimization over Eve's reference state written explicitly. We use
the same criterion for general joint channels $\mathcal W:A\to BE$.
The best squared fidelity at blocklength $n$ and rate at least $R$ is
\begin{equation}
 F^\star(n,R;\mathcal W):=\sup_{\mathcal C}F(\mathcal C;\mathcal W),
 \label{eq:optimal-fidelity}
\end{equation}
where the supremum is over all $(n,M)$ secret-key transmission codes with
$\log M\ge nR$.

The \emph{strong converse exponent} is the optimal
decay rate, defined by
\begin{equation}
 E_{\mathrm{sc}}(\mathcal W,R)
 :=\liminf_{n\to\infty}-\frac1n\log F^\star(n,R;\mathcal W).
 \label{eq:operational-exponent}
\end{equation}
Determining the exponent exactly requires a converse bound valid for
every code and a sequence of codes attaining the same exponent.

Previous results for private communication include pretty strong
converses for degradable quantum channels and degraded c-q wiretap
channels \cite{MorganWinter,WinterPretty}, and converse bounds for
anti-degradable channels, whose private capacity is zero, under a
specified reliability--secrecy error region \cite{KhanianHirche}.
General meta-converse and geometric R\'enyi methods give further bounds
on private communication \cite{WTB,FangFawzi}, but the rates above
which they guarantee fidelity decay can exceed private capacity.
Recently, Wilde~\cite{WildeDegradable} established an exponential strong
converse at private capacity for finite-dimensional degradable and
anti-degradable quantum channels under the same joint-fidelity criterion.
For classical wiretap channels, strong converses are known for degraded
channels with public feedback \cite{HayashiTyagiWatanabe} and for
general channels under total-variation secrecy
\cite{GravesWong,TyagiWatanabe}. Secrecy exponents below capacity
have also been studied \cite{BastaniTelatarMerhav}; these concern the
approach to secrecy below capacity rather than the decay of performance
above capacity.

Related questions have been studied for other communication tasks.
For entanglement-assisted classical communication, Gupta and
Wilde~\cite{GuptaWilde} prove an exponential strong converse, and
Li and Yao~\cite{LiYaoEA} determine the exact strong converse exponent.
Mosonyi and Ogawa~\cite{MosonyiOgawa} characterize the exact
strong converse exponent for c-q channel coding. Recently, Cheng and Tomamichel~\cite{CT} proved strong converses at the classical and quantum capacities
of general quantum channels using integral representations of R\'enyi
information along tilted states. These representations provide a
method for controlling R\'enyi information uniformly in the blocklength,
which we adapt here to private communication.

We introduce a R\'enyi private information that converges to ordinary
private information at order one and admits an operational expression
in terms of the secret key fraction at order infinity. Its integral
representation in terms of ordinary private information along tilted
states yields continuity at order one after regularization. Together
with a one-shot fidelity converse, this establishes an exponential
strong converse above private capacity, following the approach of
Cheng and Tomamichel \cite{CT}.
For achievability, we follow the approach of Mosonyi and Ogawa
\cite{MosonyiOgawa}: a change of channel first gives an achievable
exponent in terms of a quantum variational formula. Local pinching
and a double-blocking argument then relate this exponent to our
regularized R\'enyi private information, giving a matching bound.

The paper is organized as follows.
Section~\ref{sec:main-results} presents the main results and their
operational consequences.
Section~\ref{sec:radius} establishes the properties and representations
of the R\'enyi private information, and
Section~\ref{sec:capacity-proof} proves continuity at order one after
regularization.
Sections~\ref{sec:converse} and~\ref{sec:coding} prove the converse and
matching achievability bounds, including the extension to secret-key
generation.
Section~\ref{sec:classical} proves the classical variational formula.
Appendix~\ref{sec:criteria} compares the different error criteria.

\section{Main results and discussion}\label{sec:main-results}
This section introduces the R\'enyi private information and its one-shot
converse bound, states the exact strong converse exponent and continuity
under regularization, and discusses special cases and other error criteria.
All logarithms are natural.

\subsection{R\'enyi private information}
The joint fidelity criterion retains correlations between Bob's decoding
outcome and Eve's state, so its exponent can depend on the joint channel
and not only on its marginals. We introduce a R\'enyi private information
that accounts for these correlations.

Let $\omega_{XBE}=\sum_{x\in\mathcal X}p_x|x\rangle\langle x|
\otimes\omega_{BE,x}$ be a cqq state with finite alphabet $\mathcal X$,
omitting zero-probability letters. Restrict $B$ and $E$
to the supports of the average marginals $\omega_B$ and $\omega_E$.

We write $\Sc(R)$ for the states on $R$ and $\Sp(R)$ for the strictly positive ones.

\begin{definition}[R\'enyi private information]\label{def:radius}
For $\alpha>1$ and a cqq state $\omega_{XBE}$ as above, define
\begin{align}
 P_\alpha^{(1)}(\omega)
 &:=\frac{\alpha}{\alpha-1}\log
 \sup_{\sigma_E}\inf_{\tau_B}\sup_{\boldsymbol\eta_B}
 \inf_{\boldsymbol\nu_E}
 G_\alpha(\tau_B,\sigma_E,\boldsymbol\nu_E,\boldsymbol\eta_B),
   \label{eq:radius-trace} \\
G_\alpha(\tau_B,\sigma_E,\boldsymbol\nu_E,\boldsymbol\eta_B)
 &:=\sum_xp_x\Tr\!\left[\omega_{BE,x}\left(
 \tau_B^{\frac{1-\alpha}{2\alpha}}
 \eta_{B,x}^{\frac{\alpha-1}{\alpha}}
 \tau_B^{\frac{1-\alpha}{2\alpha}}
 \otimes
 \sigma_E^{\frac{\alpha-1}{2\alpha}}
 \nu_{E,x}^{\frac{1-\alpha}{\alpha}}
 \sigma_E^{\frac{\alpha-1}{2\alpha}}
 \right)\right].
 \label{eq:trace-objective}
\end{align}
Here $\tau_B\in\Sp(B)$, $\sigma_E\in\Sc(E)$,
$\boldsymbol\nu_E=(\nu_{E,x})_x\in\Sp(E)^{\mathcal X}$, and
$\boldsymbol\eta_B=(\eta_{B,x})_x\in\Sc(B)^{\mathcal X}$.
\end{definition}
The R\'enyi private information is additive:
for finite cqq states $\omega_{XBE}$ and $\zeta_{X'B'E'}$,
\begin{equation}
 P_\alpha^{(1)}(\omega\otimes\zeta)
 =P_\alpha^{(1)}(\omega)+P_\alpha^{(1)}(\zeta) \,.
 \label{eq:results-ensemble-additivity}
\end{equation}
Lemma~\ref{lem:ensemble-additivity} establishes this property.
The channel optimization also allows correlated input ensembles across
uses, so this identity does not eliminate the need for regularization.
The quantity also satisfies data processing
in opposite directions at the two terminals: applying a channel to Bob's
system cannot increase it, while applying a channel to Eve's system
cannot decrease it (Lemma~\ref{lem:local-data-processing}).
Equivalent norm and dual quasi-norm representations are given in
Lemma~\ref{lem:dual-norm}.

The limit as $\alpha\searrow1$ is expressed through the usual
von Neumann entropy, quantum relative entropy, and mutual information:
\begin{align}
 H(A)_\rho&=-\Tr\rho_A\log\rho_A,\\
D(\rho\|\sigma)&=\Tr\rho(\log\rho-\log\sigma),\\
I(A:B)_\rho&=D(\rho_{AB}\|\rho_A\otimes\rho_B).
\end{align}
Here $D(\rho\|\sigma)=+\infty$ unless
$\operatorname{supp}\rho\subseteq\operatorname{supp}\sigma$.

\begingroup\mdfsetup{nobreak=true}
\begin{lemma}\label{lem:ensemble-order-one}
For every finite cqq state $\omega_{XBE}$ in Definition~\ref{def:radius},
\begin{equation}
 \lim_{\alpha\searrow1}P_\alpha^{(1)}(\omega)
 =I(X:B)_\omega-I(X:E)_\omega.
\label{eq:ensemble-order-one}
\end{equation}
\end{lemma}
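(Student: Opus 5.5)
The plan is to write $s:=\frac{\alpha-1}{\alpha}\searrow0$, so that $P_\alpha^{(1)}(\omega)=\frac1s\log(\cdots)$, and to expand every power to first order in $s$. Formally, $\tau_B^{-s/2}\eta_{B,x}^{s}\tau_B^{-s/2}=1+s(\log\eta_{B,x}-\log\tau_B)+O(s^2)$, and similarly $\sigma_E^{s/2}\nu_{E,x}^{-s}\sigma_E^{s/2}=1+s(\log\sigma_E-\log\nu_{E,x})+O(s^2)$. Substituting these into \eqref{eq:trace-objective} gives $\frac1s\log G_\alpha\to L$, where
\begin{equation*}
L:=\sum_xp_x\bigl[\Tr\omega_{B,x}\log\eta_{B,x}-\Tr\omega_{E,x}\log\nu_{E,x}\bigr]-\Tr\omega_B\log\tau_B+\Tr\omega_E\log\sigma_E.
\end{equation*}
The four optimizations of $L$ decouple, and Gibbs' inequality evaluates each of them. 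The supremum over $\eta_{B,x}$ gives $-H(B)_{\omega_x}$, attained at $\eta_{B,x}=\omega_{B,x}$. The infimum over $\tau_B$ gives $H(B)_\omega$. The infimum over $\nu_{E,x}$ gives $H(E)_{\omega_x}$. The supremum over $\sigma_E$ gives $-H(E)_\omega$. The total is $H(B)-\sum_xp_xH(B)_{\omega_x}-H(E)+\sum_xp_xH(E)_{\omega_x}=I(X:B)_\omega-I(X:E)_\omega$. The actual proof must justify exchanging the limit $s\to0$ with the four nested optimizations. I would do this by proving a matching $\limsup$ bound and $\liminf$ bound, each time fixing the two outer variables that favour the desired direction to their optimizers.

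\emph{Lower bound.} I fix $\sigma_E=\omega_E$, which is full rank after the support restriction, and $\eta_{B,x}=\omega_{B,x}$. For each fixed $\tau_B$, the inner infimum over $\nu_{E,x}$ is of the form $\inf_{\nu}\Tr[Y_x\nu^{-s}]$ with
\begin{equation*}
Y_x=\omega_E^{s/2}\Tr_B\bigl[\omega_{BE,x}(\tau_B^{-s/2}\omega_{B,x}^{s}\tau_B^{-s/2}\otimes1)\bigr]\omega_E^{s/2}\ge0.
\end{equation*}
By the standard H\"older-type variational formula, this infimum equals $\|Y_x\|_{1/(1+s)}$; I would use this formula as underlying the dual quasi-norm representation of Lemma~\ref{lem:dual-norm}. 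For a fixed positive family depending smoothly on $s$, $\log\|Y_x\|_{1/(1+s)}$ is differentiable at $s=0$, with derivative $-H(\cdot)$-type terms plus the log-terms above. This gives $\frac1s\log(\cdots)\ge L_\tau+o(1)$, where $L_\tau$ is $L$ evaluated at the chosen $\sigma_E$ and $\eta$ and at the optimal $\nu$. The remaining infimum over $\tau_B$ needs the $o(1)$ to be uniform in $\tau_B$. I would get this by splitting $\tau_B$ into two regimes. Where $\lambda_{\min}(\tau_B)\ge\delta$, a compact set, the expansion is uniform. Where some eigenvalue of $\tau_B$ is below $\delta$, the factor $\tau_B^{-s/2}$ only increases the objective; a crude bound $\tau_B^{-s/2}\ge$ (the same with $\tau_B$ replaced by a $\delta$-regularized version) shows that such $\tau_B$ cannot beat the compact regime by more than $O(\delta)$.

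\emph{Upper bound.} Symmetrically, I fix $\tau_B=\omega_B$ and $\nu_{E,x}=(1-\epsilon)\omega_{E,x}+\epsilon\,1/d_E$. The supremum over $\eta_{B,x}$ then becomes $\sup_\eta\Tr[Z_x\eta^{s}]=\|Z_x\|_{1/(1-s)}$ by the concave H\"older duality. Expanding this at $s=0$ gives the $-H(B)_{\omega_x}$ term. The supremum over $\sigma_E$ is taken over a compact set; here the factor $\sigma_E^{s/2}$ is bounded by the identity, which gives uniformity, although the boundary of the state space again needs the same $\delta$-splitting. Letting $s\to0$ first and then $\epsilon\to0$ gives $\limsup\le I(X:B)-I(X:E)+O(\epsilon\log\epsilon)$.

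The main obstacle I expect is precisely this uniformity of the first-order expansion over the outer optimization variable ($\tau_B$ in the lower bound, $\sigma_E$ in the upper bound) near the boundary of the state space, where $\log\tau_B$ or $\log\sigma_E$ blows up and the $O(s^2)$ remainder is not uniformly controlled. My first attempt would use operator monotonicity of $t\mapsto t^{s}$ to compare with regularized states, so that the problem is reduced to a compact set of full-rank states, and then apply a Dini- or equicontinuity-type argument there.
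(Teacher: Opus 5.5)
\textbf{Verdict: the approach is close to the paper's, but the boundary step you flag as the main obstacle does not work as you describe it.}

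Your overall plan is sound. You take a lower bound with $\sigma_E=\omega_E$ and $\eta_{B,x}=\omega_{B,x}$, an upper bound with $\tau_B=\omega_B$ and a regularized $\nu_{E,x}$, and expand uniformly on compact sets of faithful references. Your leading term $L$ also agrees with \eqref{eq:intro-first-order-start}--\eqref{eq:intro-first-order}.

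The problem is monotonicity. Once you fix $\eta_{B,x}=\omega_{B,x}$, the objective depends on $\tau_B$ only through the two-sided product $\tau_B^{-s/2}\omega_{B,x}^{s}\tau_B^{-s/2}$. For general $\rho,P\ge0$, the map $T\mapsto\Tr[\rho\,TPT]$ is not monotone in the L\"owner order. Take $P=|v\rangle\langle v|$, $\rho=|w\rangle\langle w|$ and $T=I+t|a\rangle\langle a|\ge I$ with real vectors satisfying $\langle w|v\rangle>0>\langle w|a\rangle\langle a|v\rangle$. Then the value $|\langle w|v\rangle+t\langle w|a\rangle\langle a|v\rangle|^2$ vanishes for a suitable $t>0$. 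So the valid operator inequality $\tau_B^{-s/2}\ge(1-\delta)^{s/2}\tau_{B,\delta}^{-s/2}$ gives nothing about the objective. Your claim that $\tau_B^{-s/2}$ ``only increases the objective'' is therefore unjustified.

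The same obstruction affects $\sigma_E$ in your upper bound, where $\sigma_E^{s/2}$ sandwiches the fixed $\nu_{E,x}^{-s}$. The bound $\sigma_E^{s/2}\le I$ only yields estimates with factors like $\lambda_{\min}(\nu_{E,x})^{-s}$, which survive $\frac1s\log$ as $O(1)$ errors. What you actually need is a multiplicative cost $(1-\delta)^{\pm s}$, so that the error after $\frac1s\log$ is $O(\delta)$ uniformly in $s$.

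The missing idea is to regularize each outer reference while its adjacent auxiliary state is still optimized, that is, in the representations of Lemma~\ref{lem:dual-norm}. Using $\|C^\dagger C\|_q=\|CC^\dagger\|_q$:
\begin{itemize}
\item $\|\tau_B^{-s/2}A\tau_B^{-s/2}\|_\alpha=\|A^{1/2}\tau_B^{-s}A^{1/2}\|_\alpha$ is monotone in $\tau_B^{-s}$;
\item $\|\sigma_E^{s/2}A\sigma_E^{s/2}\|_\beta=\|A^{1/2}\sigma_E^{s}A^{1/2}\|_\beta$ is monotone in $\sigma_E^{s}$.
\end{itemize}
Set $\rho_\delta=(1-\delta)\rho+\delta I/d$. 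Then $\rho_\delta^{-s}\le(1-\delta)^{-s}\rho^{-s}$ for faithful $\rho$, and $\rho_\delta^{s}\ge(1-\delta)^{s}\rho^{s}$ for all $\rho\ge0$. Hence restricting $\tau_B$ (at fixed $\sigma_E$) or $\sigma_E$ (at fixed $\tau_B$) to $\{\rho\ge\delta I/d\}$ costs at most $-\log(1-\delta)$ after $\frac1s\log$. This holds uniformly in $\alpha$ and needs only the inner, fixed-reference exchange, not the outer one.

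Only after this restriction should you fix $\eta_{B,x}=\omega_{B,x}$ (respectively $\nu_{E,x}$) and expand uniformly on the compact set. There you should also check constant rank when $\omega_{B,x}$ or $\omega_{E,x}$ is singular, since then $\omega_{B,x}^s\to\Pi_{B,x}$ rather than $I$.

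This is exactly the mechanism in \eqref{eq:reference-smoothing}. The paper differs only in restricting $\tau_B$, $\sigma_E$ and $\boldsymbol\nu_E$ simultaneously and expanding the optimized value once. It then notes that each relative-entropy minimum in the leading term lies in $[0,-\log(1-\varepsilon)]$, instead of using two one-sided bounds. With this reordering your argument goes through.
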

\endgroup
A proof is given in Section~\ref{sec:ensemble-order-one}.
The first-order expansion also explains the four optimizations.
Expanding $G_\alpha$ from \eqref{eq:trace-objective} as
$\alpha\searrow1$, with faithful reference and auxiliary states held
fixed, gives
\begin{align}
\frac{\alpha}{\alpha-1}\log G_\alpha
 ={}&I(X:B)_\omega-I(X:E)_\omega
   +D(\omega_B\|\tau_B)-D(\omega_E\|\sigma_E)
   \label{eq:intro-first-order-start}\\
 &+\sum_xp_x\bigl[D(\omega_{E,x}\|\nu_{E,x})
                  -D(\omega_{B,x}\|\eta_{B,x})\bigr]+O(\alpha-1).
\label{eq:intro-first-order}
\end{align}
Roughly speaking, the infima select $\tau_B=\omega_B$ and
$\nu_{E,x}=\omega_{E,x}$, while the suprema select
$\sigma_E=\omega_E$ and $\eta_{B,x}=\omega_{B,x}$.
All relative-entropy corrections then vanish, leaving the mutual
information difference. At other orders, compatible optimizing states
are the corresponding marginals of the tilted ensemble introduced
in Section~\ref{sec:integral}.

At infinite order, the quantity has a direct fidelity interpretation.
Let $d=|\mathcal X|$, and let
$\kappa_{X\widehat X}=d^{-1}\sum_x|xx\rangle\langle xx|$ be an ideal
uniform key. We call
\begin{equation}
 F_{\mathrm{key}}(\omega)
 :=\sup_{\substack{\mathcal D:B\to\widehat X\\\sigma_E\in\Sc(E)}}
 F\!\left((\mathrm{id}_X\otimes\mathcal D\otimes\mathrm{id}_E)(\omega_{XBE}),
          \kappa_{X\widehat X}\otimes\sigma_E\right)
 \label{eq:secret-key-fraction}
\end{equation}
the maximally achievable secret key fraction, where $\mathcal D$ ranges over measurement
channels with outcome alphabet $\mathcal X$. It measures the best joint
reliability and secrecy attainable by decoding Bob's system. This is
the secret-key analogue of the maximal achievable singlet fraction
\cite{KoenigRennerSchaffner}: both optimize a
local decoding operation on Bob's system and compare its output with
an ideal shared resource. Here the target is a uniform classical key
independent of Eve, in place of a maximally entangled state.
\begin{lemma}
\label{lem:ensemble-order-infinity}
For every finite cqq state $\omega_{XBE}$ in Definition~\ref{def:radius},
\begin{equation}
 P_\infty^{(1)}(\omega)
 :=\lim_{\alpha\to\infty}P_\alpha^{(1)}(\omega)
 =\log d+\log F_{\mathrm{key}}(\omega).
 \label{eq:infinity-fidelity}
\end{equation}
\end{lemma}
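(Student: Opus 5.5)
The plan is to pass to the limit inside the objective and then evaluate the four optimizations at $\alpha=\infty$ in closed form. The exponents in \eqref{eq:trace-objective} converge: $\frac{1-\alpha}{2\alpha}\to-\frac12$, $\frac{\alpha-1}{\alpha}\to1$, $\frac{\alpha-1}{2\alpha}\to\frac12$, $\frac{1-\alpha}{\alpha}\to-1$, and the prefactor $\frac{\alpha}{\alpha-1}\to1$. So the candidate limit is $\log$ of $\sup_{\sigma_E}\inf_{\tau_B}\sup_{\boldsymbol\eta_B}\inf_{\boldsymbol\nu_E}G_\infty$, where
\begin{equation*}
 G_\infty=\sum_xp_x\Tr\!\left[\omega_{BE,x}\left(\tau_B^{-1/2}\eta_{B,x}\tau_B^{-1/2}\otimes\sigma_E^{1/2}\nu_{E,x}^{-1}\sigma_E^{1/2}\right)\right].
\end{equation*}

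\textbf{Step 1 (justifying the limit).} I would show that $\alpha\mapsto P^{(1)}_\alpha(\omega)$ is monotone in $\alpha$, presumably nondecreasing as for sandwiched quantities. For this I would use the norm representation of Lemma~\ref{lem:dual-norm}, so that the limit exists. I would then prove matching upper and lower bounds on $\lim_\alpha$ by exhibiting near-optimal choices at each level. Lower bound: fix near-optimal $\sigma_E,\boldsymbol\eta_B$ for the $\infty$-problem, perturbed to be faithful. Upper bound: fix near-optimal $\tau_B,\boldsymbol\nu_E$. Pointwise convergence $G_\alpha\to G_\infty$ for faithful arguments, together with compactness of the state spaces, then gives the interchange. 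This step is where I expect the main technical obstacle. The $\sup$/$\inf$ alternate and some variables may drift to the boundary, where negative powers blow up. I would first try to handle this by restricting the inner infima over $\tau_B,\nu_{E,x}$ to states with $\tau_B\ge\delta\,\mathbb 1$ and $\nu_{E,x}\ge\delta\,\mathbb 1$, and controlling the error uniformly in $\alpha$.

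\textbf{Step 2 (the $\nu$-infimum).} For fixed $x$, write $A_x=\tau_B^{-1/2}\eta_{B,x}\tau_B^{-1/2}$ and $Y_x=\sigma_E^{1/2}\Tr_B[(A_x\otimes1)\omega_{BE,x}]\sigma_E^{1/2}\ge0$. The elementary identity $\inf_{\nu}\Tr[Y\nu^{-1}]=(\Tr\sqrt Y)^2$, attained at $\nu\propto\sqrt Y$, then gives
\begin{equation*}
 \inf_{\nu_{E,x}}(\cdot)=F\bigl(\Tr_B[(A_x\otimes1)\omega_{BE,x}],\sigma_E\bigr).
\end{equation*}
Hence $G$ after this infimum equals $\sum_xp_xF(\Tr_B[(A_x\otimes 1)\omega_{BE,x}],\sigma_E)$.

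\textbf{Step 3 (matching to $F_{\mathrm{key}}$).} A measurement $\{\Lambda_x\}$ produces a state block-diagonal in $(x,\hat x)$, and $\kappa_{X\widehat X}$ is supported only on the diagonal. So
\begin{equation*}
 F\bigl((\mathcal D\otimes\mathrm{id})(\omega),\kappa\otimes\sigma\bigr)=\frac1d\Bigl(\sum_x\sqrt{p_x\,F(\Tr_B[(\Lambda_x\otimes1)\omega_{BE,x}],\sigma_E)}\Bigr)^2.
\end{equation*}
It remains to show that $\inf_{\tau_B}\sup_{\boldsymbol\eta_B}\sum_xp_xF(\Tr_B[(A_x\otimes1)\omega_{BE,x}],\sigma_E)$ equals $d$ times the supremum over POVMs of this expression.

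\emph{Upper bound on the $\infty$-quantity.} Given a POVM, choose $\tau_B$ built from it and set $\Lambda_x=\tau^{-1/2}\eta_x\tau^{-1/2}$ scaled appropriately. Use $\Sigma_x\Lambda_x=\mathbb 1$ so that $\Tr\tau=1$ reproduces the normalization. Then apply Cauchy--Schwarz in the form $(\sum_x a_x)^2=\inf_q\sum_x a_x^2/q_x$ to trade the sum of fidelities for the squared sum of root-fidelities, with the factor $d$ coming from the uniform weights.

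\emph{Reverse direction.} Rewrite the $\tau$-infimum via a Lagrangian/SDP duality for the constraint $\sum_x\Lambda_x\le\mathbb 1$. I would use Sion's minimax theorem, using concavity of $\sqrt F$ in its first argument and linearity in $\Lambda_x$, and absorb the optimal weights $q_x$ into the normalization of $\eta_{B,x}$.

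Finally, take $\log$ and the outer $\sup_{\sigma_E}$, which is exactly the $\sup_{\sigma_E}$ in \eqref{eq:secret-key-fraction}. This yields $\log d+\log F_{\mathrm{key}}(\omega)$.
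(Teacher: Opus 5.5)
Your route differs from the paper's and can be completed, but as written Step~3 has its two directions reversed, so the inequality you call the upper bound is not proved. Write $g_x(\Lambda)=F(\mathcal T_x^\dagger(\Lambda),\sigma_E)$ and $V_\sigma(\tau)=\sup_{\boldsymbol\eta_B}\sum_xp_xg_x(\tau^{-1/2}\eta_{B,x}\tau^{-1/2})$.

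\emph{What your construction actually proves.} Setting $\eta_{B,x}=\tau^{1/2}\Lambda_x\tau^{1/2}/\Tr\tau\Lambda_x$ from a POVM exhibits one admissible auxiliary family, so it bounds $\sup_{\boldsymbol\eta_B}$ from \emph{below}. Combined with $\sum_xa_x^2/q_x\ge(\sum_xa_x)^2$ at $q_x=\Tr\tau\Lambda_x$, it gives $V_\sigma(\tau)\ge(\sum_x\sqrt{p_xg_x(\Lambda_x)})^2$ for \emph{every} faithful $\tau$. In particular $\tau$ must not be built from the POVM. This is the easy inequality $\lim_\alpha P_\alpha^{(1)}\ge\log d+\log F_{\mathrm{key}}$.

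\emph{The hard direction.} The converse $\inf_\tau V_\sigma(\tau)\le\max_{\mathrm{POVM}}(\sum_x\sqrt{p_xg_x(\Lambda_x)})^2$ needs a single $\tau$ that controls \emph{all} $\boldsymbol\eta_B$, i.e.\ a dual certificate. It therefore has to come from your Lagrangian paragraph, and only as strong duality. Concretely, apply Sion to $\Phi(\tau,\boldsymbol\Lambda)=2\sum_x\sqrt{p_xg_x(\Lambda_x)}-\Tr\tau\sum_x\Lambda_x$, with $\tau$ in the compact set $\Sc(B)$ and $\Lambda_x\ge0$ unconstrained. $\Phi$ is affine in $\tau$ and concave in $\boldsymbol\Lambda$. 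Optimizing the scale of each $\Lambda_x$ gives $\sup_{\boldsymbol\Lambda}\Phi=V_\sigma(\tau)$. On the other side, $\inf_\tau\Phi=2\sum_x\sqrt{p_xg_x(\Lambda_x)}-\|\sum_x\Lambda_x\|_\infty$, whose supremum (again by scaling, plus monotonicity of $g_x$ to complete sub-POVMs) is $\max_{\mathrm{POVM}}(\sum_x\sqrt{p_xg_x(\Lambda_x)})^2$. Mixing $\tau$ with $I_B/d_B$ shows faithful $\tau$ suffice. This yields both inequalities at once.

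\emph{Step~1.} With alternating optimizations you cannot fix near-optimal $\sigma_E,\boldsymbol\eta_B$ or $\tau_B,\boldsymbol\nu_E$, since $\boldsymbol\eta_B$ sits inside $\inf_{\tau_B}$. Pointwise convergence of $G_\alpha$ is insufficient, and monotonicity in $\alpha$ is unproved and unnecessary. Your restriction idea does work once made quantitative:
restricting $\tau_B,\nu_{E,x}$ to $\Sc_\delta$ changes the value by a factor in $[1,(1-\delta)^{-2u}]$, uniformly for $u\in(0,1]$, and the restricted objective is jointly continuous in $u$ and the states up to $u=1$.
For $\nu$ this follows from $\nu_\delta^{-u}\le(1-\delta)^{-u}\nu^{-u}$. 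For $\tau$ the objective is not monotone at fixed $\boldsymbol\eta_B$. You must compensate by changing $\eta$, using $\tau_\delta^{-u/2}\tau^u\tau_\delta^{-u/2}\le(1-\delta)^{-u}$, or pass through the $\eta$-optimized norm as in \eqref{eq:reference-smoothing}.

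\emph{Comparison with the paper.} The paper avoids $\nu$ by taking the limit in \eqref{eq:radius-primal}, restricting only $\tau_B$. It then works in the order $\inf_{\tau_B}\max_{\sigma_E}$ and rewrites each letter as $\inf\{\Tr\sigma_EA_E^{-1}:\mathcal T_x(A_E)\le\tau_B\}$. It uses Sion a second time to move $\max_{\sigma_E}$ inside, and solves an explicit SDP via Slater, which also yields the noncommutative-maximum and SDP characterizations. Your route keeps the order $\sup_{\sigma_E}\inf_{\tau_B}$, eliminates $\boldsymbol\nu_E$ elementarily (Step~2 is correct), and needs one fixed-$\sigma_E$ minimax. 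That gives a shorter proof of the identity, but without those by-products.
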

The input distribution $(p_x)$ is arbitrary; the comparison key is
uniform. Section~\ref{sec:ensemble-order-infinity} proves the identity
and gives a primal--dual pair of semidefinite programs for the same
quantity.

Under the faithfulness assumptions below, the optimizers in
Definition~\ref{def:radius} also give a tilted cqq state by applying local
operators and normalizing the original state. Its conditional
states on Bob's and Eve's systems are the optimizing auxiliary states,
and their averages are the optimizing reference states. These tilted
states yield an integral representation in terms of ordinary private
information.
\begin{lemma}[Integral representation]\label{lem:integral}
Let $\omega_{XBE}$ be a finite cqq state whose conditional marginals
$\omega_{B,x}$ and $\omega_{E,x}$ are faithful. For each $u \in(0,1)$,
let $\Theta(u)$ be the tilted cqq state defined in
\eqref{eq:tilted-ensemble}, constructed from the optimizers in
Lemma~\ref{au-optimizer-theorem} at parameter $u$.
Then, for every $\alpha>1$,
\begin{equation}
 \boxed{P_\alpha^{(1)}(\omega)=\frac{\alpha}{\alpha-1}
 \int_0^{\frac{\alpha-1}{\alpha}}
 \bigl[I(X:B)_{\Theta(u)}-I(X:E)_{\Theta(u)}\bigr]\,du.}
 \label{eq:intro-integral}
\end{equation}
\end{lemma}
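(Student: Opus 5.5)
The plan is the Cheng--Tomamichel strategy: show that $u\mapsto\phi(u)$ with $u=\frac{\alpha-1}{\alpha}$ has derivative equal to the ordinary private information of the tilted state, then integrate from $u=0$. Write $u=\frac{\alpha-1}{\alpha}\in(0,1)$, so $\alpha=\frac{1}{1-u}$, and set
\begin{equation*}
 \phi(u):=\log \sup_{\sigma_E}\inf_{\tau_B}\sup_{\boldsymbol\eta_B}\inf_{\boldsymbol\nu_E}G_{\alpha(u)}(\tau_B,\sigma_E,\boldsymbol\nu_E,\boldsymbol\eta_B),
\end{equation*}
so that $P^{(1)}_\alpha(\omega)=\phi(u)/u$. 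The claimed identity is then equivalent to $\phi(u)=\int_0^u[I(X:B)_{\Theta(v)}-I(X:E)_{\Theta(v)}]\,dv$. It therefore suffices to prove three things: $\phi(0)=0$, $\phi$ is absolutely continuous (indeed $C^1$) on $[0,u]$, and $\phi'(v)=I(X:B)_{\Theta(v)}-I(X:E)_{\Theta(v)}$ for $v\in(0,1)$. In exponent form, $G$ involves the powers $\tau_B^{-u/2}\eta_{B,x}^{u}\tau_B^{-u/2}\otimes\sigma_E^{u/2}\nu_{E,x}^{-u}\sigma_E^{u/2}$, so at $u=0$ every operator is the identity and $G=\sum_x p_x\Tr\omega_{BE,x}=1$. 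This gives $\phi(0)=0$, and continuity at $0$ follows from compactness together with the faithfulness assumptions, which keep the relevant optimizers bounded away from the boundary.

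For the derivative, I would use an envelope (Danskin-type) argument. Lemma~\ref{au-optimizer-theorem} supplies, at each $u$, a saddle point $(\tau_B^u,\sigma_E^u,\boldsymbol\nu_E^u,\boldsymbol\eta_B^u)$ of the nested sup/inf. Faithfulness of $\omega_{B,x}$ and $\omega_{E,x}$ makes these optimizers faithful and, by the implicit function theorem applied to the stationarity conditions, smooth in $u$. Since each optimizer is stationary for its own variable, the total derivative of $\log G$ along the optimizer path reduces to the partial derivative in the explicit $u$-dependence with the states held fixed. This partial derivative is a trace of $\omega_{BE,x}$ against the derivative of the operator product. Using $\frac{d}{du}A^{u}=A^{u}\log A$ and the fact that, by the construction in \eqref{eq:tilted-ensemble}, the tilted state $\Theta(u)$ is exactly the normalized state obtained by sandwiching $\omega$ with these local operators, I expect the partial derivative to equal
\begin{equation*}
 \sum_x \Theta_x\Bigl(\Tr\Theta_{B,x}[\log\eta_{B,x}-\log\tau_B]-\Tr\Theta_{E,x}[\log\nu_{E,x}-\log\sigma_E]\Bigr).
\end{equation*}
Here $\Theta_x$ are the tilted probabilities, and the non-commuting factors are handled by the symmetric sandwiching, which makes the first-order terms combine cyclically.

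Next, I would insert the optimizer identifications stated just before the lemma: the conditional tilted marginals equal $\eta_{B,x}$ and $\nu_{E,x}$, and their averages equal $\tau_B$ and $\sigma_E$. The bracket then becomes $\sum_x\Theta_x[D(\Theta_{B,x}\|\Theta_B)-D(\Theta_{E,x}\|\Theta_E)]=I(X:B)_{\Theta(u)}-I(X:E)_{\Theta(u)}$, which is the required formula for $\phi'(u)$. Integrating from $0$ to $u$ and dividing by $u$ gives \eqref{eq:intro-integral}.

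The main obstacle is justifying the envelope step for a four-level sup-inf-sup-inf, rather than a single optimization. I would first try to show that the nested problem has a genuine saddle point: the inner values should be concave or convex in the appropriate variables after a change of variables such as $\eta\mapsto\eta^{u}$ (operator concavity of $t\mapsto t^{u}$ for $u\in(0,1)$ and convexity of $t\mapsto t^{-u}$), presumably as in Lemma~\ref{au-optimizer-theorem}. I would then apply a Danskin/Milgrom--Segal theorem for saddle values, which requires uniqueness of the saddle point and continuity in $u$. If smoothness is unavailable, a fallback is to prove the two one-sided inequalities: fixing the optimizers at $u$ and perturbing $u$ bounds the difference quotient of $\phi$ from above and below, alternately using the sup and inf players. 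This yields the derivative wherever the optimizer map is continuous, which is enough for absolute continuity.
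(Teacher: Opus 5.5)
Your overall route matches the paper's. You differentiate $\phi(u)=uP^{(1)}_{1/(1-u)}(\omega)$ by an envelope argument with the optimizers held fixed. The conditional stationarity identities turn each letter's derivative into $D(\eta_{B,x}^\star\|\tau_B^\star)-D(\nu_{E,x}^\star\|\sigma_E^\star)$, and the self-consistency identities turn the $r_x$-average into $I(X:B)_{\Theta(u)}-I(X:E)_{\Theta(u)}$. You then integrate from $0$.

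\textbf{The gap: justifying the envelope step.}
\begin{itemize}
\item Nothing established supplies nondegenerate second-order conditions for an implicit-function argument.
\item Bob's minimizing reference $\tau_B^\star$ is not known to be unique. A Milgrom--Segal argument premised on a unique, smoothly varying saddle point therefore has no footing.
\item Your fallback needs more than you state. One-sided bounds on difference quotients arise only if the non-frozen players are re-optimized at the perturbed parameter, and their convergence requires uniqueness.
\item Obtaining ``a derivative wherever the optimizer map is continuous'' does not by itself give absolute continuity.
\end{itemize}

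\textbf{How the paper resolves it.} It uses a two-level Danskin argument that needs uniqueness only where uniqueness is available.
\begin{itemize}
\item By Lemma~\ref{au-optimizer-theorem}, for each fixed faithful $\tau_B$ the maximizers $\sigma_E,\boldsymbol\eta_B$ in \eqref{eq:radius-primal} are unique and faithful. Then $\boldsymbol\nu_E$ is given in closed form by \eqref{eq:optimal-eve-state}.
\item Consequently the value after these inner optimizations, $g(u,\tau_B)$, is continuously differentiable in $(u,\tau_B)$.
\item The same lemma confines minimizing $\tau_B$ to a compact interior set for $u$ in compact subintervals of $(0,1)$.
\item Danskin's theorem for $u\mapsto\log\min_{\tau_B}g(u,\tau_B)$ then gives local Lipschitz continuity.
\item At every differentiability point, hence almost everywhere, $\phi'(u)$ equals the partial $u$-derivative at any minimizer with all four families frozen.
\end{itemize}
Neither uniqueness of $\tau_B^\star$ nor a differentiable selection of optimizers is needed.

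\textbf{Two smaller points.}

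First, symmetric sandwiching and cyclicity alone do not collapse the reference terms. Differentiating the outer factors of $\tau_B^{-u/2}\eta_{B,x}^u\tau_B^{-u/2}$ produces $-\operatorname{Re}\Tr[\log\tau_B\,\tau_B^{-u/2}\eta_{B,x}^u\tau_B^{-u/2}\mathcal T_x(\cdot)]$. This equals $-\ell_x\Tr\eta_{B,x}\log\tau_B$ only by \eqref{au-power-stationarity}, and \eqref{eq:optimal-eve-state} plays the same role on Eve's side. So your intermediate formula in terms of $\Theta$-marginals is valid only through these identities (equivalently \eqref{eq:tilt-marginals}), and they must be invoked at that step.

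Second, near $u=0$ you need neither $C^1$ regularity on $[0,u]$ nor uniformly interior optimizers; Lemma~\ref{au-optimizer-theorem} gives interiority only on compact subintervals of $(0,1)$. Instead:
\begin{itemize}
\item Integrate on $[\varepsilon,u]$.
\item Note that the integrand is the private information of a cqq state, so it lies in $[-\log d_E,\log d_B]$.
\item Let $\varepsilon\searrow0$, using $\phi(\varepsilon)\to0$. This follows from Lemma~\ref{lem:ensemble-order-one}, or from the elementary bounds $-u\log d_E\le\phi(u)\le u\log d_B$.
\end{itemize}
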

Thus $P_\alpha^{(1)}$ is an average of private information
along the tilted family, analogous to the integral representations
for public and quantum communication in \cite{CT}.
The tilted states need not be channel outputs; controlling their
departure from the channel image is the main step in the continuity
proof in Section~\ref{sec:capacity-proof}.

When $\omega_{XBE}$ is classical, with joint law $p_{XBE}$, the same
quantity has the variational form
\begin{equation}
 P_\alpha^{(1)}(p_{XBE})
 =\sup_{q \ll p}
 \left\{I(X:B)_q-I(X:E)_q
       -\frac{\alpha}{\alpha-1}D(q_{XBE}\|p_{XBE})\right\}.
 \label{eq:intro-classical-private-information}
\end{equation}
Thus it is the largest private information under a changed joint law $q_{XBE}$,
with relative entropy penalizing its departure from the original law.
Proposition~\ref{prop:classical-radius} proves this specialization.

\subsection{Strong converse exponent}\label{sec:channel-exponent}
A quantum channel $\mathcal W:A\to BE$ is a
completely positive, trace-preserving linear map from operators on
Alice's input system $A$ to operators on the joint output $BE$, with
Bob receiving $B$ and Eve receiving $E$. For a finite input ensemble
$\{p_x,\rho_{A,x}\}_{x\in\mathcal X}$, with $\rho_{A,x}\in\Sc(A)$,
we denote the induced cqq output state by
\begin{equation}
 \omega_{XBE}
 :=\sum_xp_x|x\rangle\langle x|\otimes\mathcal W(\rho_{A,x}).
 \label{eq:channel-output-ensemble}
\end{equation}

\begin{definition}[R\'enyi private information of a channel]\label{def:channel-radius}
For $\alpha>1$ and a finite-dimensional channel $\mathcal W:A\to BE$, define
\begin{equation}
 P_\alpha^{(1)}(\mathcal W)
 :=\sup_{\{p_x,\rho_{A,x}\}}P_\alpha^{(1)}(\omega_{XBE}),
 \label{eq:renyi-one-letter-channel}
\end{equation}
where the supremum is over finite input ensembles and $\omega_{XBE}$
is their channel output as in \eqref{eq:channel-output-ensemble}.
\end{definition}
For general quantum channels, private information can be strictly
superadditive already at order one, with gains that persist over
arbitrarily many channel uses \cite{ElkoussStrelchuk}.

For a classical channel $W(b,e|a)$, the variational formula specializes to
\begin{align}
P_\alpha^{(1)}(W)=\sup_{q_{XABE}}\Bigl\{
 &I(X:B)_q-I(X:E)_q -\frac{\alpha}{\alpha-1}
 D(q_{BE|XA}\|W_{BE|A}\mid q_{XA})\Bigr\}.
\label{eq:classical-penalty-capacity}
\end{align}
Here $X$ ranges over finite alphabets and allows stochastic encoding
into the channel input $A$. The conditional relative entropy is
$D(q_{XABE}\|q_{XA}W)$, so the penalty measures departure from the
channel law. Section~\ref{sec:classical} proves this representation.

For integers $n,M\ge1$, an $(n,M)$ secret-key transmission code
$\mathcal C$ consists of encoding states $\rho_{A^n,m}\in\Sc(A^n)$,
$m=1,\ldots,M$, and a decoding POVM $\{\Lambda_m\}_{m=1}^M$ on
$B^n$. Its conditional output states are
\begin{equation}
 \omega_{B^nE^n,m}:=\mathcal W^{\otimes n}(\rho_{A^n,m}).
 \label{eq:code-conditional-output}
\end{equation}
Bob's measurement produces the decoded state
$\rho_{M\widehat M E^n}$ for a uniformly distributed message.
The R\'enyi private information bounds the fidelity of every such
code. Here $F(\mathcal C;\mathcal W)$ is the squared fidelity
achieved by the particular code $\mathcal C$ with an ideal secret key,
optimized over Eve's reference state.
\begin{theorem}[One-shot converse]\label{thm:intro-one-shot}
Let $\mathcal C$ be a $(1,M)$ secret-key transmission code for
$\mathcal W$.
For every $\alpha>1$,
\begin{equation}
 F(\mathcal C;\mathcal W)
 \le \exp\!\left\{-\frac{\alpha-1}{\alpha}
       [\log M-P_\alpha^{(1)}(\mathcal W)]\right\}.
 \label{eq:intro-one-shot}
\end{equation}
\end{theorem}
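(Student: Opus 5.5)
The plan is to reduce the statement to the infinite-order identity of Lemma~\ref{lem:ensemble-order-infinity}, combined with a monotonicity property in $\alpha$ and an elementary convexity step. Fix a $(1,M)$ code $\mathcal C$ with encoding states $\rho_{A,m}$ and POVM $\{\Lambda_m\}$. Consider the uniform cqq state $\omega_{XBE}=\frac1M\sum_m|m\rangle\langle m|\otimes\mathcal W(\rho_{A,m})$ with $\mathcal X=\{1,\dots,M\}$. This is a channel output as in \eqref{eq:channel-output-ensemble}, so $P_\alpha^{(1)}(\omega)\le P_\alpha^{(1)}(\mathcal W)$ by Definition~\ref{def:channel-radius}.

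\textbf{Step 1 (reduction to $\alpha=\infty$).} The decoded state $\rho_{M\widehat ME}$ is exactly $(\mathrm{id}_X\otimes\mathcal D\otimes\mathrm{id}_E)(\omega_{XBE})$, where $\mathcal D$ is the measurement channel defined by $\{\Lambda_m\}$. Here $d=|\mathcal X|=M$ and $\kappa_{X\widehat X}=\kappa_{M\widehat M}$. Hence \eqref{eq:key-fidelity} is one of the candidates in the supremum \eqref{eq:secret-key-fraction}, which gives $F(\mathcal C;\mathcal W)\le F_{\mathrm{key}}(\omega)$. By Lemma~\ref{lem:ensemble-order-infinity},
\begin{equation*}
 \log F(\mathcal C;\mathcal W)\le P_\infty^{(1)}(\omega)-\log M .
\end{equation*}
One technical point needs care. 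Definition~\ref{def:radius} restricts $B,E$ to the supports of $\omega_B,\omega_E$ and drops zero-probability letters. Neither restriction affects $F_{\mathrm{key}}$, since the support restriction is an isometric embedding and all letters have probability $1/M$.

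\textbf{Step 2 (monotonicity in $\alpha$).} I will show $P_\infty^{(1)}(\omega)\le P_\alpha^{(1)}(\omega)$ for every $\alpha>1$, i.e.\ that $\alpha\mapsto P_\alpha^{(1)}(\omega)$ is nonincreasing. Write $s=\frac{\alpha-1}{\alpha}\in(0,1)$ and $f(u)=I(X:B)_{\Theta(u)}-I(X:E)_{\Theta(u)}$. Lemma~\ref{lem:integral} then says $P_\alpha^{(1)}=\frac1s\int_0^s f(u)\,du$, an average of $f$ over $[0,s]$. Monotonicity follows if $f$ is nonincreasing, or more weakly if $s\mapsto sP^{(1)}_{1/(1-s)}$ is concave. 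My first attempt would be a direct proof of concavity of $s\mapsto \log\sup\inf\sup\inf G_{1/(1-s)}$. The exponents in \eqref{eq:trace-objective} are affine in $s$, so for fixed arguments the objective is a sum of terms $\Tr[\omega\, A^{s}\cdots]$ whose logarithms are convex in $s$; convexity is the wrong direction. I will therefore instead compare $\alpha$ and $\infty$ directly. Given optimizers at order $\infty$, I choose the auxiliary states at order $\alpha$ as suitable powers of them, for example $\eta_{B,x}\propto(\text{order-}\infty\text{ choice})$ renormalized, and apply a Hölder/Araki--Lieb--Thirring inequality to each trace term.

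The faithfulness hypothesis of Lemma~\ref{lem:integral} must also be handled. I will perturb $\omega$ by mixing a small amount of full-rank noise into each $\omega_{BE,x}$ and pass to the limit, using continuity of $P_\alpha^{(1)}$ in the state. Alternatively, the support restrictions in the definition may suffice.

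\textbf{Step 3 (interpolation).} Set $a:=\log F(\mathcal C;\mathcal W)\le0$. Steps 1--2 give $a\le P_\alpha^{(1)}(\mathcal W)-\log M$. Since $s\in(0,1)$ and $a\le0$, we have $a\le sa\le s\bigl(P_\alpha^{(1)}(\mathcal W)-\log M\bigr)$, which is exactly \eqref{eq:intro-one-shot}.

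The main obstacle is Step~2, the monotonicity $P_\infty^{(1)}\le P_\alpha^{(1)}$. The alternating sup/inf structure prevents a direct term-by-term argument. If the Hölder route fails, I will fall back on proving $f$ nonincreasing from the construction of the tilted family $\Theta(u)$.
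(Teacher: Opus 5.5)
\textbf{The gap is in Step~2: the claimed monotonicity $P_\infty^{(1)}(\omega)\le P_\alpha^{(1)}(\omega)$ is false.} Step~1 is correct; it is the $\alpha\to\infty$ endpoint of the statement. The algebra in Step~3 is also fine. But $\alpha\mapsto P_\alpha^{(1)}(\omega)$ is not nonincreasing, and in general it increases. For classical states this follows directly from Proposition~\ref{prop:classical-radius}. There, $sP^{(1)}_{1/(1-s)}=\max_q\{s[I(X:B)_q-I(X:E)_q]-D(q\|p)\}$ is a maximum of affine functions of $s$ that vanishes at $s=0$. So $P_\alpha^{(1)}$ is nondecreasing in $\alpha$. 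For the same reason $f$, the derivative of $s\mapsto sP^{(1)}_{1/(1-s)}$, is nondecreasing, so your fallback fails too.

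A genuine code ensemble already gives strict failure. Send a uniform bit ($M=2$) through a binary symmetric channel with crossover $0<\epsilon<1/2$ to Bob, and let Eve be trivial. Lemma~\ref{lem:ensemble-order-one} gives $P_\alpha^{(1)}(\omega)\to\log2-h_2(\epsilon)$ as $\alpha\searrow1$. The decoder $\widehat X=B$ in Lemma~\ref{lem:ensemble-order-infinity} gives $P_\infty^{(1)}(\omega)\ge\log2+\log(1-\epsilon)$. Since $h_2(\epsilon)+\log(1-\epsilon)=\epsilon\log\frac{1-\epsilon}{\epsilon}>0$, Step~2 fails for $\alpha$ near one.

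Equivalently, Steps~1--2 together would prove the stronger bound $F(\mathcal C;\mathcal W)\le M^{-1}e^{P_\alpha^{(1)}(\omega)}$, without the factor $\frac{\alpha-1}{\alpha}$. For this code $F(\mathcal C;\mathcal W)=1-\epsilon$, while that bound tends to $e^{-h_2(\epsilon)}<1-\epsilon$. Your remark that convexity in $s$ goes ``the wrong direction'' was the right warning sign.

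\textbf{Reducing to $\alpha=\infty$ cannot work by any other comparison either.} For a uniform code ensemble, the inequality you would actually need is $\log F_{\mathrm{key}}(\omega)\le\frac{\alpha-1}{\alpha}[P_\alpha^{(1)}(\omega)-\log M]$. That is exactly the statement applied to the decoder that attains the secret key fraction, so the reduction gains nothing.

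The paper instead works at each fixed $\alpha$ (Proposition~\ref{prop:oneshot}). Put $u=\frac{\alpha-1}{\alpha}$ and fix $\tau_B,\sigma_E$.
\begin{itemize}
\item Hölder bounds the $\beta$ quasi-norm of Eve's correct-decoding block $\gamma_{E,m}=\mathcal T_m^\dagger(\Lambda_m)$ by $a_mk_m$. Here $a_m=\|\tau_B^{u/2}\Lambda_m\tau_B^{u/2}\|_{1/u}$ and $k_m$ is the letter term in \eqref{eq:radius-norm}.
\item Araki--Lieb--Thirring and $\sum_m\Lambda_m\le I_B$ give $\sum_ma_m^{1/u}\le1$, hence $\sum_ma_m\le M^{1-u}$.
\item Finally, $\sqrt{F(\gamma_{E,m},\sigma_E)}\le\|\sigma_E^{u/2}\gamma_{E,m}\sigma_E^{u/2}\|_\beta^{1/2}$ together with Cauchy--Schwarz gives $F\le M^{-u}\frac1M\sum_mk_m$.
\end{itemize}
The decoder constraint enters through the $1/u$-norm, and that is what produces $M^{-u}$ rather than $M^{-1}$.
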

This is Proposition~\ref{prop:oneshot}.

\begin{definition}\label{def:regularized-private-information}
For $\alpha>1$ and a finite-dimensional channel $\mathcal W:A\to BE$, define
\begin{equation}
 P_\alpha(\mathcal W):=\sup_{k\ge1}\frac1k
       P_\alpha^{(1)}(\mathcal W^{\otimes k}).
 \label{eq:regularized-private-information}
\end{equation}
\end{definition}
The quantity $F^\star(n,R;\mathcal W)$ is the best squared fidelity
achievable by any code of blocklength $n$ and rate at least $R$.
Since $P_\alpha^{(1)}(\mathcal W^{\otimes n})\le nP_\alpha(\mathcal W)$,
applying Theorem~\ref{thm:intro-one-shot} to $\mathcal W^{\otimes n}$
gives, for every~$n$,
\begin{equation}
 F^\star(n,R;\mathcal W)
 \le \exp\!\left\{-n\frac{\alpha-1}{\alpha}
             [R-P_\alpha(\mathcal W)]\right\}.
 \label{eq:intro-regularized-converse}
\end{equation}
Continuity at order one also holds after regularization.
\begin{theorem}[Asymptotic Continuity]\label{thm:main}
For every finite-dimensional channel $\mathcal W:A\to BE$,
\begin{equation}
 \lim_{\alpha\searrow1}P_\alpha(\mathcal W)=P(\mathcal W).
 \label{eq:renyi-capacity-continuity}
\end{equation}
\end{theorem}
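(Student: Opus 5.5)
We prove two inequalities, $\liminf_{\alpha\searrow1}P_\alpha(\mathcal W)\ge P(\mathcal W)$ and $\limsup_{\alpha\searrow1}P_\alpha(\mathcal W)\le P(\mathcal W)$. First, $P_\alpha^{(1)}(\omega)$ should be nondecreasing in $\alpha$. The integral representation of Lemma~\ref{lem:integral} expresses it as the average of $u\mapsto I(X:B)_{\Theta(u)}-I(X:E)_{\Theta(u)}$ over $[0,\frac{\alpha-1}{\alpha}]$. Since $P_\alpha^{(1)}$ is a supremum over $\sigma_E$ and $\boldsymbol\eta_B$ in \eqref{eq:radius-trace}, the integrand should be nondecreasing in $u$; alternatively, monotonicity follows from the norm representation of Lemma~\ref{lem:dual-norm}. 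Monotonicity of the average of a nondecreasing function then shows that $P_\alpha(\mathcal W)$ is monotone in $\alpha$, so the limit exists.

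The lower bound is straightforward. For each $k$ and each input ensemble for $\mathcal W^{\otimes k}$, Lemma~\ref{lem:ensemble-order-one} together with monotonicity gives $P_\alpha^{(1)}(\omega)\ge I(X:B)_\omega-I(X:E)_\omega$ for all $\alpha>1$. Taking suprema over ensembles and over $k$ yields $P_\alpha(\mathcal W)\ge P(\mathcal W)$ by \eqref{eq:capacity}. Handling non-faithful conditional marginals is not needed here, because the limit statement of Lemma~\ref{lem:ensemble-order-one} holds for every finite cqq state.

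The upper bound is the main obstacle, because it must be uniform in the blocklength $k$. The plan follows \cite{CT}. First reduce to faithful conditional marginals by mixing each input state with a small multiple of the maximally mixed input, or by a continuity argument in the states, so that Lemma~\ref{lem:integral} applies. For an optimal-ish ensemble on $\mathcal W^{\otimes k}$, write $P_\alpha^{(1)}(\omega)$ as the average over $u\in[0,\frac{\alpha-1}{\alpha}]$ of the private information of the tilted state $\Theta(u)$. The tilted state $\Theta(u)$ need not be a channel output. We therefore bound its private information by that of a nearby channel output plus an error.

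To do this, we compare $\Theta(u)$ with $\omega$ through the relative-entropy-type cost of the tilt. The derivative of $u\mapsto u\,P_{1/(1-u)}^{(1)}$ gives a bound $D(\Theta(u)\|\omega)\le$ something like $\frac{u}{1-u}\,[P_\alpha^{(1)}(\omega)-\text{private info}]$, which is $O(uk)$. Next we use the classical-quantum structure to build an actual input ensemble $\{q_x,\rho'_{A^k,x}\}$ whose output agrees with $\Theta(u)$ on the conditional states up to this divergence. Following \cite{CT}, the tilt on $X$ can be absorbed into the input distribution, while the tilts on $B$ and $E$ are controlled via data processing (Lemma~\ref{lem:local-data-processing}).

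We then apply a uniform continuity bound for mutual information differences that scales linearly in $k$, of Alicki--Fannes--Winter type, with error $k\,f(\sqrt{u})\log(d_Bd_E)$. Averaging over $u$ gives
\[
\tfrac1kP_\alpha^{(1)}(\mathcal W^{\otimes k})\le P(\mathcal W)+g(\alpha-1),
\]
with $g\to0$ independently of $k$, which yields the claim.

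The delicate point is making the divergence control per-letter linear in $k$. We plan to derive it from the monotone integrand and the additivity \eqref{eq:results-ensemble-additivity}, first bounding $P_\alpha^{(1)}(\mathcal W^{\otimes k})\le kC$ for a dimension constant $C$ (via $\alpha=\infty$ and Lemma~\ref{lem:ensemble-order-infinity}, giving $C\le\log\dim B$), so that the tilting cost is at most $O(uk)$.
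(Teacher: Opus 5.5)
\paragraph{The gap is in the upper bound.} The theorem rests on one step: producing, uniformly in $k$, a genuine channel output whose private information is within $k\,o_u(1)$ of that of $\Theta(u)$. The mechanism you propose does not deliver this, for two reasons.

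First, the estimate $D(\Theta(u)\|\omega)=O(uk)$ comes from the exponential-family identity $D=u\phi'(u)-\phi(u)$, with $\phi(u)=\log\sum_xp_x\ell_x(u)$. That identity holds only when everything commutes. In the quantum case $\Theta_{BE,x}\propto L_{BE,x}\omega_{BE,x}L_{BE,x}^\dagger$, where $L_{BE,x}$ is a product of local non-unitary filters. Take an isometric channel with pure inputs: then $\omega_{BE,x}=\ket{\psi_x}\bra{\psi_x}$ is pure and entangled. The vector $L_{BE,x}\ket{\psi_x}$ is generally not proportional to $\ket{\psi_x}$ and leaves the Stinespring image, so $D(\Theta(u)\|\omega)=+\infty$. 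Mixing inputs to make the conditional marginals faithful does not help, since the joint outputs still lie in the channel image. Lemma~\ref{lem:local-data-processing} concerns channels acting on $B$ or $E$, not these filters, and does not return the tilt to the image.

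Second, suppose you had a global bound $D(\Theta\|\omega')\le cuk$ against some channel output $\omega'$. An Alicki--Fannes--Winter argument would still not give an error of the form $kf(u)$. Pinsker only yields trace distance $\sqrt{cuk/2}$, which is not small for large $k$. More fundamentally, a divergence of $\epsilon k$ is compatible with entropy gaps of order $k\log d$: take $P$ uniform on $d^k$ points and $Q=(1-\delta)\delta_0+\delta\,\mathrm{unif}$ with $\delta=e^{-\epsilon k}$.

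\paragraph{What is missing.} You need per-use control together with a telescoping argument, which is what the paper does. It shows that the Stinespring-support rejection projector $Q_i$ at each single use satisfies $\Tr Q_i\Theta(u)\le C_{d_B,d_E}u^2$, uniformly in blocklength and ensemble. This comes from the Bob and Eve local cancellations in Lemmas~\ref{lem:transport}--\ref{lem:cross} and the bounds~\eqref{eq:Bob-L2}--\eqref{eq:Bob-high}. It then applies repair channels one use at a time; each costs $O(u\log(e/u))$ in private information by conditional-entropy continuity. These costs are summed by Leung--Smith telescoping, giving $I(X:B^n)_{\Theta(u)}-I(X:E^n)_{\Theta(u)}\le P^{(1)}(\mathcal W^{\otimes n})+nC u\log(e/u)$, which is then integrated over $u$. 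Without some ingredient of this kind, your bound $\frac1kP_\alpha^{(1)}(\mathcal W^{\otimes k})\le P(\mathcal W)+g(\alpha-1)$ is unsupported.

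\paragraph{Lower bound and monotonicity.} Your lower bound is fine, but it does not need monotonicity in $\alpha$. Your argument for monotonicity is a non sequitur anyway: it amounts to convexity of $\phi$, which is not established for the nested sup--inf--sup--inf. Since $P_\alpha(\mathcal W)\ge\frac1kP_\alpha^{(1)}(\omega)$ for every $\alpha$, Lemma~\ref{lem:ensemble-order-one} gives $\liminf_{\alpha\searrow1}P_\alpha(\mathcal W)\ge P(\mathcal W)$ directly. Existence of the limit then follows from a matching upper bound on the $\limsup$.
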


For any rate $R>P(\mathcal W)$, this continuity lets us choose a
fixed $\alpha>1$ such that $P_\alpha(\mathcal W)<R$.
Equation~\eqref{eq:intro-regularized-converse} then bounds the optimal
fidelity by $e^{-nc}$, where
$c=(\alpha-1)[R-P_\alpha(\mathcal W)]/\alpha>0$.
Thus the one-shot converse and continuity already establish the
exponential strong converse above private capacity.

To determine the exact decay rate, we optimize the converse bound over
$\alpha>1$ and prove a matching achievability bound in
Section~\ref{sec:coding}. This gives the following characterization.

\begin{samepage}
\begin{theorem}[Exact strong converse exponent]\label{thm:exact-quantum-exponent}
For every finite-dimensional channel $\mathcal W:A\to BE$ and $R\ge0$,
the liminf in \eqref{eq:operational-exponent} is a limit, and
\begin{equation}
 E_{\mathrm{sc}}(\mathcal W,R)=\sup_{\alpha>1}
 \frac{\alpha-1}{\alpha}[R-P_\alpha(\mathcal W)].
 \label{eq:exact-quantum-exponent}
\end{equation}
\end{theorem}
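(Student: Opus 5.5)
The argument splits into a converse, which bounds the liminf from below, and an achievability bound, which bounds the limsup from above. Together these show the limit exists and equals the stated expression.

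\emph{Converse.} Fix $\alpha>1$. By \eqref{eq:intro-regularized-converse}, which follows from Theorem~\ref{thm:intro-one-shot} applied to $\mathcal W^{\otimes n}$ together with $P_\alpha^{(1)}(\mathcal W^{\otimes n})\le nP_\alpha(\mathcal W)$, we get $-\frac1n\log F^\star(n,R;\mathcal W)\ge\frac{\alpha-1}{\alpha}[R-P_\alpha(\mathcal W)]$ for every $n$. Taking the liminf and then the supremum over $\alpha>1$ gives $\liminf\ge\sup_\alpha$. No extra work is needed here.

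\emph{Achievability.} We must show $\limsup_n-\frac1n\log F^\star(n,R)\le\sup_{\alpha>1}\frac{\alpha-1}{\alpha}[R-P_\alpha(\mathcal W)]$, following Mosonyi--Ogawa.

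\textbf{Step 1: change of channel.} Take an input ensemble on $\mathcal W^{\otimes k}$ and a "tilted" joint behaviour $\Theta$ whose private information $I(X:B)_\Theta-I(X:E)_\Theta$ exceeds $R$. Random private codes (Devetak/CWY) designed for the tilted statistics decode reliably and are nearly secret with respect to $\Theta$. Transferring to the true channel costs a factor $e^{-nD(\Theta\|\omega)}$ in fidelity. In the classical case this is the usual type/change-of-measure argument. In the quantum case, I would first pinch Bob's and Eve's outputs with respect to $\omega_B^{\otimes n}$ and $\omega_E^{\otimes n}$, and the commuting conditional states, which reduces to an essentially classical problem at polynomial cost. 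Rates below $R$ are padded using the monotonicity of $F^\star$ in $M$, which is why the relevant quantity is $\frac{\alpha-1}{\alpha}[R-\cdot]$ rather than $[I-\cdot]$.

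The resulting exponent is $\inf_q\{D(q\|p)+[R-(I(X:B)_q-I(X:E)_q)]^+\}$. The positive part comes from sending at rate $R$ exceeding the tilted private rate: one loses a fraction $e^{-n(R-\text{rate})}$ of fidelity by using a good code on a sub-message set. By Lagrange duality, which swaps $\inf_q$ and $\sup_{\lambda\in[0,1]}$ with $\lambda=\frac{\alpha-1}{\alpha}$, this equals $\sup_{\alpha>1}\frac{\alpha-1}{\alpha}[R-\text{(variational formula)}]$. For pinched, commuting states that formula is exactly \eqref{eq:intro-classical-private-information}, i.e.\ $P_\alpha^{(1)}$ of the pinched state.

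\textbf{Step 2: pinching and double blocking.} Apply Step 1 to the channel $\mathcal W^{\otimes k}$ followed by pinching maps on $B^k$ and $E^k$, used $m$ times. This yields exponent $\frac1k\sup_\alpha\frac{\alpha-1}{\alpha}[kR-P_\alpha^{(1)}(\text{pinched }\omega^{(k)})]$. The data-processing lemma (Lemma~\ref{lem:local-data-processing}) points in opposite directions at the two terminals. Pinching Bob can only decrease $P^{(1)}_\alpha$, so we need the pinching inequality $\omega\le v\,\mathcal P(\omega)$, with $v$ polynomial in $k$, to show that the loss is $O(\log k)$. Pinching Eve can only increase the quantity, which is the wrong direction for achievability, so Eve's pinching also needs the polynomial pinching bound.

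This gives $P_\alpha^{(1)}(\mathcal P(\omega^{(k)}))\ge P_\alpha^{(1)}(\omega^{(k)})-c\,\frac{\alpha}{\alpha-1}\log\mathrm{poly}(k)$, by inserting the operator inequalities into $G_\alpha$ using operator monotonicity of powers $t^{s}$ with $|s|\le1$. Dividing by $k$ and letting $k\to\infty$, then using the supremum in the definition of $P_\alpha$, recovers $P_\alpha(\mathcal W)$.

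\textbf{Step 3: interchanging limits.} The $\sup_\alpha$ must be exchanged with $k\to\infty$. Each $\frac{\alpha-1}{\alpha}[R-\cdot]$ is concave or monotone in the right variables, and the error terms are uniform for $\alpha$ bounded away from $1$. Near $\alpha=1$, the factor $\frac{\alpha-1}{\alpha}$ kills the $\frac{\alpha}{\alpha-1}\log\mathrm{poly}(k)$ error.

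The main obstacle is Step 1 in the quantum, joint-fidelity setting. Here we need a random coding bound whose fidelity to $\kappa\otimes\sigma_E$ under the tilted law is close to one, simultaneously for reliability and secrecy, including Eve's correlation with Bob's decision. We also need the change of measure to be valid for fidelity rather than error probability. My first attempt would be to work entirely with pinched, commuting (hence classical) joint laws $p_{XB^nE^n}$. I would restrict to a conditional type class $T_q$, on which $p$ is constant, so the fidelity on that type class is at least $p(T_q)\approx e^{-nD(q\|p)}$ times the fidelity of a classical wiretap code built for the uniform law on $T_q$. Classical constant-composition wiretap codes with vanishing joint error then finish the argument.
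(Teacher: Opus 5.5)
\textbf{Where the proposal breaks.} Your converse half is the paper's argument. Sandwiching the liminf and limsup is a valid substitute for the paper's Fekete argument. The genuine gap is in Step~1, where you reduce the quantum problem to classical joint laws $p_{XB^nE^n}$ with type classes.

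Pinching with respect to $\omega_B^{\otimes n}$ and $\omega_E^{\otimes n}$ (or universal states) costs only a polynomial factor. That is precisely because these references have polynomially many, exponentially degenerate, eigenvalues. After such pinching the conditional states commute with the references, but not with one another. Inside each block $P_j\otimes Q_k$ the joint state $\omega_{B^nE^n,x^n}$ is still an arbitrary, possibly entangled, bipartite state. Diagonalizing everything in a product basis would need rank-one pinchings, roughly $d^n$ projections, and the pinching inequality would then cost $n\log d$, a nonvanishing rate penalty. So no type-class argument is available.

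For the same reason, your claim that for pinched states the variational formula is exactly \eqref{eq:intro-classical-private-information} and equals $P_\alpha^{(1)}$ fails. That identity holds only for genuinely classical states (Proposition~\ref{prop:classical-radius}); for quantum states $\widetilde P_\alpha^{(1)}$ and $P_\alpha^{(1)}$ differ.

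There is a second problem. Pinching relative to the conditional states is word dependent: Bob cannot apply it without knowing $x^n$, and no party can process Eve's system. A code for the pinched channel is therefore not a code for $\mathcal W$ with the same fidelity. Your worry about Eve's pinching is correct, but it is operational rather than a statement about $P_\alpha^{(1)}$. Pinching Eve can only raise a code's fidelity, so the loss in transferring back must be bounded.

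\textbf{What the paper does instead.} The change of channel is done without any classical reduction. Devetak's private coding on the modified cqq channel $\theta$, together with the relative-entropy fidelity bound (Lemma~\ref{lem:relative-entropy-fidelity}), achieves the exponent in terms of the quantum variational quantity $\widetilde P_\alpha^{(1)}$ (Proposition~\ref{alt:change-channel}).

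The paper then proves only an asymptotic comparison, $\frac1n\widetilde P_\alpha^{(1)}(\widehat\omega_n)\to P_\alpha^{(1)}(\omega)$, with an $O(\log n)$ error that is uniform after multiplying by $(\alpha-1)/\alpha$ (Proposition~\ref{alt:controlled-comparison}). The steps are:
\begin{enumerate}
\item Both outer references are fixed to universal permutation-invariant states.
\item The outputs are pinched by the spectral projections of these states and, in addition, by the word-dependent Schur--Weyl center pinching of the stabilizer $G_z$.
\item Both the ordered quantity $J$ and the variational quantity $\widetilde J$ are shown to lie within $\log D_{B,n}+\log D_{E,n}$ of the same Gibbs value $u^{-1}\log Z$, where the multiplicity dimensions $D_{R,n}$ are polynomial.
\end{enumerate}

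Codes are moved back to $\mathcal W$ in two pieces. Typical-projector acceptance tests that commute with $G_z$ make the center pinching invisible (Lemma~\ref{alt:invariant-tests}). Eve's common pinching is removed at a polynomial fidelity factor via Rotfel'd's inequality (Lemma~\ref{alt:accepted-lift}).

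\textbf{A second gap in Step~3.} You do not address that the ensemble nearly attaining $P_\alpha(\mathcal W)$ depends on $\alpha$. Optimizing fixed-ensemble bounds therefore yields $\inf_\omega\sup_\alpha$ rather than the required $\sup_\alpha\inf_\omega$. The paper handles this through the convex conjugate. It proves $E_{\mathrm{sc}}^*(\mathcal W,u)\ge uP_\alpha^{(1)}(\omega)$ for each $u$ separately and optimizes over ensembles and blocks at fixed $u$. It then uses convexity of the operational exponent (Lemma~\ref{lem:operational-convexity}) to return to $E_{\mathrm{sc}}$ by biconjugation.
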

\end{samepage}

The theorem follows by combining the regularized converse
\eqref{eq:intro-regularized-converse}, optimized over $\alpha$, with
the matching achievability bound in Theorem~\ref{shell-interpolation}.
Lemma~\ref{lem:operational-convexity} establishes existence of the
operational limit. Achievability starts from a direct quantum
generalization of the classical variational formula
\eqref{eq:intro-classical-private-information}.

\begin{definition}[Variational private information]
\label{alt:variational-definition}
For $\alpha>1$ and a finite cqq state $\omega_{XBE}$, define
\begin{equation}
 \widetilde P_\alpha^{(1)}(\omega)
 :=\sup_{\theta_{XBE}\ll\omega_{XBE}}
 \left\{I(X:B)_\theta-I(X:E)_\theta
                   -\frac{\alpha}{\alpha-1}D(\theta\|\omega)\right\}.
 \label{alt:variational-value}
\end{equation}
Here $\theta\ll\omega$ means inclusion of supports and $\theta$ is optimised over cqq states.
\end{definition}
For classical states this equals $P_\alpha^{(1)}$, whereas the two
quantities need not agree for quantum states. The achievability proof
in Section~\ref{sec:coding} has four steps.

First, ordinary private coding on the modified channel specified by
$\theta$, followed by a relative-entropy bound on the fidelity cost of
returning to the original channel, achieves the exponent expressed
through $\widetilde P_\alpha^{(1)}$
(Proposition~\ref{alt:change-channel}).
Second, universal reference states replace the outer reference-state
optimizations at a cost that vanishes per channel use. Local pinching
then compares $\widetilde P_\alpha^{(1)}$ on pinched tensor-power
ensembles with $P_\alpha^{(1)}$ on the original ensemble. Third, invariant coding tests transfer the construction
back to the original channel, again at a vanishing cost per use.
These steps are established in Propositions~\ref{alt:controlled-comparison}
and~\ref{alt:operational-bridge}.
Finally, apply the construction to ensembles over arbitrary input
blocks and optimize over their size. This blocking step yields the
regularized quantity $P_\alpha(\mathcal W)$ and the achievability
bound in Theorem~\ref{shell-interpolation}.

\subsection{Secret-key generation and other error criteria}
In unassisted secret-key generation, Alice prepares the key register
jointly with the channel input. The target is still the uniform ideal
key in \eqref{eq:key-fidelity}. We show that a generation code can be
converted to a secret-key transmission code of the same blocklength
and key size with at most a factor-four loss in squared fidelity.
Consequently, both tasks have the exponent in
Theorem~\ref{thm:exact-quantum-exponent}.

For comparison with separate reliability and secrecy requirements
\cite{Devetak,WTB}, we use the trace distance
$T(\rho,\sigma)=\tfrac12\|\rho-\sigma\|_1$.
For the same $(n,M)$ code, define
\begin{equation}
 p_{\rm succ}=\frac1M\sum_m\Tr\Lambda_m\omega_{B^n,m},
 \qquad
 \delta_T=\min_{\sigma_{E^n}\in\Sc(E^n)}\frac1M\sum_m
                 T(\omega_{E^n,m},\sigma_{E^n}).
 \label{eq:gap-definition}
\end{equation}
The quantity $[p_{\rm succ}-\delta_T]_+$, with
$[a]_+=\max\{a,0\}$, measures how far the sum of decoding error
$1-p_{\rm succ}$ and secrecy error $\delta_T$ lies below one.
The same method gives exponential decay of this quantity above
private capacity; Corollary~\ref{cor:quantitative-strong-converse} gives
the corresponding strong converse.

The exact decay rate under separate reliability and secrecy requirements
need not coincide with that for squared joint-key fidelity;
Appendix~\ref{sec:criteria} gives examples distinguishing them.

\section{R\'enyi private information and its properties}\label{sec:radius}
We put $d_R=\dim R$ and write $\mathcal L(R)$
for the linear operators on $R$ and $\mathcal L^\dagger$ for the
Hilbert--Schmidt adjoint of a linear map $\mathcal L$.
The binary entropy is
\begin{equation}
 h_2(t)=-t\log t-(1-t)\log(1-t),\qquad 0\le t\le1,
 \label{eq:binary-entropy}
\end{equation}
with $0\log0=0$.
Negative powers denote support inverses.

\subsection{Norm and dual quasi-norm representations}
For each letter, define the positive conditional map and its adjoint by
\begin{align}
 \mathcal T_x(Z)&:=\Tr_E[(I_B\otimes Z)\omega_{BE,x}],
 \label{eq:conditional-map}\\
 \mathcal T_x^\dagger(Y)&:=\Tr_B[(Y\otimes I_E)\omega_{BE,x}].
 \label{eq:conditional-adjoint}
\end{align}
For effects $Z$ and $Y$, these give Bob's unnormalized state conditioned
on Eve's effect and Eve's unnormalized state conditioned on Bob's effect,
respectively. The maps retain the correlations of the joint state.

For the equivalent norm representations, it is convenient to write
\begin{equation}
 u=\frac{\alpha-1}{\alpha}\in(0,1),\qquad
 \alpha=\frac1{1-u},\qquad
 \beta=\frac1{1+u}=\frac{\alpha}{2\alpha-1}.
 \label{eq:alpha-coordinate}
\end{equation}
As $\alpha\searrow1$, we have $\beta\nearrow1$.
\begin{lemma}\label{lem:dual-norm}
For $\alpha>1$, with $u$ and $\beta$ as in \eqref{eq:alpha-coordinate},
we have
\begin{align}
P_\alpha^{(1)}(\omega)
 &=\frac1u\log\inf_{\tau_B\in\Sp(B)}
 \sup_{\sigma_E\in\Sc(E)}\inf_{\{\nu_{E,x}\in\Sp(E)\}}
 \sum_xp_x\left\|\tau_B^{-u/2}\mathcal T_x
 (\sigma_E^{u/2}\nu_{E,x}^{-u}\sigma_E^{u/2})
 \tau_B^{-u/2}\right\|_\alpha
 \label{eq:radius-norm}\\
 &=\frac1u\log\inf_{\tau_B\in\Sp(B)}
 \sup_{\substack{\sigma_E\in\Sc(E)\\\{\eta_{B,x}\in\Sc(B)\}}}
 \sum_xp_x\left\|\sigma_E^{u/2}
 \mathcal T_x^\dagger(\tau_B^{-u/2}\eta_{B,x}^u\tau_B^{-u/2})
 \sigma_E^{u/2}\right\|_\beta.
\label{eq:radius-primal}
\end{align}
Here $\|Z\|_q=(\Tr|Z|^q)^{1/q}$ denotes the Schatten norm for
$q\ge1$ and quasi-norm for $0<q<1$.
For fixed $\tau_B,\sigma_E$, the infimum over $\boldsymbol\nu_E$ and
the supremum over $\boldsymbol\eta_B$ can be interchanged.
After these conditional auxiliary states have been optimized, the common
objective before taking the logarithm is convex in $\tau_B$ and concave
in $\sigma_E$. The outer infimum and supremum can therefore be interchanged.
\end{lemma}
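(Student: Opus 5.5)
We prove Lemma~\ref{lem:dual-norm} by solving the innermost optimization in Definition~\ref{def:radius} in closed form, letter by letter, using the variational (H\"older-type) characterizations of Schatten norms and quasi-norms. The exponents are $(\alpha-1)/\alpha=u$ and $(1-\alpha)/\alpha=-u$.

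\textbf{Step 1: inner supremum over $\eta_{B,x}$.} Fix $\tau_B,\sigma_E,\boldsymbol\nu_E$. Each summand of $G_\alpha$ equals $\Tr[\eta_{B,x}^{u}\,Y_x]$ with
\begin{equation*}
Y_x=\tau_B^{-u/2}\mathcal T_x(\sigma_E^{u/2}\nu_{E,x}^{-u}\sigma_E^{u/2})\tau_B^{-u/2}\ge0.
\end{equation*}
The letters decouple, so we maximize each term separately. For $0<u<1$ and $Y\ge0$, the identity $\sup_{\eta\in\Sc}\Tr[\eta^{u}Y]=\|Y\|_{1/(1-u)}=\|Y\|_\alpha$ holds. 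This is H\"older's inequality with $\|\eta^u\|_{1/u}=1$, with equality at $\eta\propto Y^{\alpha}$ (note $1/(1-u)=\alpha$). Substituting gives \eqref{eq:radius-norm}, except that the order of the $\tau_B$ and $\sigma_E$ optimizations still has to be exchanged.

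\textbf{Step 2: inner infimum over $\nu_{E,x}$.} Now fix $\eta_{B,x}$ instead. Each term equals $\Tr[\nu_{E,x}^{-u}X_x]$ with
\begin{equation*}
X_x=\sigma_E^{u/2}\mathcal T_x^\dagger(\tau_B^{-u/2}\eta_{B,x}^u\tau_B^{-u/2})\sigma_E^{u/2}.
\end{equation*}
We use the reverse H\"older identity $\inf_{\nu>0}\Tr[\nu^{-u}X]=\|X\|_{1/(1+u)}=\|X\|_\beta$. It follows from the reverse H\"older inequality for the quasi-norm, since $\|\nu^{-u}\|_{-1/u}=1$, with optimizer $\nu\propto X^{\beta}$ (a limit if $X$ is not faithful, which is why $\nu$ ranges over $\Sp(E)$). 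This gives the integrand of \eqref{eq:radius-primal}.

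\textbf{Step 3: interchange of $\inf_{\boldsymbol\nu}$ and $\sup_{\boldsymbol\eta}$.} For fixed $\tau_B,\sigma_E$, we apply Sion's minimax theorem in the variables $(\eta^u_{B,x})$ and $(\nu^{-u}_{E,x})$. The map $\eta\mapsto\eta^u$ is operator concave, so the set $\{A:\ \|A\|_{1/u}\le1,\,A\ge0\}$ is convex and compact, and $G_\alpha$ is linear in $A=\eta^u$. For the $\nu$ side, we parametrize by $C=\nu^{-u}$, which ranges over the convex set $\{C>0:\Tr C^{-1/u}\le1\}$; this set is convex because $C\mapsto\Tr C^{-1/u}$ is convex. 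The objective is bilinear in $(A,C)$ and the $A$-domain is compact, so Sion applies. Hence both orders give the same value, and the two norm expressions agree for fixed $(\tau_B,\sigma_E)$.

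\textbf{Step 4: outer interchange (main obstacle).} We need convexity in $\tau_B$ and concavity in $\sigma_E$ of
\begin{equation*}
\Phi(\tau,\sigma)=\sum_xp_x\|\,\cdot\,\|.
\end{equation*}
For convexity in $\tau$ we use the \eqref{eq:radius-norm} form after the $\sigma$, $\nu$ optimization. A cleaner route is the Step~1 representation $\sup_\eta\inf_\nu\Tr[\eta^uY_x]$. Setting $A=\eta^u$, the quantity $\Tr[A\,\tau^{-u/2}Z\tau^{-u/2}]$ with $\|A\|_{1/u}\le1$ can be rewritten by a change of variables $A\mapsto\tau^{u/2}A'\tau^{u/2}$ as a convex function of $\tau$. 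I would instead invoke the known joint convexity/concavity results used for sandwiched R\'enyi quantities: the map $\tau\mapsto\|\tau^{-u/2}Z\tau^{-u/2}\|_\alpha$ is convex for $Z\ge0$ and $u=(\alpha-1)/\alpha$, a Frank--Lieb/Carlen type trace inequality. Symmetrically, $\sigma\mapsto\|\sigma^{u/2}W\sigma^{u/2}\|_\beta$ is concave, again by Frank--Lieb concavity with $\beta=1/(1+u)$. Taking pointwise infima over $\nu$ of concave functions preserves concavity in $\sigma$. For convexity in $\tau$, we note that the supremum over $\eta$ of convex functions (primal form) is convex, while the $\sigma$ optimization has already been removed appropriately.

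The delicate part is choosing, for each property, the representation in which the remaining inner optimization goes the right way, and keeping the $\nu$ and $\eta$ optimizations interchangeable. Once $\Phi$ is convex in $\tau$ and concave in $\sigma$, compactness of $\Sc(E)$ and Sion's theorem allow the swap $\sup_\sigma\inf_\tau=\inf_\tau\sup_\sigma$. Faithfulness issues at the boundary of $\Sp(B)$ are handled by lower semicontinuity and restricting to supports.
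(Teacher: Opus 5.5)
\textbf{Gap: Step 4 does not establish the convexity in $\tau_B$ and concavity in $\sigma_E$ needed for the outer exchange.}

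Your Steps 1--3 are correct and amount to the paper's inner argument. Linearizing through $A=\eta^u$ and $C=\nu^{-u}$ is an equivalent way of obtaining the concave--convex structure that the paper takes from operator concavity of $t^u$ and operator convexity of $t^{-u}$.

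The two facts you quote in Step 4 are single-variable statements at a fixed auxiliary state. Each one lives in the representation where the remaining auxiliary optimization goes the wrong way.
\begin{itemize}
\item Convexity of $\tau\mapsto\|\tau^{-u/2}Z\tau^{-u/2}\|_\alpha$ at fixed $Z$ belongs to \eqref{eq:radius-norm}, where $\boldsymbol\nu_E$ is still to be minimized. A pointwise infimum of convex functions need not be convex.
\item Concavity of $\sigma\mapsto\|\sigma^{u/2}W\sigma^{u/2}\|_\beta$ at fixed $W$ belongs to \eqref{eq:radius-primal}, where $\boldsymbol\eta_B$ is still to be maximized. A pointwise supremum of concave functions need not be concave.
\end{itemize}
Your pairing (``supremum over $\eta$'' for $\tau$, ``infimum over $\nu$'' for $\sigma$) has the right directions. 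But it then requires convexity in $\tau$ of the $\beta$-expression at fixed $\eta$, and concavity in $\sigma$ of the $\alpha$-expression at fixed $\nu$. You prove neither, and the second is false in general. Take $d_B=1$, a letter with $\omega_{E,x}$ pure along $(1,1)/\sqrt2$, $\sigma_E=\mathrm{diag}(s,1-s)$, and $\nu_{E,x}^{-u}\propto\bigl(\begin{smallmatrix}1&-1\\-1&1\end{smallmatrix}\bigr)+\epsilon I$. That letter term is then a positive multiple of $\tfrac12(s^{u/2}-(1-s)^{u/2})^2+O(\epsilon)$. This vanishes to order $\epsilon$ at $s=1/2$ but is bounded away from zero at $s=1/4$ and $s=3/4$, so it is not concave.

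\textbf{What is missing: joint properties.} The partial optimization only preserves the property if it holds jointly in the reference and the auxiliary state. The letter term of \eqref{eq:radius-norm} must be jointly convex in $(\tau_B,\nu_{E,x})$, and that of \eqref{eq:radius-primal} jointly concave in $(\eta_{B,x},\sigma_E)$.

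The paper obtains these as follows:
\begin{itemize}
\item Hiai's negative-power convexity theorem, applied to $\Tr[\tau^{-u/2}\Psi(\nu^{-u})\tau^{-u/2}]^\alpha$.
\item Hiai's positive-power concavity theorem, applied to $\Tr[(K^{1/2}\mathcal M_x(\eta^u)K^{1/2})^\beta]$ jointly in $(\eta,K)$.
\item A homogeneity argument to pass from trace powers to norms: a positive convex $f$ homogeneous of degree $-d<0$ has concave $f^{-1/d}$, and dually in the concave case.
\item For $\sigma_E$, monotonicity in $K$ combined with operator concavity of $K=\sigma^u$.
\end{itemize}

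Your sandwiched-R\'enyi idea can be repaired in the same spirit, but only in joint form:
\begin{itemize}
\item Frank--Lieb give joint convexity of $(\rho,\tau)\mapsto\Tr(\tau^{-u/2}\rho\tau^{-u/2})^\alpha$, and joint concavity of $(W,\sigma)\mapsto\Tr(\sigma^{u/2}W\sigma^{u/2})^\beta$ since $\beta\in(1/2,1)$.
\item Both functionals are monotone in their first argument.
\item Compose with the operator convex map $\nu\mapsto\mathcal T_x(\sigma^{u/2}\nu^{-u}\sigma^{u/2})$, respectively the operator concave map $\eta\mapsto\mathcal T_x^\dagger(\tau^{-u/2}\eta^u\tau^{-u/2})$. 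This gives the required joint convexity and concavity of the trace powers.
\item After minimizing over $\nu$ (respectively maximizing over $\eta$), the same homogeneity step transfers convexity in $\tau_B$ and concavity in $\sigma_E$ to the optimized norms.
\end{itemize}
Only after this does Sion's theorem justify the outer exchange.
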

\begin{proof}
We first identify the inner value at fixed $\tau_B\in\Sp(B)$ and
$\sigma_E\in\Sc(E)$. Put
\begin{equation}
 K_x(\nu)=\tau_B^{-u/2}\mathcal T_x
 (\sigma_E^{u/2}\nu^{-u}\sigma_E^{u/2})\tau_B^{-u/2}.
\end{equation}
The unweighted letter term in $G_\alpha$ is $\Tr\eta^uK_x(\nu)$.
Schatten duality maximizes it to $\|K_x(\nu)\|_\alpha$, with
$\eta=K_x(\nu)^\alpha/\Tr K_x(\nu)^\alpha$ when $K_x(\nu)\ne0$.
For $A\ge0$ and $0<u<1$, the variational identity
\begin{equation}
 \|A\|_{1/(1+u)}
 =\inf_{\nu\in\Sp(E)}\Tr A\nu^{-u}
\end{equation}
holds, with the infimum approached by faithful approximations to
$A^{1/(1+u)}/\Tr A^{1/(1+u)}$ when $A\ne0$.
The function $(\eta,\nu)\mapsto\Tr\eta^uK_x(\nu)$ is continuous,
concave in $\eta\in\Sc(B)$, and convex in $\nu\in\Sp(E)$, by
operator concavity of $t^u$ and operator convexity of $t^{-u}$.
Since $\Sc(B)$ is compact, Sion's minimax theorem \cite{Sion} gives
\begin{align}
 \sup_\eta\inf_\nu\Tr\eta^uK_x(\nu)
 &=\inf_\nu\sup_\eta\Tr\eta^uK_x(\nu)\\
 &=\inf_\nu\|K_x(\nu)\|_\alpha\\
 &=\sup_{\eta\in\Sc(B)}
 \left\|\sigma_E^{u/2}\mathcal T_x^\dagger
 (\tau_B^{-u/2}\eta^u\tau_B^{-u/2})\sigma_E^{u/2}\right\|_\beta.
 \label{eq:fixed-reference-norm-duality}
\end{align}
The conditional optimizations separate in the weighted sum. Denote their
common value by $F(\tau_B,\sigma_E)$; Definition~\ref{def:radius} is
$u^{-1}\log\sup_{\sigma_E}\inf_{\tau_B}F(\tau_B,\sigma_E)$.

To exchange these outer optimizations, first fix $\sigma_E$ and a letter
$x$, and set $\Psi(Z)=\mathcal T_x(\sigma_E^{u/2}Z\sigma_E^{u/2})$.
Hiai's negative-power convexity theorem
\cite[Theorem~5.3(1)]{HiaiII}, with powers $p=q=-u$ and scalar function
$t\mapsto t^\alpha$, shows that
\begin{equation}
 f(\tau,\nu):=
 \Tr\left[\tau^{-u/2}\Psi(\nu^{-u})\tau^{-u/2}\right]^\alpha
 \label{alt:hiai-trace-function}
\end{equation}
is jointly convex, since $\alpha(1-u)=1$.
A strictly positive perturbation of $\Psi$, followed by a limit, handles
maps with singular outputs. The function $f$ is homogeneous of degree
$-2u\alpha$. A positive convex function homogeneous of degree $-d<0$
has concave $f^{-1/d}$: its superlevel sets are convex, and degree-one
homogeneity turns quasiconcavity into concavity. Composing with the convex
decreasing function $t\mapsto t^{-2u}$ proves joint convexity of
$f^{1/\alpha}$. Summing and minimizing jointly over the conditional
states $\nu_{E,x}$ therefore proves convexity of $F$ in $\tau_B$.
Limits include zero letter values.

For concavity in $\sigma_E$, fix $\tau_B$ and define
$\mathcal M_x(A)=\mathcal T_x^\dagger(\tau_B^{-u/2}A\tau_B^{-u/2})$.
The dual expression gives
\begin{equation}
 h_x(K):=\sup_{\eta\in\Sc(B)}
 \left\|K^{1/2}\mathcal M_x(\eta^u)K^{1/2}\right\|_\beta.
 \label{eq:auxiliary-concavity}
\end{equation}
Hiai's positive-power theorem \cite[Theorem~1.1(1)]{Hiai}, with powers
$(u,1)$ and trace exponent $\beta=1/(1+u)$, makes the trace of the
$\beta$ power jointly concave in $(\eta,K)$. Its supremum over $\eta$
is concave and homogeneous of degree $\beta$ in $K$.
Taking its $1/\beta$ power gives a degree-one homogeneous,
quasiconcave function, hence the concave function $h_x$.
It is also increasing in the positive-semidefinite order.
Operator concavity of $t^u$ thus makes
$F(\tau_B,\sigma_E)=\sum_xp_xh_x(\sigma_E^u)$ concave in $\sigma_E$.
The dual expression makes it continuous on the compact state space
$\Sc(E)$. Sion's theorem now exchanges the outer infimum and supremum,
giving both displayed representations. This exchange applies to $F$
after the conditional optimizations; no concavity of the unoptimized
trace expression in $\sigma_E$ is required.
\end{proof}

\subsection{Local data processing}\label{sec:local-data-processing}
\begingroup\mdfsetup{nobreak=true}
\begin{lemma}[Local data processing]\label{lem:local-data-processing}
For every finite cqq state $\omega_{XBE}$, $\alpha>1$, and channels
$\mathcal N:B\to B'$ and $\mathcal M:E\to E'$, we have
\begin{align}
 P_\alpha^{(1)}\bigl((\mathrm{id}_X\otimes\mathcal N\otimes
 \mathrm{id}_E)(\omega)\bigr)
 &\le P_\alpha^{(1)}(\omega),
 \label{eq:data-processing-bob}\\
 P_\alpha^{(1)}\bigl((\mathrm{id}_X\otimes\mathrm{id}_B\otimes
 \mathcal M)(\omega)\bigr)
 &\ge P_\alpha^{(1)}(\omega).
 \label{eq:data-processing-eve}
\end{align}
\end{lemma}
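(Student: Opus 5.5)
Both inequalities should follow from the representations in Lemma~\ref{lem:dual-norm} together with the data-processing inequality for sandwiched R\'enyi quantities. For a processed state the conditional maps transform simply. If $\omega'=(\mathrm{id}_X\otimes\mathcal N\otimes\mathrm{id}_E)(\omega)$, then $\mathcal T'_x(Z)=\mathcal N(\mathcal T_x(Z))$. If $\omega''=(\mathrm{id}_X\otimes\mathrm{id}_B\otimes\mathcal M)(\omega)$, then $\mathcal T''^\dagger_x(Y)=\mathcal M(\mathcal T_x^\dagger(Y))$. The plan is to push the free reference state through the channel, use the norm representation that places the Schatten (quasi-)norm on the processed terminal, and compare letter by letter.

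For Bob, \eqref{eq:data-processing-bob}, I would use \eqref{eq:radius-norm}. Fix $\tau_B\in\Sp(B)$ and take $\tau_{B'}=\mathcal N(\tau_B)$, perturbed if necessary as described below. For every $\sigma_E$ and $\nu_{E,x}$, put $K=\mathcal T_x(\sigma_E^{u/2}\nu_{E,x}^{-u}\sigma_E^{u/2})\ge0$. Since $u=(\alpha-1)/\alpha$, the quantity $\|\tau^{-u/2}K\tau^{-u/2}\|_\alpha^\alpha$ is exactly the sandwiched quasi-entropy $\widetilde Q_\alpha(K\|\tau)$. Data processing for $\alpha>1$, which holds for unnormalized positive $K$ by homogeneity, then gives
\begin{equation*}
\bigl\|\mathcal N(\tau_B)^{-u/2}\mathcal N(K)\,\mathcal N(\tau_B)^{-u/2}\bigr\|_\alpha
\le\bigl\|\tau_B^{-u/2}K\tau_B^{-u/2}\bigr\|_\alpha .
\end{equation*}
Summing over $x$, then taking $\inf_{\boldsymbol\nu}$ and $\sup_{\sigma_E}$ (which are the same optimizations on both sides), shows that the objective for $\omega'$ at $\mathcal N(\tau_B)$ is at most the objective for $\omega$ at $\tau_B$. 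Taking the infimum over $\tau_B$ on the right, and bounding the infimum over $\tau_{B'}$ by this particular choice, yields \eqref{eq:data-processing-bob}.

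For Eve, \eqref{eq:data-processing-eve}, I would instead use \eqref{eq:radius-primal}. There the processed terminal $E$ carries the $\beta$-quasi-norm with $\beta=1/(1+u)\in(1/2,1)$, and $u=(1-\beta)/\beta$. Hence $\|\sigma^{u/2}L\sigma^{u/2}\|_\beta^\beta=\widetilde Q_\beta(L\|\sigma)$ with $L=\mathcal T_x^\dagger(\tau_B^{-u/2}\eta_{B,x}^u\tau_B^{-u/2})\ge0$. For $\beta\in[1/2,1)$, data processing reverses direction: $\widetilde Q_\beta(\mathcal M(L)\|\mathcal M(\sigma))\ge\widetilde Q_\beta(L\|\sigma)$. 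Fix $\tau_B$, which lives on the same system in both problems. Choosing $\sigma_{E'}=\mathcal M(\sigma_E)$ and keeping the same $\eta_{B,x}$ bounds the supremum for $\omega''$ from below by the value for $\omega$. This holds for every $(\sigma_E,\boldsymbol\eta_B)$ and every $\tau_B$, so taking $\inf_{\tau_B}$ gives \eqref{eq:data-processing-eve}.

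I expect the main obstacle to be the support conventions rather than the inequalities themselves. Systems are restricted to $\operatorname{supp}\omega_{B'}$ and $\operatorname{supp}\omega_{E'}$, and $\tau_{B'}$ must be strictly positive there. On Eve's side this is harmless: $\sigma_E$ is supported in $\operatorname{supp}\omega_E$, so $\sigma_E\le c\,\omega_E$, and therefore $\operatorname{supp}\mathcal M(\sigma_E)\subseteq\operatorname{supp}\omega_{E'}$. On Bob's side, $\mathcal N(\tau_B)$ is supported on $\operatorname{supp}\mathcal N(I)$, which may be larger than $\operatorname{supp}\omega_{B'}$. I would therefore compress $\mathcal N(\tau_B)$ to $\operatorname{supp}\omega_{B'}$, mix it as $(1-\varepsilon)\,\cdot\,+\varepsilon\pi$, and use that the sandwiched quantity for $\alpha>1$ is antimonotone in its second argument. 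The range of $\mathcal N(K)$ lies in $\operatorname{supp}\omega_{B'}$, because $K\le c\,\omega_B$. Together these give the bound up to a factor $(1-\varepsilon)^{-u}$ or better, and letting $\varepsilon\to0$ finishes. I would also check that data processing for sandwiched quantities applies with negative powers read as support inverses, which is standard once the support inclusions above are in place.
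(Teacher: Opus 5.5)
Your proposal is correct and is essentially the paper's proof. For Bob, you apply order-$\alpha$ sandwiched data processing in the norm representation \eqref{eq:radius-norm} with the pushed-forward reference $\mathcal N(\tau_B)$; for Eve, you apply the reversed order-$\beta$ inequality in the dual quasi-norm representation \eqref{eq:radius-primal} with $\mathcal M(\sigma_E)$, at fixed $\tau_B$. Your support bookkeeping on Bob's side is more careful than the paper's one-line remark, with two small corrections: the compression to $\operatorname{supp}\omega_{B'}$ is justified by data processing under the pinching $\{\Pi,I-\Pi\}$, not by antimonotonicity; and since $\Pi\mathcal N(\tau_B)\Pi\ge c\,\omega_{B'}$ is already faithful there, the $\varepsilon$-mixing is unnecessary.
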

\endgroup
\begin{proof}
Let $u$ and $\beta$ be as in \eqref{eq:alpha-coordinate}.
We use data processing for the sandwiched R\'enyi divergence at orders
$\alpha>1$ and $\beta\in(1/2,1)$ \cite[Theorem~1]{FrankLieb}.
As throughout, restrict all systems to their average marginal supports;
then $\mathcal N(\tau_B)$ is faithful on the output support whenever
$\tau_B\in\Sp(B)$.

For Bob, the conditional map after the channel is
$\mathcal N\circ\mathcal T_x$. Fix the reference states in
the norm representation \eqref{eq:radius-norm} and put
$A_x=\mathcal T_x(\sigma_E^{u/2}\nu_{E,x}^{-u}\sigma_E^{u/2})$.
Data processing at order $\alpha$ gives
\begin{equation}
 \bigl\|\mathcal N(\tau_B)^{-u/2}\mathcal N(A_x)
 \mathcal N(\tau_B)^{-u/2}\bigr\|_\alpha
 \le\bigl\|\tau_B^{-u/2}A_x\tau_B^{-u/2}\bigr\|_\alpha.
 \label{eq:local-dpi-alpha-norm}
\end{equation}
Sum with weights $p_x$, minimize over $\boldsymbol\nu_E$, and maximize
over $\sigma_E$. The output infimum over Bob's reference includes every
$\mathcal N(\tau_B)$, so taking the input infimum over $\tau_B$ and
then $u^{-1}\log$ proves \eqref{eq:data-processing-bob}.

For Eve, use Lemma~\ref{lem:dual-norm}; the adjoint conditional map
after the channel is $\mathcal M\circ\mathcal T_x^\dagger$.
Fix $\tau_B,\sigma_E,\{\eta_{B,x}\}$ and put
$A_x=\mathcal T_x^\dagger
(\tau_B^{-u/2}\eta_{B,x}^u\tau_B^{-u/2})$.
Since $(1-\beta)/\beta=u$ and $\beta<1$, data processing at order
$\beta$ gives the reversed quasi-norm inequality
\begin{equation}
 \bigl\|\mathcal M(\sigma_E)^{u/2}\mathcal M(A_x)
 \mathcal M(\sigma_E)^{u/2}\bigr\|_\beta
 \ge\bigl\|\sigma_E^{u/2}A_x\sigma_E^{u/2}\bigr\|_\beta.
 \label{eq:local-dpi-beta-norm}
\end{equation}
Both norm inequalities apply to arbitrary positive $A_x$: normalize
when nonzero and use trace preservation, with the zero case immediate.
The output maximization over Eve's reference includes every
$\mathcal M(\sigma_E)$. Summing and maximizing over
$\sigma_E,\{\eta_{B,x}\}$ therefore gives a value no smaller after Eve's
channel for each fixed $\tau_B$. Taking the infimum over $\tau_B$ and
then $u^{-1}\log$ proves \eqref{eq:data-processing-eve}.
\end{proof}

\subsection{The order-one limit for a fixed ensemble}\label{sec:ensemble-order-one}
\begin{proof}[Proof of Lemma~\ref{lem:ensemble-order-one}]
For faithful $\tau_B,\sigma_E,\nu_{E,x}$, write
\begin{equation}
 K_x(u)=\tau_B^{-u/2}\mathcal T_x
 (\sigma_E^{u/2}\nu_{E,x}^{-u}\sigma_E^{u/2})\tau_B^{-u/2},
 \qquad \alpha(u)=\frac1{1-u}.
\end{equation}
At $u=0$, $K_x(0)=\omega_{B,x}$ and $\Tr K_x(0)=1$.
Differentiating the matrix powers and the Schatten order gives
\begin{align}
 \left.\frac{d}{du}\|K_x(u)\|_{\alpha(u)}\right|_{u=0}
 &=\Tr\omega_{B,x}(\log\omega_{B,x}-\log\tau_B)
   +\Tr\omega_{E,x}(\log\sigma_E-\log\nu_{E,x})\\
 &=D(\omega_{B,x}\|\tau_B)-D(\omega_{E,x}\|\sigma_E)
   +D(\omega_{E,x}\|\nu_{E,x}).
\end{align}
The contribution $\Tr\omega_{B,x}\log\omega_{B,x}$ comes from varying
the norm order, since $\alpha'(0)=1$. Summing with weights $p_x$,
taking the logarithm, and using
\begin{equation}
 \sum_xp_xD(\omega_{R,x}\|\zeta_R)
 =I(X:R)_\omega+D(\omega_R\|\zeta_R),\qquad R=B,E,
\end{equation}
gives the first-order expansion after maximizing over Bob's conditional
auxiliary states in \eqref{eq:intro-first-order-start}--\eqref{eq:intro-first-order}.
To justify the optimized limit, work with \eqref{eq:radius-norm} and
restrict its three reference families $\tau_B,\sigma_E,\nu_{E,x}$ to
\begin{equation}
 \Sc_\varepsilon(R)=\{\rho\in\Sc(R):\rho\ge\varepsilon I_R/d_R\},
 \qquad 0<\varepsilon<1,
\end{equation}
and denote the resulting value by $P_{\alpha,\varepsilon}^{(1)}$.
On these compact sets, $K_x(u)$ is analytic of constant rank
$\operatorname{rank}\omega_{B,x}$, with its positive eigenvalues uniformly
separated from zero near $u=0$. The expansion therefore has a uniform
remainder $O(u)=O(\alpha-1)$, including for singular signal marginals.

Put $c=1-\varepsilon$ and $\rho_\varepsilon=c\rho+\varepsilon I_R/d_R$.
The inequalities $\rho_\varepsilon^u\ge c^u\rho^u$ and
$\rho_\varepsilon^{-u}\le c^{-u}\rho^{-u}$ (for faithful $\rho$ in the
second), together with $\|C^\dagger C\|_q=\|CC^\dagger\|_q$, control
all three restrictions. First restricting $\sigma_E$ in
\eqref{eq:radius-primal} changes the optimized sum by a factor in
$[c^u,1]$; then restricting $\tau_B$ and the family $\boldsymbol\nu_E$
in \eqref{eq:radius-norm} costs at most $c^{-u}$ each. Hence, uniformly
in $\alpha>1$ and $\omega$,
\begin{equation}
 \log c\le P_{\alpha,\varepsilon}^{(1)}(\omega)-P_\alpha^{(1)}(\omega)
 \le-2\log c.
 \label{eq:reference-smoothing}
\end{equation}
Each relative-entropy minimum in the optimized leading term lies in
$[0,-\log c]$, since $D(\rho\|\rho_\varepsilon)\le-\log c$.
Thus this leading term differs from
$I(X:B)_\omega-I(X:E)_\omega$ by at most $-2\log c$.
Taking $\alpha\searrow1$ and then $\varepsilon\searrow0$ in
\eqref{eq:reference-smoothing} proves the claim.
\end{proof}

\begin{corollary}[Support reduction and state continuity]\label{cor:state-continuity}
For fixed $\alpha>1$ and finite systems $X,B,E$, $P_\alpha^{(1)}$ is
continuous on cqq states and unchanged by restricting $B$ and $E$
to the supports of their average marginals.
\end{corollary}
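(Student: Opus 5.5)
We prove the two claims separately: first the support reduction, then continuity, which will follow from the smoothing estimate \eqref{eq:reference-smoothing} once the support issue is settled.

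\emph{Support reduction.} Let $\Pi_B$ and $\Pi_E$ be the projections onto $\operatorname{supp}\omega_B$ and $\operatorname{supp}\omega_E$. Every $\omega_{BE,x}$ is supported in $\Pi_B\otimes\Pi_E$, because $p_x\omega_{B,x}\le\omega_B$ and similarly on $E$. Hence $\mathcal T_x(Z)=\mathcal T_x(\Pi_EZ\Pi_E)$ and $\mathcal T_x(Z)$ lives on $\Pi_B$; the same holds for $\mathcal T_x^\dagger$ with the roles exchanged.

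We compare the full-space quantity with the restricted one by applying Lemma~\ref{lem:local-data-processing} in both directions.
\begin{itemize}
\item The isometric embedding $V$ of the support into the full space is a channel. Its left inverse, compression $\Pi\,\cdot\,\Pi$ plus the trace of the complement placed on a fixed state inside the support, is also a channel, and it fixes $\omega$.
\item On Bob's side, \eqref{eq:data-processing-bob} gives inequalities in both directions, hence equality.
\item On Eve's side, \eqref{eq:data-processing-eve} applied to both maps again gives both inequalities.
\end{itemize}
One subtlety needs care. Lemma~\ref{lem:local-data-processing} was proved under the standing convention that systems are already restricted. For the direction that enlarges the space, I instead argue directly on the norm representation \eqref{eq:radius-norm}.
\begin{itemize}
\item Any faithful $\tau_B$ on the full space compresses to $\Pi_B\tau_B\Pi_B$. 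By operator monotonicity arguments this changes the $\alpha$-norm only in the favourable direction, using $(\Pi\tau\Pi)^{-u}\le\Pi\tau^{-u}\Pi$ for $0<u<1$ (Hansen's inequality for the operator-convex $t^{-u}$).
\item Conversely, a restricted $\tau_B$ extends by $\tau_B\oplus\epsilon\,\Pi_B^\perp$ with no change in the objective.
\item The analogous compression and extension work for $\sigma_E$ and $\nu_{E,x}$.
\end{itemize}
This settles that the value is unchanged by restriction.

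\emph{Continuity.} I would use the $\varepsilon$-restricted quantity $P^{(1)}_{\alpha,\varepsilon}$ from the proof of Lemma~\ref{lem:ensemble-order-one}. For that proof I fix the ambient full systems $B$ and $E$, which is legitimate by the support reduction just shown. By \eqref{eq:reference-smoothing}, $|P^{(1)}_{\alpha,\varepsilon}-P^{(1)}_\alpha|\le-2\log(1-\varepsilon)$ uniformly in $\omega$.

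Here I must check that the estimate did not depend on faithfulness of the marginals, only on the inequalities $\rho_\varepsilon^{\pm u}$. Those inequalities hold on the full space, and the value there equals the restricted value.

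On the compact sets $\Sc_\varepsilon$, consider the map
\[
(\omega,\tau,\sigma,\boldsymbol\nu)\mapsto\sum_xp_x\bigl\|\tau^{-u/2}\mathcal T_x(\sigma^{u/2}\nu_x^{-u}\sigma^{u/2})\tau^{-u/2}\bigr\|_\alpha .
\]
It is jointly continuous, since all negative powers are taken of operators bounded below by $\varepsilon/d$. It is also linear in the unnormalized $p_x\omega_{BE,x}$, so it is even uniformly continuous there. Optimizing over compact sets preserves continuity by the standard argument that $\inf$ and $\sup$ of equicontinuous families are continuous. The optimal value is also bounded below away from $0$, because the objective is at least $c\,\Tr$ of a positive operator of trace of order one. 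Therefore its logarithm is continuous, and $P^{(1)}_{\alpha,\varepsilon}$ is continuous in $\omega$.

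Finally, $P^{(1)}_\alpha$ is a uniform limit of continuous functions as $\varepsilon\to0$, and hence is continuous.

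\emph{Main obstacle.} The delicate step is the support reduction in the enlarging direction, namely showing that non-faithful references, or references spilling outside the support, cannot do better. I would handle it first, using the compression inequality for $t^{-u}$ as sketched above.
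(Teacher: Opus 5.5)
Your overall plan is the paper's: operator-Jensen compressions and faithful block-diagonal extensions for support invariance, then the $\varepsilon$-restricted quantity on the fixed ambient spaces, equicontinuity over the compact sets $\Sc_\varepsilon$, and the uniform bound \eqref{eq:reference-smoothing} for continuity. You are right that Lemma~\ref{lem:local-data-processing} cannot be used here, since Definition~\ref{def:radius} already restricts to the supports, so those bullets can be dropped. Also, the objective is not linear in the blocks $p_x\omega_{BE,x}$. It is a sum of Schatten norms of linear images, hence Lipschitz uniformly over the references, and that is all your continuity argument uses. The $\tau_B$ step and the continuity half are fine.

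The step that fails as written is Eve's outer reference. You work directly on \eqref{eq:radius-norm} and call the compression of $\sigma_E$ analogous. There, however, $\sigma_E$ enters as $\sigma_E^{u/2}\nu_{E,x}^{-u}\sigma_E^{u/2}$. Since $\mathcal T_x$ only sees $\Pi_E(\cdot)\Pi_E$, the relevant operator is $\Pi_E\sigma_E^{u/2}\nu_{E,x}^{-u}\sigma_E^{u/2}\Pi_E$, and no Jensen inequality for $t^{-u}$ applies to it. In fact no inequality at fixed $\nu$ can hold. Take $E=\mathbb C^2$, $\Pi_E=\ket0\bra0$, $\sigma_E=\ket+\bra+$ and $\nu=\mathrm{diag}(1-\delta,\delta)$. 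That operator is $\tfrac14[(1-\delta)^{-u}+\delta^{-u}]\ket0\bra0$, whereas the normalized compression $\ket0\bra0$ gives $(1-\delta)^{-u}\ket0\bra0$. So compressing lowers the objective at this $\nu$, and the improvement appears only after the inner infimum.

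This is why the paper handles $\sigma_E$ in the dual form \eqref{eq:radius-primal}. At fixed $\tau_B,\sigma_E$, the infimum over $\boldsymbol\nu_E$ equals the supremum over $\boldsymbol\eta_B$ of $\sum_xp_x\norm{C_x^{1/2}\sigma_E^uC_x^{1/2}}_\beta$, with $C_x=\mathcal T_x^\dagger(\tau_B^{-u/2}\eta_{B,x}^u\tau_B^{-u/2})$ supported in $\Pi_E$ (see \eqref{eq:fixed-reference-norm-duality}). Operator concavity of $t^u$ gives $\Pi_E\sigma_E^u\Pi_E\le(\Pi_E\sigma_E\Pi_E)^u$. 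Together with monotonicity of $\norm{\cdot}_\beta$ and normalization, this shows that compression helps. If $\Pi_E\sigma_E\Pi_E=0$, the objective vanishes, so such a $\sigma_E$ is never optimal.

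Your $\nu_{E,x}$ compression via $t^{-u}$ has a related restriction: it is valid only once $\sigma_E$ lives in $\operatorname{supp}\omega_E$. Only then does $\Pi_E\sigma_E^{u/2}\nu^{-u}\sigma_E^{u/2}\Pi_E=\sigma_E^{u/2}\Pi_E\nu^{-u}\Pi_E\sigma_E^{u/2}$ hold. This is the reason for the paper's order: compress $\tau_B$ and $\sigma_E$ first, then $\nu_{E,x}$. With this repair your argument is complete.
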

\begin{proof}
Operator Jensen for $t^{-u}$ and $t^u$ shows that compression to the
average supports, followed by normalization, improves the respective
minimizations and maximization in \eqref{eq:radius-norm} and
\eqref{eq:radius-primal}. Compress $\tau_B$ and $\sigma_E$ first,
then $\nu_{E,x}$; a zero compression of $\sigma_E$ gives zero objective
and cannot maximize. Faithful block-diagonal extensions approximate
the restricted references, proving support invariance.

For fixed $\varepsilon>0$, compact optimization makes
$P_{\alpha,\varepsilon}^{(1)}$ continuous in the blocks
$p_x\omega_{BE,x}$, including zero blocks. The uniform bound
\eqref{eq:reference-smoothing} makes $P_\alpha^{(1)}$ their uniform
limit, hence continuous.
\end{proof}

\subsection{The order-infinity limit}\label{sec:ensemble-order-infinity}
\begin{proof}[Proof of Lemma~\ref{lem:ensemble-order-infinity}]
Put $Q_u=e^{uP_{1/(1-u)}^{(1)}(\omega)}$ for $0<u<1$, and define $Q_1$
by the right side of \eqref{eq:radius-primal} before taking its logarithm,
with $u=1$ and $\beta=1/2$. The inner minimax argument of Lemma~\ref{lem:dual-norm}
extends to this endpoint, replacing the $\alpha$-norm by the operator
norm. Restricting only $\tau_B$ to $\Sc_\varepsilon(B)$ gives a value
$Q_{u,\varepsilon}$ satisfying
\begin{equation}
 Q_u\le Q_{u,\varepsilon}\le(1-\varepsilon)^{-u}Q_u,
 \qquad 0<u\le1,
 \label{eq:infinity-reference-smoothing}
\end{equation}
by the same smoothing argument as in \eqref{eq:reference-smoothing}.
On these compact reference sets the dual objective is jointly continuous
through $u=1$, so $Q_{u,\varepsilon}\to Q_{1,\varepsilon}$.
Letting $\varepsilon\searrow0$ proves $Q_u\to Q_1$ and hence
$P_\infty^{(1)}=\log Q_1$, for arbitrary signal states.

For positive operators we use the same convention
$F(A,C)=\|\sqrt A\sqrt C\|_1^2$, without a normalization requirement.
Since $\|\sigma^{1/2}A\sigma^{1/2}\|_{1/2}=F(A,\sigma)$, this endpoint is
\begin{align}
 Q_1&=\inf_{\tau_B\in\Sp(B)}\max_{\sigma_E\in\Sc(E)}
             \sum_xp_x a_x(\tau_B,\sigma_E),
 \label{eq:infinity-reference-form}\\
 a_x(\tau_B,\sigma_E)
 &:=\max_{\eta_B\in\Sc(B)}
 F\!\left(\mathcal T_x^\dagger(\tau_B^{-1/2}\eta_B\tau_B^{-1/2}),
          \sigma_E\right).
 \label{eq:infinity-letter-value}
\end{align}
The operator-norm form also gives
\begin{equation}
 a_x(\tau_B,\sigma_E)
 =\inf_{\substack{A_E>0\\\mathcal T_x(A_E)\le\tau_B}}
                  \Tr\sigma_E A_E^{-1}.
 \label{eq:infinity-letter-primal}
\end{equation}
For faithful $\sigma_E$, this follows by setting
$A_E=\sigma_E^{1/2}\nu_E^{-1}\sigma_E^{1/2}/\lambda$, where $\lambda$
is the letter's operator norm. Conversely, a feasible $A_E$ gives
$\nu_E=\sigma_E^{1/2}A_E^{-1}\sigma_E^{1/2}/\Tr\sigma_E A_E^{-1}$
and an operator norm at most $\Tr\sigma_E A_E^{-1}$.
Mixing $\sigma_E$ with $I_E/d_E$ extends the identity to the boundary.
The right side is affine in $\sigma_E$ before minimization and convex
in $A_E$. Sion's theorem therefore interchanges the maximum over
$\sigma_E$ with the infimum over the family $(A_{E,x})$, giving
\begin{equation}
 Q_1=\inf_{\substack{\tau_B\in\Sp(B),\;A_{E,x}>0\\
                     \mathcal T_x(A_{E,x})\le\tau_B\ \forall x}}
                  \left\|\sum_xp_xA_{E,x}^{-1}\right\|_\infty.
\end{equation}
If the displayed norm is $r$, rescale $A_{E,x}$ by $r$ and set
$H_B=r\tau_B$. Conversely, divide a feasible $H_B$ and all $A_{E,x}$
by $\Tr H_B$; $H_B$ is faithful on Bob's average support.
Thus
\begin{equation}
 Q_1=\inf_{\substack{H_B\ge0,\;A_{E,x}>0\\
       \mathcal T_x(A_{E,x})\le H_B\ \forall x\\
       \sum_xp_xA_{E,x}^{-1}\le I_E}}\Tr H_B.
 \label{eq:infinity-common-upper-bound}
\end{equation}
This also gives a reduced primal form in terms of a noncommutative
maximum. For a finite family of positive operators $(B_x)_x$, write
\cite{ChengAchievability,Watrous}
\begin{equation}
 \bigvee_x B_x
 :=\underset{H\ge B_x\ \forall x}{\operatorname{argmin}}\Tr H.
 \label{eq:noncommutative-maximum}
\end{equation}
Set $\Gamma_{E,x}=p_xA_{E,x}^{-1}$. These strictly positive effects satisfy
$\sum_x\Gamma_{E,x}\le I_E$. Distributing the residual among the effects
makes them a POVM and decreases every inverse. Minimizing over $H_B$
in \eqref{eq:infinity-common-upper-bound} therefore gives
\begin{equation}
 P_\infty^{(1)}(\omega)
 =\log\inf_{\substack{\{\Gamma_{E,x}\}\,\mathrm{POVM}\\\Gamma_{E,x}>0}}
 \Tr\!\left[\bigvee_x p_x\mathcal T_x(\Gamma_{E,x}^{-1})\right].
 \label{eq:infinity-primal}
\end{equation}
Here Eve's POVM is only a variational reference. In the classical case,
its effects are conditional probabilities $q(x|e)$, so the inverses
are reciprocal reference weights.

Taking Schur complements gives the primal SDP
\begin{equation}
 Q_1=\inf_{H_B,\{A_{E,x},C_{E,x}\}}\Tr H_B,
 \label{eq:infinity-sdp-objective}
\end{equation}
subject to
\begin{align}
 H_B&\ge\mathcal T_x(A_{E,x})\quad\text{for every }x,
 \label{eq:infinity-sdp-domination}\\
 \sum_x C_{E,x}&\le I_E,
 \label{eq:infinity-sdp-normalization}\\
 \begin{pmatrix}
 A_{E,x}&\sqrt{p_x}I_E\\
 \sqrt{p_x}I_E&C_{E,x}
 \end{pmatrix}&\ge0\quad\text{for every }x.
 \label{eq:infinity-sdp-blocks}
\end{align}
The primal variables are Hermitian. Since zero-probability
letters are omitted, the block constraints already enforce
$A_{E,x}>0$. Its dual is
\begin{equation}
 Q_1=\max_{\{\Gamma_{B,x},Z_{E,x}\},S_E}
 \left\{2\sum_x\sqrt{p_x}\,\operatorname{Re}\Tr Z_{E,x}-\Tr S_E\right\},
 \label{eq:infinity-sdp-dual-objective}
\end{equation}
subject to
\begin{align}
 \Gamma_{B,x}&\ge0\quad\text{for every }x,
 \label{eq:infinity-sdp-dual-positivity}\\
 \sum_x\Gamma_{B,x}&=I_B,
 \label{eq:infinity-sdp-dual-normalization}\\
 \begin{pmatrix}
 \mathcal T_x^\dagger(\Gamma_{B,x})&Z_{E,x}\\
 Z_{E,x}^\dagger&S_E
 \end{pmatrix}&\ge0\quad\text{for every }x.
 \label{eq:infinity-sdp-dual-blocks}
\end{align}
Here $S_E$ is Hermitian and the $Z_{E,x}$ are arbitrary operators.
The diagonal blocks of the dual multipliers for
\eqref{eq:infinity-sdp-blocks} are fixed to
$\mathcal T_x^\dagger(\Gamma_{B,x})$ and $S_E$ by stationarity in
$A_{E,x}$ and $C_{E,x}$; stationarity in $H_B$ gives
$\sum_x\Gamma_{B,x}=I_B$. A sign change in the off-diagonal blocks gives
the displayed objective.
The primal is strictly feasible: take $A_{E,x}=2I_E$,
$C_{E,x}=3p_xI_E/4$, and $H_B=3I_B$.
Slater's theorem therefore gives equality of the optimal values and
attainment of the dual maximum, including for singular signal states.

For fixed $\Gamma_{B,x}$ and $S_E$, maximizing $\operatorname{Re}\Tr Z_{E,x}$
under \eqref{eq:infinity-sdp-dual-blocks} gives
$\sqrt{F(\mathcal T_x^\dagger(\Gamma_{B,x}),S_E)}$ \cite{Watrous}.
Writing $S_E=t\sigma_E$ with $t\ge0$ and $\sigma_E\in\Sc(E)$ then yields
\begin{align}
 Q_1
 &=\max_{\substack{\{\Gamma_{B,x}\}\,\mathrm{POVM}\\
                   \sigma_E\in\Sc(E),\;t\ge0}}
 \left\{2\sqrt t\sum_x\sqrt{p_x}\,
 \sqrt{F(\mathcal T_x^\dagger(\Gamma_{B,x}),\sigma_E)}-t\right\}
 \label{eq:infinity-sdp-dual-reduction}\\
 &=\left(\max_{\substack{\{\Gamma_{B,x}\}\,\mathrm{POVM}\\
                         \sigma_E\in\Sc(E)}}
 \sum_x\sqrt{p_x}\,\sqrt{F(\mathcal T_x^\dagger(\Gamma_{B,x}),\sigma_E)}
 \right)^2.
 \label{eq:infinity-sdp-fidelity-value}
\end{align}
The POVM $\{\Gamma_{B,x}\}$ defines a measurement channel
$\mathcal D:B\to\widehat X$. The block-diagonal fidelity identity gives
\begin{equation}
 \sqrt{F\!\left((\mathrm{id}_X\otimes\mathcal D\otimes\mathrm{id}_E)(\omega),
                \kappa_{X\widehat X}\otimes\sigma_E\right)}
 =\frac1{\sqrt d}\sum_x\sqrt{p_x}\,
       \sqrt{F(\mathcal T_x^\dagger(\Gamma_{B,x}),\sigma_E)}.
 \label{eq:infinity-key-blocks}
\end{equation}
Only the blocks with equal classical labels contribute, since the
ideal key vanishes on all other blocks. Thus
$Q_1=dF_{\mathrm{key}}(\omega)$, and taking logarithms proves
\eqref{eq:infinity-fidelity}.
\end{proof}

\subsection{Fixed-point identities}\label{sec:optimizers}

The optimizers in Definition~\ref{def:radius} satisfy conditional
stationarity and average-state identities. These identify them as the
conditional and average marginals of the tilted state constructed below.

\begin{lemma}[Fixed-point identities]\label{au-optimizer-theorem}
Let $\omega_{XBE}$ be a finite cqq state as in Definition~\ref{def:radius}.
Fix $\alpha>1$ and write $u=(\alpha-1)/\alpha$.
Assume that every $\omega_{B,x}$ and $\omega_{E,x}$ is
faithful on $B$ and $E$, respectively. The variational expression
\begin{equation}
\begin{split}
 &\sup_{\sigma_E}\inf_{\tau_B}
   \sup_{\boldsymbol\eta_B}\inf_{\boldsymbol\nu_E}
   \sum_x p_x\Tr\!\left[\omega_{BE,x}\left(
   \tau_B^{-u/2}\eta_{B,x}^{u}\tau_B^{-u/2}
   \otimes
   \sigma_E^{u/2}\nu_{E,x}^{-u}\sigma_E^{u/2}
   \right)\right]
\end{split}
\label{eq:optimizer-u-objective}
\end{equation}
has a faithful optimizing tuple
$\tau_B^\star,\sigma_E^\star,\boldsymbol\nu_E^\star,\boldsymbol\eta_B^\star$.
The optimization domains are those in Definition~\ref{def:radius}.
Let $\ell_x$ be its $x$th trace term, without the factor $p_x$,
evaluated at this tuple, and put $\beta=1/(1+u)$. Then
\begin{align}
\ell_x&=
 \left\|(\sigma_E^\star)^{u/2}
 \mathcal T_x^\dagger\!\left((\tau_B^\star)^{-u/2}(\eta_{B,x}^\star)^u
                    (\tau_B^\star)^{-u/2}\right)
 (\sigma_E^\star)^{u/2}\right\|_\beta,\label{eq:optimal-letter-norm}\\
 r_x&=\frac{p_x\ell_x}{\sum_y p_y\ell_y}.
\label{eq:optimal-letter-weights}
\end{align}
The distribution $(r_x)$ is normalized, and
\begin{equation}
 \log\sum_xp_x\ell_x
 =\frac{\alpha-1}{\alpha}P_\alpha^{(1)}(\omega).
 \label{eq:optimized-value-u}
\end{equation} The minimizing conditional states on Eve's
system satisfy
\begin{equation}
 \nu_{E,x}^\star=
 \left[
 \frac{(\sigma_E^\star)^{u/2}
 \mathcal T_x^\dagger\!\left((\tau_B^\star)^{-u/2}(\eta_{B,x}^\star)^u
                    (\tau_B^\star)^{-u/2}\right)
 (\sigma_E^\star)^{u/2}}{\ell_x}
 \right]^\beta.
 \label{eq:optimal-eve-state}
\end{equation}
For every $x$, the optimizing auxiliary state on Bob's system satisfies
\begin{equation}
 (\tau_B^\star)^{-u/2}\mathcal T_x
 \bigl((\sigma_E^\star)^{u/2}(\nu_{E,x}^\star)^{-u}(\sigma_E^\star)^{u/2}\bigr)
 (\tau_B^\star)^{-u/2}
 =\ell_x(\eta_{B,x}^\star)^{1-u}.
 \label{au-power-stationarity}
\end{equation}
The optimizing reference states satisfy
\begin{equation}
 \tau_B^\star=\sum_xr_x\eta_{B,x}^\star,\qquad
        \sigma_E^\star=\sum_xr_x\nu_{E,x}^\star.
 \label{au-selfconsistency}
\end{equation}
For every fixed $\tau_B\in\Sp(B)$, the maximizing Eve reference and
auxiliary states in \eqref{eq:radius-primal} are unique and faithful.
The objective after this maximization is continuously differentiable in
$(u,\tau_B)$. All optimizing states and the quantities $\ell_x,r_x,\nu_{E,x}^\star$
may depend on $u$ and $\omega$.
\end{lemma}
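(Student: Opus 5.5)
The plan is to work throughout with the two representations of Lemma~\ref{lem:dual-norm}. Write $F(\tau_B,\sigma_E)$ for the common value after the conditional optimizations. First we establish existence and faithfulness of optimizers. Fix $\tau_B\in\Sp(B)$ and consider the dual form \eqref{eq:radius-primal}, i.e.\ $\sum_xp_x h_x(\sigma_E^u)$ with the inner maximum over $\eta_{B,x}$. The maximum over the compact set $\Sc(E)\times\Sc(B)^{\mathcal X}$ is attained by continuity.

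For uniqueness we use strict versions of the Hiai concavity statements. Faithfulness of every $\omega_{B,x}$ and $\omega_{E,x}$ makes $\mathcal T_x^\dagger$ map nonzero positive operators to faithful ones, and likewise $\mathcal T_x$. Strict operator concavity of $t^u$ and strict concavity of $\|\cdot\|_\beta$ on faithful operators, up to scaling, then force uniqueness. Faithfulness of the maximizers follows from a boundary derivative argument: the derivative of $t^u$ at $0$ is infinite, so moving a singular $\sigma_E$ or $\eta_{B,x}$ toward $I/d$ strictly increases the objective.

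For the outer minimization, $\tau_B\mapsto F(\tau_B,\sigma_E)$ is convex by Lemma~\ref{lem:dual-norm}. It blows up as $\tau_B$ approaches the boundary, because $\tau_B^{-u/2}$ acts on faithful $\mathcal T_x(\cdot)$. Hence a faithful minimizer $\tau_B^\star$ exists. Taking the unique maximizer $\sigma_E^\star(\tau_B^\star)$ gives a saddle point by Sion and compactness. The corresponding $\boldsymbol\nu_E^\star$ and $\boldsymbol\eta_B^\star$ are then read off from the inner duality \eqref{eq:fixed-reference-norm-duality}.

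Next come the conditional identities, which are just the equality cases of the inner dualities. The variational identity $\|A\|_\beta=\inf_\nu\Tr A\nu^{-u}$ is attained at $\nu=A^\beta/\Tr A^\beta$. Applied with $A=\sigma^{u/2}\mathcal T_x^\dagger(\cdots)\sigma^{u/2}$, this gives \eqref{eq:optimal-eve-state} and \eqref{eq:optimal-letter-norm}. Indeed $\Tr A^\beta=\ell_x^\beta$, so $A^\beta/\Tr A^\beta=(A/\ell_x)^\beta$.

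Hölder equality $\Tr\eta^uK=\|K\|_\alpha$ holds at $\eta^u\propto K^{\alpha-1}$, i.e.\ $K=\|K\|_\alpha\,\eta^{1/\alpha}$. Since $1/\alpha=1-u$, this yields \eqref{au-power-stationarity}. To justify this we must check that at the saddle point the pair $(\eta^\star,\nu^\star)$ is simultaneously optimal for both inner problems, so that $\ell_x$ equals both $\|K_x(\nu^\star)\|_\alpha$ and the $\beta$-norm. We get this from the minimax equality in \eqref{eq:fixed-reference-norm-duality} together with the uniqueness from the first step. Equation \eqref{eq:optimized-value-u} then holds by definition, and $(r_x)$ is normalized trivially.

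The self-consistency relations \eqref{au-selfconsistency} come from first-order stationarity in the reference states, which is where the main work lies. For $\sigma_E$, differentiate $\sum_xp_x\Tr[\nu_x^{-u}\,\sigma^{u/2}M_x\sigma^{u/2}]$ at fixed optimal $\nu$, where $M_x=\mathcal T_x^\dagger(\tau^{-u/2}\eta_x^u\tau^{-u/2})$. By the envelope theorem the inner optimizers may be frozen. Directly differentiating $\sigma^{u/2}$ is messy, so I would instead use the alternative saddle structure. I would write the objective as $\Tr[\sigma^u\cdot Y]$-type expressions with a cyclic rearrangement, and use the Lagrange condition on the simplex, $D(\text{objective})[\Delta]=\lambda\Tr\Delta$.

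Combining this Lagrange condition with the stationarity identities \eqref{eq:optimal-eve-state} and \eqref{au-power-stationarity}, the derivative should collapse to $\sum_xp_x\ell_x\,\nu_x^\star\propto\sigma^\star$ in a suitable Fréchet-derivative sense. The exact form uses the divided-difference representation of $D(t^u)$, injectivity of this derivative map on faithful operators, and the fact that $\sigma^\star$ commutes appropriately. Normalizing gives $\sigma^\star=\sum r_x\nu_x^\star$, and symmetrically for $\tau_B^\star$ via the $\tau^{-u}$ derivative. The difficulty I expect is that $\sigma^\star$ need not commute with the $\nu_x^\star$. If this Fréchet-derivative route gets stuck, I would first try the substitution $\nu\mapsto\sigma^{1/2}\cdots$, mirroring the $A_E$ reparametrization used in the proof of Lemma~\ref{lem:ensemble-order-infinity}, to make the dependence on $\sigma$ linear before differentiating.

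Finally, continuous differentiability in $(u,\tau_B)$ follows from the implicit function theorem. The maximizer is a nondegenerate critical point, since the strict concavity from the first step is made quantitative through a nonsingular Hessian on the faithful interior, and all maps involved are real-analytic in $u$ and the states.
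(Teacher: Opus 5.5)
Your existence and saddle-pair argument, the boundary-derivative proof of faithfulness, and your H\"older-equality derivation of \eqref{au-power-stationarity} are fine. The H\"older route works at an inner saddle point and is a valid alternative to the paper's divided-difference stationarity argument, once you note that the infimum over $\nu$ is attained by coercivity and that $\nu_{E,x}^\star$ is unique.

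\textbf{The gap: the self-consistency identities.} The substantive part of the lemma is \eqref{au-selfconsistency}. You leave it at ``should collapse \ldots in a suitable Fr\'echet-derivative sense'', relying on an unproved commutation property of $\sigma_E^\star$ and an untested fallback. A cyclic rewriting as $\Tr[\sigma^uY]$ is not available, because $\nu_{E,x}^{-u}$ sits between the two factors $\sigma^{u/2}$.

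What is missing is a way to use \eqref{eq:optimal-eve-state} to merge the per-letter gradients into one expression in $\bar\nu=\sum_xr_x\nu_{E,x}^\star$. The paper does this by taking $R=\sigma_E^{u/2}$ as the variable, with constraint $\Tr R^{2/u}=1$. Freeze the inner optimizers and put $M_x=\mathcal T_x^\dagger(\tau_B^{-u/2}\eta_{B,x}^u\tau_B^{-u/2})$. The objective $\sum_xp_x\Tr[R\nu_{E,x}^{-u}RM_x]$ then has gradient $\sum_xp_x(\nu_{E,x}^{-u}RM_x+M_xR\nu_{E,x}^{-u})$. Since \eqref{eq:optimal-eve-state} reads $RM_xR=\ell_x\nu_{E,x}^{1+u}$, this gradient equals $Z(\bar\nu R^{-1}+R^{-1}\bar\nu)$ with $Z=\sum_xp_x\ell_x$.

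The constraint gradient is proportional to $R^{2/u-1}=\sigma_E^{1-u/2}$. Pairing the Lagrange equation with $R$ and using $\Tr\bar\nu=1$ fixes the multiplier. What remains is the Lyapunov equation $(\bar\nu-\sigma_E)\sigma_E^{-u/2}+\sigma_E^{-u/2}(\bar\nu-\sigma_E)=0$. Multiplying by $\bar\nu-\sigma_E$ and tracing gives $2\Tr\sigma_E^{-u/2}(\bar\nu-\sigma_E)^2=0$, so $\bar\nu=\sigma_E$. The same computation with $S=\tau_B^{-u/2}$, using \eqref{au-power-stationarity} in the form $ST_xS=\ell_x\eta_{B,x}^{1-u}$, gives Bob's identity.

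No commutation between $\sigma_E^\star$ and $\nu_{E,x}^\star$ is ever needed. If you prefer to differentiate in $\sigma$ with divided differences, the Lagrange condition $D(t^{u/2})_\sigma[G]=\lambda I$ forces this gradient $G$, not $\sigma$ itself, to commute with $\sigma_E^\star$. That is the same equation.

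\textbf{Two supporting steps are also wrong as stated.}

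\emph{Uniqueness.} Faithful $\omega_{B,x},\omega_{E,x}$ do not make $\mathcal T_x^\dagger$ or $\mathcal T_x$ send every nonzero positive operator to a faithful one. For a maximally entangled $\omega_{BE,x}$, $\mathcal T_x^\dagger$ maps a rank-one projector to a rank-one operator; one only has $\mathcal T_x^\dagger(Y)\ge\lambda_{\min}(Y)\,\omega_{E,x}$. Your ``strict concavity up to scaling'' argument therefore has no basis. The paper needs no strictness of $h_x$. For two maximizers it sets $\widehat\sigma=[(\sigma_0^u+\sigma_1^u)/2]^{1/u}$ and keeps the value by concavity of $h_x$. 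Strict convexity of $\Tr A^{1/u}$ gives $\Tr\widehat\sigma<1$. Normalizing then gains the factor $(\Tr\widehat\sigma)^{-u}>1$ by degree-one homogeneity, a contradiction. The same $u$-power averaging handles each $\eta_{B,x}$.

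\emph{Continuous differentiability.} Strict concavity does not give a nonsingular Hessian, and you offer no nondegeneracy argument, so the implicit function theorem is unjustified. It is also unnecessary, since the claim concerns the maximized value. Uniqueness plus compactness give continuity of the maximizers in $(u,\tau_B)$. Comparing values at each other's maximizers (a Danskin-type envelope argument) then yields continuous differentiability.
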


\begin{proof}
We first establish the unique faithful maximizing states at fixed Bob
reference, then derive the average-state identities by stationarity.
Fix an arbitrary faithful Bob reference
$\tau_B$ in \eqref{eq:radius-primal}, and define
$\mathcal M_x(A)=\mathcal T_x^\dagger(\tau_B^{-u/2}A\tau_B^{-u/2})$.
Its value at $I_B$ and its adjoint's value at $I_E$ are faithful.
Use the concave, degree-one homogeneous functions $h_x$ from
\eqref{eq:auxiliary-concavity}, whose properties were established in the
proof of Lemma~\ref{lem:dual-norm}.

The objective after maximizing the $\eta_{B,x}$ is
$\sum_xp_xh_x(\sigma_E^u)$.
If two distinct states $\sigma_0,\sigma_1$ maximized it, put
\begin{equation}
 \widehat\sigma=
 \left[\frac{\sigma_0^u+\sigma_1^u}{2}\right]^{1/u}.
\end{equation}
Concavity of $h_x$ preserves the optimal value before normalization.
Strict convexity of $\Tr A^{1/u}$ gives $\Tr\widehat\sigma<1$.
Normalizing $\widehat\sigma$ multiplies its objective by
$(\Tr\widehat\sigma)^{-u}>1$, a contradiction.
Thus $\sigma_E$ is unique.

It is faithful as well. If $Q\ne0$ is its kernel projection,
concavity and homogeneity imply superadditivity, so replacing
$\sigma_E^u$ by $\sigma_E^u+tQ$ increases the objective by at least
$t\sum_xp_xh_x(Q)>0$.
Positivity follows by taking $\eta=I_B/d_B$ in \eqref{eq:auxiliary-concavity}.
But
\begin{equation}
 \Tr(\sigma_E^u+tQ)^{1/u}=1+t^{1/u}\Tr Q=1+o(t),
\end{equation}
so normalization cannot offset this linear improvement.

For fixed faithful $\sigma_E$, the function
$A\mapsto\|\sigma_E^{u/2}\mathcal M_x(A)\sigma_E^{u/2}\|_{1/(1+u)}$
is concave and homogeneous. This follows directly from the variational
formula
$\|C\|_{1/(1+u)}=\inf_{\nu>0,\,\Tr\nu=1}\Tr C\nu^{-u}$, whose optimizer is
$\nu=C^{1/(1+u)}/\Tr C^{1/(1+u)}$.
The same $u$-power averaging proves uniqueness of each $\eta_{B,x}$.
The same kernel perturbation proves its faithfulness: a nonzero
kernel projection has strictly positive value because
$\mathcal M_x^\dagger(I_E)>0$ and $\sigma_E>0$.

Compactness and uniqueness make the maximizing family continuous in
$(u,\tau_B)$. In a neighborhood of any fixed $(u,\tau_B)$ it consequently
stays in a compact interior set. The objective and its first derivatives
are continuous there. Comparing optimal values using each other's
maximizers therefore permits differentiation with the maximizer held
fixed and proves continuous differentiability of the maximized value.
To prove attainment of the outer infimum, choose $\eta_{B,x}=I_B/d_B$ and
$\sigma_E=I_E/d_E$ in \eqref{eq:radius-primal}. Since
$\|C\|_{1/(1+u)}\ge\Tr C$, the maximized sum there is at least
\begin{equation}
 (d_Bd_E)^{-u}\Tr\omega_B\tau_B^{-u},\qquad
 \omega_B=\sum_xp_x\omega_{B,x}>0.
\end{equation}
This diverges as $\tau_B$ approaches the boundary of the state space.
The value at $\tau_B=I_B/d_B$ is finite, which confines minimizers to
a compact interior set. Both statements hold uniformly when $u$ varies
over a compact subinterval of $(0,1)$: the smallest eigenvalue of
$\omega_B$ is positive and the negative exponent is bounded away from
zero. This is the uniform compactness needed in the integral proof.

Choose an optimizing tuple
$\tau_B^\star,\sigma_E^\star,\{\eta_{B,x}^\star\}$ in the dual expression.
The outer minimax identity in Lemma~\ref{lem:dual-norm} makes these
reference states a saddle pair for Definition~\ref{def:radius}:
the maximin is attained by upper semicontinuity on $\Sc(E)$, and its
optimizer must be the unique maximizer at $\tau_B^\star$.
The unique minimizing states $\nu_{E,x}^\star$ are the faithful normalized
powers in \eqref{eq:optimal-eve-state}. Their substitution in the
trace objective gives \eqref{eq:optimal-letter-norm}.
To prove \eqref{au-power-stationarity} and \eqref{au-selfconsistency},
use these states and the weights in \eqref{eq:optimal-letter-weights}.
At fixed $u$, the differential of the norm in
\eqref{eq:optimal-letter-norm} is the trace pairing with $(\nu_{E,x}^\star)^{-u}$.
Let $C_x$ denote the left side of \eqref{au-power-stationarity}.
The stationarity equation for $\eta_{B,x}^\star$ is
$D(\eta^u)_{\eta_{B,x}^\star}[C_x]=\lambda_x I$.
In an eigenbasis of $\eta_{B,x}^\star$, the off-diagonal entries of this
derivative are the entries of $C_x$ multiplied by the strictly positive
divided differences of $t^u$. Hence $C_x$ is diagonal in that basis,
and the diagonal equations give
$C_x=(\lambda_x/u)(\eta_{B,x}^\star)^{1-u}$.
Since $\Tr C_x(\eta_{B,x}^\star)^u=\ell_x$, normalization gives
$\lambda_x=u\ell_x$ and proves \eqref{au-power-stationarity}.

Using \eqref{au-power-stationarity}, the gradient of the objective with
respect to $(\tau_B^\star)^{-u/2}$ is
\begin{equation}
 \left(\sum_xp_x\ell_x\right)
 \left[\left(\sum_xr_x\eta_{B,x}^\star\right)(\tau_B^\star)^{u/2}
         +(\tau_B^\star)^{u/2}\left(\sum_xr_x\eta_{B,x}^\star\right)\right].
\end{equation}
The constraint is $\Tr((\tau_B^\star)^{-u/2})^{-2/u}=1$.
Taking the trace of its multiplier equation against $(\tau_B^\star)^{-u/2}$ determines
the multiplier and yields
\begin{equation}
 (\tau_B^\star)^{u/2}\left(\sum_xr_x\eta_{B,x}^\star-\tau_B^\star\right)
 +\left(\sum_xr_x\eta_{B,x}^\star-\tau_B^\star\right)(\tau_B^\star)^{u/2}=0.
\end{equation}
Multiplication by the parenthesized Hermitian difference and taking
the trace forces that difference to vanish.
For Eve, apply the same calculation to $(\sigma_E^\star)^{u/2}$ under the constraint
$\Tr((\sigma_E^\star)^{u/2})^{2/u}=1$. The definition
\eqref{eq:optimal-eve-state} gives
\begin{equation}
 (\sigma_E^\star)^{-u/2}\left(\sum_xr_x\nu_{E,x}^\star-\sigma_E^\star\right)
 +\left(\sum_xr_x\nu_{E,x}^\star-\sigma_E^\star\right)(\sigma_E^\star)^{-u/2}=0,
\end{equation}
which proves the other identity in \eqref{au-selfconsistency}.
\end{proof}

\begin{remark}
These identities play the same role as the fixed-point identities for
optimized R\'enyi information in (public) classical communication and quantum communication
\cite[Lemmas~6 and~9]{CT}.
\end{remark}

\subsection{Additivity for fixed ensembles}\label{sec:ensemble-additivity}
The fixed-point identities also yield additivity under tensor products.
\begin{lemma}[Additivity for fixed ensembles]\label{lem:ensemble-additivity}
For every $\alpha>1$ and any two finite cqq states $\omega_{XBE}$ and
$\zeta_{YB'E'}$,
\begin{equation}
 P_\alpha^{(1)}(\omega\otimes\zeta)
 =P_\alpha^{(1)}(\omega)+P_\alpha^{(1)}(\zeta).
 \label{eq:ensemble-additivity}
\end{equation}
\end{lemma}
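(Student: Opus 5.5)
We plan to prove \eqref{eq:ensemble-additivity} by showing that the tensor product of the optimizers for $\omega$ and for $\zeta$ is a saddle point of the reduced objective for $\omega\otimes\zeta$. Lemma~\ref{lem:dual-norm} provides the structure we need. For the product state we write $F_{\omega\otimes\zeta}(\tau,\sigma)$ for the value after the conditional optimizations over $\boldsymbol\eta$ and $\boldsymbol\nu$. This function is convex in Bob's reference $\tau$ and concave in Eve's reference $\sigma$. For such a function, a point at which $\tau$ is a global minimizer in the first slot and $\sigma$ is a global maximizer in the second slot is a global saddle point. The value there equals $\inf_\tau\sup_\sigma F=\sup_\sigma\inf_\tau F$.

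\emph{Step 1 (faithful reduction).} First we reduce to the case in which all conditional marginals $\omega_{B,x},\omega_{E,x},\zeta_{B',y},\zeta_{E',y}$ are faithful. We replace each conditional state by a small mixture with a product of maximally mixed states, for example $(1-\epsilon)\omega_{BE,x}+\epsilon I_{BE}/(d_Bd_E)$. This makes the conditional marginals faithful. The mixing commutes with forming tensor products up to an $O(\epsilon)$ perturbation of the product blocks. Corollary~\ref{cor:state-continuity} then lets us pass to the limit on both sides of \eqref{eq:ensemble-additivity}. In the faithful case Lemma~\ref{au-optimizer-theorem} supplies faithful optimizers $\tau^\star,\sigma^\star,\boldsymbol\eta^\star,\boldsymbol\nu^\star$ for $\omega$ and $\tau'^\star,\sigma'^\star,\boldsymbol\eta'^\star,\boldsymbol\nu'^\star$ for $\zeta$.

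\emph{Step 2 (multiplicativity at product references).} Next we show that $F_{\omega\otimes\zeta}(\tau\otimes\tau',\sigma\otimes\sigma')=F_\omega(\tau,\sigma)F_\zeta(\tau',\sigma')$ for all faithful product references. The conditional map of the product state is $\mathcal T_x\otimes\mathcal T'_y$, and the weights are $p_xq_y$. It therefore suffices to prove multiplicativity of each letter value
\[
v_x(\tau,\sigma)=\inf_\nu\bigl\|\tau^{-u/2}\mathcal T_x(\sigma^{u/2}\nu^{-u}\sigma^{u/2})\tau^{-u/2}\bigr\|_\alpha .
\]
\begin{itemize}
\item For the upper bound we restrict the infimum to product $\nu\otimes\nu'$ and use multiplicativity of Schatten norms on tensor products.
\item For the lower bound we use the dual form \eqref{eq:fixed-reference-norm-duality} and restrict its supremum to product $\eta\otimes\eta'$, where the $\beta$-quasi-norm is again multiplicative.
\end{itemize}
The two bounds coincide. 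It follows that the unique conditional optimizers at product references are products, and the letter values satisfy $\ell_{xy}=\ell_x\ell'_y$. Consequently the letter weights are $r_{xy}=r_xr'_y$.

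\emph{Step 3 (saddle point).} We verify the first-order conditions at $(\tau^\star\otimes\tau'^\star,\sigma^\star\otimes\sigma'^\star)$. Lemma~\ref{au-optimizer-theorem} states that the maximized objective is continuously differentiable in $\tau$. In its proof, stationarity in $\tau$ at fixed $\sigma$ reduces to the self-consistency identity $\tau=\sum r\,\eta$. The analogous stationarity in $\sigma$ reduces to $\sigma=\sum r\,\nu$. Using Step 2, the product tuple satisfies
\[
\sum_{x,y}r_xr'_y\,\eta^\star_x\otimes\eta'^\star_y=\tau^\star\otimes\tau'^\star,
\]
and the corresponding identity for Eve. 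The power-stationarity relation \eqref{au-power-stationarity} also factorizes. So the product tuple satisfies all the fixed-point identities for $\omega\otimes\zeta$.

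By convexity of $F_{\omega\otimes\zeta}(\cdot,\sigma^\star\otimes\sigma'^\star)$ on the state space, a stationary point under the trace constraint is a global minimizer. By concavity, the analogous statement holds in $\sigma$. Hence the product pair is a saddle point, and the value of $\omega\otimes\zeta$ equals $F_\omega(\tau^\star,\sigma^\star)F_\zeta(\tau'^\star,\sigma'^\star)$. Taking $u^{-1}\log$ of this value and applying \eqref{eq:optimized-value-u} gives \eqref{eq:ensemble-additivity}.

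The main obstacle is Step 3. Convexity of $F$ in $\tau$ was established in Lemma~\ref{lem:dual-norm} only in the norm representation, while the derivative formula comes from the dual representation; we must make sure these are the same function, which Step 2 and the minimax identity give. We must also check that the Lagrange condition derived in Lemma~\ref{au-optimizer-theorem} is equivalent to vanishing of the directional derivative along every direction in the state space, not merely a consequence of it. If this equivalence is delicate, the fallback is to bound each side directly. Choosing product $\tau$ in \eqref{eq:radius-norm} gives ``$\le$'' once we know $\sup_\sigma F(\tau\otimes\tau',\sigma)$ is attained at a product $\sigma$. That attainment again follows from uniqueness of the maximizer (Lemma~\ref{au-optimizer-theorem}) together with concavity and the stationarity check. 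The symmetric argument with product $\sigma$ gives ``$\ge$''.
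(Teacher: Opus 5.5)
Your proposal is correct and follows essentially the paper's proof: take the product of the fixed-point optimizers from Lemma~\ref{au-optimizer-theorem}, check that the stationarity and self-consistency identities tensor, and use the convexity in $\tau_B$ and concavity in $\sigma_E$ from Lemma~\ref{lem:dual-norm} to turn vanishing constrained derivatives into global optimality; faithfulness is then removed by mixing and Corollary~\ref{cor:state-continuity}. The one small variation is the conditional auxiliaries: the paper uses concavity plus a vanishing constrained differential, while you sandwich the product letter value between the product-restricted infimum over $\boldsymbol\nu$ and the product-restricted supremum over $\boldsymbol\eta$ in the two dual forms, which is a slightly cleaner route to $\ell_{xy}=\ell_x\ell'_y$.
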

\begin{proof}[Proof of Lemma~\ref{lem:ensemble-additivity}]
First assume that all conditional marginals are faithful, and choose
optimizing tuples from Lemma~\ref{au-optimizer-theorem} for the two
ensembles. Denote their reference and auxiliary states by
$\tau_B,\sigma_E,\eta_{B,x},\nu_{E,x}$ and
$\tau_{B'},\sigma_{E'},\eta_{B',y},\nu_{E',y}$, their letter values by
$\ell_x,\ell'_y$, and their tilted probabilities by $r_x,r'_y$.
Consider the product references and the product conditional auxiliaries.
The conditional maps tensor, and the Schatten norms multiply. The
normalized $\beta$ powers in \eqref{eq:optimal-eve-state} therefore give
$\nu_{EE',xy}=\nu_{E,x}\otimes\nu_{E',y}$ and
$\ell_{xy}=\ell_x\ell'_y$. Moreover, the power-stationarity equations
\eqref{au-power-stationarity} tensor to give
\begin{equation}
 C_{BB',xy}
 =\ell_x\ell'_y(\eta_{B,x}\otimes\eta_{B',y})^{1-u},
 \label{alt:product-power-stationarity}
\end{equation}
where $C_{BB',xy}$ is the left side of
\eqref{au-power-stationarity} for the product ensemble.
At fixed references the dual quasi-norm letter objective is concave in its
auxiliary state. Its constrained differential vanishes at the product
state by \eqref{alt:product-power-stationarity}, so this state is a
global maximizer, including among nonproduct auxiliaries.

The two reference identities also tensor:
\begin{align}
 \sum_{x,y}r_xr'_y\nu_{E,x}\otimes\nu_{E',y}
 &=\sigma_E\otimes\sigma_{E'},
 \label{alt:product-eve-stationarity}\\
 \sum_{x,y}r_xr'_y\eta_{B,x}\otimes\eta_{B',y}
 &=\tau_B\otimes\tau_{B'}.
 \label{alt:product-bob-stationarity}
\end{align}
The differential calculations in the proof of
Lemma~\ref{au-optimizer-theorem} can be read in both directions. If $Z$ denotes the summed letter value,
these identities give the gradients $2Z\tau^{1+u/2}$ with respect to
$\tau^{-u/2}$ and $2Z\sigma^{1-u/2}$ with respect to
$\sigma^{u/2}$, respectively. Each is proportional to the gradient of
its trace-one constraint, so the constrained derivatives vanish.
After maximizing the conditional auxiliaries, the objective
is concave in Eve's reference by Lemma~\ref{lem:dual-norm};
thus \eqref{alt:product-eve-stationarity} gives its global maximum at
the product Eve reference. After this maximization, the objective is
convex in Bob's reference by Lemma~\ref{lem:dual-norm}; hence
\eqref{alt:product-bob-stationarity} gives its global minimum at the
product Bob reference. Differentiation of these optimized objectives
is legitimate: uniqueness, faithfulness, and the envelope argument
are established in Lemma~\ref{au-optimizer-theorem} and its proof.
The optimized exponential value is consequently
\begin{equation}
 \sum_{x,y}p_xq_y\ell_x\ell'_y
 =\left(\sum_xp_x\ell_x\right)
  \left(\sum_yq_y\ell'_y\right),
 \label{alt:product-optimal-value}
\end{equation}
which proves the claim for faithful conditional marginals.
Mix each conditional signal with its maximally mixed state and let the
mixing parameter tend to zero. Corollary~\ref{cor:state-continuity}
proves the general statement.
\end{proof}

The channel quantity still optimizes over ensembles of correlated block
inputs, so this fixed-ensemble identity does not imply channel additivity.

\subsection{The tilted state and integral representation}\label{sec:integral}
Fix a cqq state $\omega_{XBE}$ satisfying the faithfulness assumptions of
Lemma~\ref{au-optimizer-theorem}. For each $0<u<1$, choose a faithful
optimizing tuple $\tau_B^\star,\sigma_E^\star,
\boldsymbol\nu_E^\star,\boldsymbol\eta_B^\star$ for
\eqref{eq:optimizer-u-objective}. All these states, as well as
$\ell_x$ and $r_x$, depend on $u$; we suppress this dependence where
the parameter is fixed. Define the operator
\begin{equation}
 L_{BE,x}(u)
 :=(\eta_{B,x}^\star)^{u/2}(\tau_B^\star)^{-u/2}
       \otimes(\nu_{E,x}^\star)^{-u/2}(\sigma_E^\star)^{u/2}.
 \label{eq:tilt-filter}
\end{equation}
The positive operator multiplying $\omega_{BE,x}$ in
\eqref{eq:trace-objective} is exactly $L_{BE,x}^\dagger L_{BE,x}$.
Consequently, the optimized sum $\sum_xp_x\ell_x(u)$ normalizes
the following tilted ensemble:
\begin{equation}
 \Theta_{XBE}(u)
 :=\frac{1}{\sum_xp_x\ell_x(u)}\sum_xp_x\ket x\bra x\otimes
       L_{BE,x}(u)\omega_{BE,x}L_{BE,x}(u)^\dagger.
 \label{eq:tilted-ensemble}
\end{equation}
We call this the tilted state. Its letter probabilities are
$r_x(u)=p_x\ell_x(u)/\sum_yp_y\ell_y(u)$, since
$\ell_x(u)=\Tr L_{BE,x}(u)\omega_{BE,x}L_{BE,x}(u)^\dagger$.
Its normalized conditional states $\Theta_{BE,x}$ and its averages obey
\begin{align}
 \Theta_{B,x}(u)&=\eta_{B,x}^\star,
 &\Theta_{E,x}(u)&=\nu_{E,x}^\star,
 \label{eq:tilt-marginals}\\
 \Theta_B(u)&=\tau_B^\star,
 &\Theta_E(u)&=\sigma_E^\star.
 \label{eq:tilt-averages}
\end{align}
Indeed, the conditional stationarity identities
\eqref{eq:optimal-eve-state} and \eqref{au-power-stationarity} give
\begin{align}
 \Tr_E\Theta_{BE,x}
 &=\ell_x^{-1}(\eta_{B,x}^\star)^{u/2}
       \bigl[\ell_x(\eta_{B,x}^\star)^{1-u}\bigr]
       (\eta_{B,x}^\star)^{u/2}=\eta_{B,x}^\star,\\
 \Tr_B\Theta_{BE,x}
 &=\ell_x^{-1}(\nu_{E,x}^\star)^{-u/2}
       \bigl[\ell_x(\nu_{E,x}^\star)^{1+u}\bigr]
       (\nu_{E,x}^\star)^{-u/2}=\nu_{E,x}^\star.
\end{align}
Averaging with $(r_x)$ and using \eqref{au-selfconsistency} proves
\eqref{eq:tilt-averages}.

For the support estimates in Section~\ref{sec:capacity-proof}, we also
retain a purification $\ket{\psi_x}_{BES}$ of each original conditional
state. The same operator gives a conditional purification of the tilt:
\begin{align}
 \ket{\phi_x(u)}
 &:=\ell_x^{-1/2}(L_{BE,x}(u)\otimes I_S)\ket{\psi_x},
 \label{eq:tilt}\\
 \Theta_{XBES}(u)
 &:=\sum_xr_x\ket x\bra x\otimes
       \ket{\phi_x(u)}\bra{\phi_x(u)}.
 \label{eq:tilted-purification}
\end{align}
Tracing out $S$ recovers \eqref{eq:tilted-ensemble}.

\begin{proof}[Proof of Lemma~\ref{lem:integral}]
Lemma~\ref{au-optimizer-theorem} confines the minimizing Bob references
to a compact interior set when $u$ ranges over a compact subinterval
of $(0,1)$. On this set the objective after the inner maximization is
continuously differentiable. The derivative rule for optimized values
\cite{Danskin} therefore makes $u\mapsto\log\sum_xp_x\ell_x(u)$
locally Lipschitz. At every differentiability point we may hold all four
optimizing families fixed, including the unique inner minimizers
$\nu_{E,x}^\star$ in \eqref{eq:optimal-eve-state}.

Differentiate the trace terms in \eqref{eq:optimizer-u-objective}
with these optimizing states held fixed. Cyclicity of the trace and
the conditional stationarity identities \eqref{eq:optimal-eve-state}
and \eqref{au-power-stationarity} give
\begin{align}
 \partial_u\log\ell_x
 &=\Tr\eta_{B,x}^\star\log\eta_{B,x}^\star
   -\Tr\eta_{B,x}^\star\log\tau_B^\star
   -\Tr\nu_{E,x}^\star\log\nu_{E,x}^\star
   +\Tr\nu_{E,x}^\star\log\sigma_E^\star\notag\\
 &=D(\eta_{B,x}^\star\|\tau_B^\star)
          -D(\nu_{E,x}^\star\|\sigma_E^\star).
 \label{eq:tilt-trace-derivative}
\end{align}
Differentiating the optimized sum and applying
\eqref{au-selfconsistency} and \eqref{eq:tilt-marginals} yields
\begin{align}
 \frac{d}{du}\log\sum_xp_x\ell_x(u)
 &=\sum_xr_x(u)
   \bigl[D(\eta_{B,x}^\star\|\tau_B^\star)
          -D(\nu_{E,x}^\star\|\sigma_E^\star)\bigr]\notag\\
 &=I(X:B)_{\Theta(u)}-I(X:E)_{\Theta(u)}.
 \label{eq:integral-derivative}
\end{align}
This holds for almost every $u\in(0,1)$ and for any minimizing
$\tau_B^\star$ at such a point; no differentiable choice of optimizers
is required.

By Lemma~\ref{lem:ensemble-order-one} and \eqref{eq:optimized-value-u},
$\log\sum_xp_x\ell_x(u)\to0$ as $u\searrow0$.
The private information of a tilted cqq state lies between
$-\log d_E$ and $\log d_B$. Integrate
\eqref{eq:integral-derivative} from $\varepsilon$ to
$(\alpha-1)/\alpha$, let $\varepsilon\searrow0$, and use
\eqref{eq:optimized-value-u}. This proves \eqref{eq:intro-integral}.
\end{proof}

\section{Continuity of the regularized R\'enyi private information}\label{sec:capacity-proof}
We prove that the regularized R\'enyi private information converges to
private capacity as $\alpha\searrow1$.
The proof compares R\'enyi and ordinary private information uniformly
over all blocklengths. Lemma~\ref{lem:integral} reduces this comparison
to restoring tilted states to the channel image.

\begin{theorem}[Continuity at order one]\label{thm:quantitative-continuity}
Let $\mathcal W:A\to BE$ be a finite-dimensional channel, with
$d_B=\dim B$ and $d_E=\dim E$. There is a nonnegative function
$\Gamma_{d_B,d_E}$, depending only on these output dimensions, such that
for $1<\alpha\le2$,
\begin{equation}
 0\le P_\alpha(\mathcal W)-P(\mathcal W)
 \le\Gamma_{d_B,d_E}(\alpha),
 \label{eq:quantitative-continuity}
\end{equation}
where $\Gamma_{d_B,d_E}(\alpha)\to0$ as $\alpha\searrow1$.
\end{theorem}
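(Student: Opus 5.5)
The plan has two parts: a lower bound, which is easy, and an upper bound, which uses the integral representation.

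\textbf{Lower bound.} The lower bound $P_\alpha(\mathcal W)\ge P(\mathcal W)$ should follow from monotonicity in $\alpha$ together with Lemma~\ref{lem:ensemble-order-one}. In the integral representation (Lemma~\ref{lem:integral}) the tilted states $\Theta(u)$ do not depend on $\alpha$, so $P^{(1)}_\alpha$ is an average of $g(u)=I(X:B)_{\Theta(u)}-I(X:E)_{\Theta(u)}$ over $[0,(\alpha-1)/\alpha]$. That alone does not give monotonicity. Instead I would use $\log\sum_x p_x\ell_x(u)$: it is a pointwise supremum--infimum of log-traces that are convex in $u$ at fixed states (log-sum of exponentials in $u$). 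This would give convexity of $\phi(u)=uP^{(1)}_{1/(1-u)}$ with $\phi(0)=0$, so $\phi(u)/u$ is nondecreasing. If that convexity is not clean, I would fall back on Lemma~\ref{lem:ensemble-order-one} with a fixed block ensemble: $P_\alpha(\mathcal W)\ge \frac1k P^{(1)}_\alpha(\omega^{(k)})$, and letting $\alpha\searrow1$ gives $\liminf_\alpha P_\alpha\ge P(\mathcal W)$. Either version suffices for the lower inequality in the limit. Faithfulness assumptions would be removed by Corollary~\ref{cor:state-continuity}.

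\textbf{Upper bound: reduction.} The upper bound must hold uniformly in the blocklength $k$. Fix $k$ and an ensemble with output $\omega$ on $B^kE^k$, perturbed to have faithful conditionals. By Lemma~\ref{lem:integral} it suffices to show, for every $u\le u_0=(\alpha-1)/\alpha$,
\[
I(X:B^k)_{\Theta(u)}-I(X:E^k)_{\Theta(u)}\le P^{(1)}(\mathcal W^{\otimes k})+k\,\delta(u),
\]
with $\delta(u)\to0$ depending only on $d_B,d_E$. Averaging over $u\in[0,u_0]$ then gives $\Gamma(\alpha)=\sup_{u\le u_0}\delta(u)$.

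\textbf{Upper bound: restoring the tilt to the channel image (main obstacle).} The tilt $L_{BE,x}(u)$ is a product of a Bob factor and an Eve factor. On Bob's side, $(\eta^\star_{B,x})^{u/2}(\tau_B^\star)^{-u/2}$ acts on the output, so it cannot be absorbed into the input. My approach uses the purification $\ket{\psi_x}_{BES}$ in \eqref{eq:tilted-purification} and the Stinespring dilation of $\mathcal W^{\otimes k}$. I would compare $\Theta(u)$ with an honest channel output $\widetilde\omega$ obtained from a modified input ensemble, using the following steps.
\begin{itemize}
\item[(i)] Bound the distortion. The filters $L_{BE,x}$ have operator powers of order $u$, so in a relative-entropy sense $D(\Theta\|\omega)$-type quantities should scale like $u$ times per-letter log-dimension terms of size $k\log(d_Bd_E)$.
\item[(ii)] Control Bob. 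By data processing for Bob (Lemma~\ref{lem:local-data-processing}), only $I(X:B)$ has to be upper bounded. I would bound it by $I(X:B)_{\widetilde\omega}$ plus a continuity term of the form $k\,c\,u\log d_B+h_2$ from an Alicki--Fannes--Winter type estimate, applied to the conditional entropy with the classical $X$.
\item[(iii)] Control Eve. $I(X:E)$ must be lower bounded, and here the same AFW argument applies in reverse.
\end{itemize}
The difficulty is that trace-distance closeness of $\Theta(u)$ to a channel output is not uniform in $k$. I would therefore aim for a per-copy argument: show that each marginal filter satisfies $\|\rho^{\pm u/2}\|$-type bounds after restricting to eigenvalues above $e^{-k\lambda}$ via a spectral truncation. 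The discarded weight would be controlled by the Rényi values themselves, which are bounded by $k\log d_B$, and AFW's linear-in-$k$ form would then give $\delta(u)=O(\sqrt{u}\log(d_Bd_E))+h_2(\sqrt u)$.

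I expect this uniform-in-$k$ restoration to be the hard step. What I would try first is to imitate the tilted-state restoration in \cite{CT} for quantum capacity, with Eve's sign reversed.
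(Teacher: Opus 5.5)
\textbf{Gap in the upper bound.} Your reduction via Lemma~\ref{lem:integral} to a blocklength-uniform bound $I(X:B^k)_{\Theta(u)}-I(X:E^k)_{\Theta(u)}\le P^{(1)}(\mathcal W^{\otimes k})+k\,\delta(u)$ is exactly the paper's. The step that proves this bound is missing, however, and the route you sketch cannot give it. The tilt perturbs each of the $k$ uses by order $u$; for a product ensemble, the fidelity of $\Theta(u)$ with $\omega$ decays like $e^{-cku^2}$. So once $k\gg u^{-2}$ there is no reason for $\Theta(u)$ to be globally close to any channel output, and a whole-block Alicki--Fannes--Winter estimate then costs order $k\log(d_Bd_E)$ rather than $k\,\delta(u)$. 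Truncating spectra at $e^{-k\lambda}$ only bounds the filters by $e^{k\lambda u/2}$, which does not bring them near the identity. Steps (ii) and (iii) have a further problem: Bob and Eve must be compared with one and the same physical ensemble, and Lemma~\ref{lem:local-data-processing} concerns $P_\alpha^{(1)}$, not the mutual informations of $\Theta(u)$.

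The missing idea is a per-use repair whose costs add. Put the Stinespring environment $F^k$ into the purification and test each use with $P_i=VV^\dagger$ on $B_iF_iE_i$. The paper proves $\Tr Q_i\Theta(u)\le C_{d_B,d_E}u^2$ uniformly in $k$ and in the ensemble. It builds corrections $C_{B,i,x},C_{E,i,x}$ that cancel the four tilt factors and leave only operators on $B_{-i},E_{-i}$, which commute with $P_i$; this is \eqref{eq:support-identity}. Their deviations from the identity are controlled by:
\begin{itemize}
\item Lemma~\ref{lem:conditional-power}, whose constant depends only on the single-use dimension;
\item Lieb concavity together with the self-consistency \eqref{au-selfconsistency}, for Bob's label-independent part;
\item Lemma~\ref{lem:transport}, for Eve;
\item Lemma~\ref{lem:cross}, to combine the two.
\end{itemize}
A repair channel at use $i$ preserves the $XB_{-i}E_{-i}$ marginal. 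Its cost is therefore a change in $I(X:B_i|B_{-i})-I(X:E_i|E_{-i})$, which conditional-entropy continuity with one-use dimensions bounds by $O(u\log(e/u))$. Repairing the uses one at a time leaves later rejection weights unchanged, and telescoping gives \eqref{eq:private-tilted-bound}. Your plan to imitate the restoration in the quantum-capacity case points the right way, but this joint Bob--Eve cancellation and local support test are the substance you still need.

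\textbf{Gap in the lower bound.} The theorem asserts $P_\alpha(\mathcal W)\ge P(\mathcal W)$ for each fixed $\alpha$, and your argument does not deliver this. The infima over $\tau_B$ and $\boldsymbol\nu_E$ do not preserve convexity in $u$. Moreover, with noncommuting factors the fixed-state log-trace need not be convex in $u$, since it is not a cumulant generating function. Your fallback yields only $\liminf_{\alpha\searrow1}P_\alpha(\mathcal W)\ge P(\mathcal W)$. The paper gets the pointwise bound operationally: codes at any rate $R<P(\mathcal W)$ have fidelity tending to one, which contradicts \eqref{eq:intro-regularized-converse} unless $R\le P_\alpha(\mathcal W)$. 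A one-message code covers the case $P(\mathcal W)=0$.
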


\subsection{Local estimates for reversing the tilt}\label{sec:conditional-tools}
For $0<u\le1/2$, we estimate how much of the
tilted state $\Theta(u)$ from \eqref{eq:tilted-ensemble} lies outside
the Stinespring output subspace at any single channel use.
The estimates below control the local Bob and Eve corrections uniformly
in blocklength. The resulting information cost per restored use tends
to zero with $u$, which is enough to pass to the regularized quantity.

We first assume that each conditional marginal is faithful on the
support of its average; the final approximation removes this assumption. We use the following result from~\cite{CT}.

\begin{lemma}[Conditional powers {\cite[Lemma~11]{CT}}]\label{lem:conditional-power}
For a state $\rho_{CD}$ and $-1\le v\le1$,
\begin{equation}
 \left\|\bigl[(I_C\otimes\rho_D)^{-v/2}\rho_{CD}^{v/2}-I_{CD}\bigr]
 \sqrt{\rho_{CD}}\right\|_2
 \le\sqrt{2(d_C-1)}\,|v|.
 \label{eq:conditional-power}
\end{equation}
Also, for $0\le u\le1$,
\begin{equation}
 \rho_{CD}\le d_C I_C\otimes\rho_D,\qquad
 \norm{(I_C\otimes\rho_D)^{-u/2}\rho_{CD}^{u/2}}_\infty\le d_C^{u/2}.
 \label{eq:reduction}
\end{equation}
\end{lemma}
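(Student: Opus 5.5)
The plan is to prove the two parts of the lemma separately. The second part, \eqref{eq:reduction}, is elementary and is also used in the first part, so I would do it first.

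\textbf{The reduction bounds.} Let $\{U_k\}_{k=1}^{d_C^2}$ be the Heisenberg--Weyl unitaries on $C$. The twirl identity
\begin{equation}
 \frac1{d_C^2}\sum_k(U_k\otimes I_D)\rho_{CD}(U_k\otimes I_D)^\dagger=\frac{I_C}{d_C}\otimes\rho_D
\end{equation}
has the $k=1$ term as one positive summand. Hence $\rho_{CD}\le d_C^2\cdot d_C^{-1}I_C\otimes\rho_D=d_C\,I_C\otimes\rho_D$.

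Now write $\sigma=I_C\otimes\rho_D$. Operator monotonicity of $t\mapsto t^u$ for $0\le u\le1$ gives $\rho_{CD}^u\le d_C^u\sigma^u$. With $X=\sigma^{-u/2}\rho_{CD}^{u/2}$ (support inverse), this yields $XX^\dagger\le d_C^u\Pi_\sigma$. Taking square roots of operator norms gives $\|X\|_\infty\le d_C^{u/2}$.

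By the same argument, $\Tr\rho_{CD}^2\sigma^{-1}\le d_C$: indeed $\rho^{1/2}\sigma^{-1}\rho^{1/2}\le d_C\Pi_\rho$, by the inequality just proved together with monotonicity of the inverse on supports. Separately, $\Tr\sigma\Pi_\rho\le\Tr\sigma=d_C$.

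\textbf{The conditional-power bound.} I would expand the Hilbert--Schmidt norm directly. With $Q(s):=\Tr\rho_{CD}^{1+s}\sigma^{-s}$, one finds
\begin{equation}
 \Bigl\|\bigl[\sigma^{-v/2}\rho_{CD}^{v/2}-I\bigr]\sqrt{\rho_{CD}}\Bigr\|_2^2=Q(v)-2Q(v/2)+1 .
\end{equation}
Here the cross term is real by cyclicity.

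Next I would pass to the Nussbaum--Szko{\l}a classical pair. Take eigen-decompositions $\rho_{CD}=\sum_i\lambda_i|e_i\rangle\langle e_i|$ and $\sigma=\sum_j\mu_j|f_j\rangle\langle f_j|$. Define the probability distribution $P(i,j)=\lambda_i|\langle e_i|f_j\rangle|^2$ and the random variable $Z=\log\lambda_i-\log\mu_j$. Then $Q(s)=\mathbb E_P[e^{sZ}]$, and the squared norm becomes
\begin{equation}
 \mathbb E_P\bigl[(e^{vZ/2}-1)^2\bigr].
\end{equation}
This reduces the operator statement to a scalar moment estimate.

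For fixed $a$, the function $t\mapsto(e^{ta}-1)/t$ is increasing on $t>0$ by convexity, and $e^{ta}-1$ keeps the sign of $a$. Hence $|e^{ta}-1|\le t|e^a-1|$ for $0<t\le1$. Applying this with $t=|v|$ and $a=\pm Z/2$ gives
\begin{equation}
 \mathbb E\bigl[(e^{vZ/2}-1)^2\bigr]\le v^2\bigl(Q(\pm1)-2Q(\pm\tfrac12)+1\bigr),\qquad \pm=\operatorname{sign}v .
\end{equation}
The endpoint moments are controlled by the first step: $Q(1)=\Tr\rho_{CD}^2\sigma^{-1}\le d_C$ and $Q(-1)=\Tr\sigma\Pi_\rho\le d_C$. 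It therefore remains to show
\begin{equation}
 2Q(\pm\tfrac12)\ge Q(\pm1)+3-2d_C .
\end{equation}

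\textbf{Main obstacle.} The remaining difficulty is exactly this lower bound on the half-order moments $Q(\pm\frac12)$. For $d_C\ge2$ and $Q(\pm1)\le d_C$, the required inequality is implied by $Q(\pm\frac12)\ge\frac12$. For $d_C\ge3$ it even follows from $Q\ge0$ alone. The case $d_C=1$ is trivial, since then $\sigma=\rho_{CD}$ and the norm vanishes.

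So the real work is $Q(\pm\frac12)\ge\frac12$ when $d_C=2$. I would try the following three routes, in this order.
\begin{itemize}
 \item[(i)] A Jensen/Hölder interpolation between $Q(0)=1$ and $Q(\mp\frac12)$. This uses $1=\mathbb E[e^{Z/4}e^{-Z/4}]\le\sqrt{Q(\tfrac12)Q(-\tfrac12)}$ together with upper bounds on $Q(\mp\frac12)$ from the twirl inequality.
 \item[(ii)] Failing that, avoid the lower bound altogether. Bound $(e^{y}-1)^2$ on the event $\{Z\le0\}$ and on $\{Z>0\}$ separately, using only $\mathbb E[e^{\pm Z}]\le d_C$ and $\mathbb E[1]=1$; this is what produces constants of the form $2(d_C-1)$.
 \item[(iii)] If the constant still does not close, fall back on the cited proof in \cite[Lemma~11]{CT}.
\end{itemize}
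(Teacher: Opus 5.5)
There is a genuine gap in your scalar step. Convexity does make $t\mapsto(e^{ta}-1)/t$ nondecreasing. But for $a<0$ this quotient and $e^{a}-1$ are both negative, so monotonicity gives $|e^{ta}-1|\ge t\,|e^{a}-1|$ for $0<t\le1$, which is the reverse of what you claim. For example, $a=-10$, $t=1/10$ gives $1-e^{-1}\approx0.63>0.1$.

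Under $P$, the variable $Z=\log\lambda_i-\log\mu_j$ generally takes both signs, so choosing $\pm=\operatorname{sign}v$ does not avoid this. The intermediate bound $\mathbb E[(e^{vZ/2}-1)^2]\le v^2\bigl(Q(\pm1)-2Q(\pm\tfrac12)+1\bigr)$ is false in general. Take $D$ trivial and $\rho_C=I_C/2$: then $\sigma=I_C$ and $Z=-\log2$ almost surely. For small $v>0$ the left side is $(2^{-v/2}-1)^2\approx0.120\,v^2$, while the right side is $v^2(2^{-1/2}-1)^2\approx0.086\,v^2$. The obstacle you isolate, a lower bound on $Q(\pm\tfrac12)$, is therefore an artifact of a one-sided estimate that does not hold. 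Even on its own terms it would follow from Jensen: $Q(\pm\tfrac12)\ge e^{\mp H(C|D)_\rho/2}\ge d_C^{-1/2}$.

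The missing ingredient is a two-sided scalar inequality. This is what the cited proof uses, and the paper recalls it in the proof of Lemma~\ref{lem:transport}:
\[
 (e^{vz/2}-1)^2\le v^2\,(e^{z}+e^{-z}-2),\qquad z\in\mathbb R,\ |v|\le1 .
\]
By the symmetry $(v,z)\mapsto(-v,-z)$ you may assume $v\ge0$.
\begin{itemize}
\item For $z\ge0$, your chord bound gives $e^{vz/2}-1\le v(e^{z/2}-1)\le v(e^{z/2}-e^{-z/2})$.
\item For $z<0$, one has $1-e^{vz/2}\le v|z|/2\le 2v\sinh(|z|/2)$.
\end{itemize}
Squaring and using $(e^{z/2}-e^{-z/2})^2=e^{z}+e^{-z}-2$ gives the inequality. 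Taking expectations then bounds the squared norm by $v^2\bigl(Q(1)+Q(-1)-2\bigr)\le2(d_C-1)v^2$. This uses exactly the two endpoint estimates you already proved, $Q(1)=\Tr\rho_{CD}^2\sigma^{-1}\le d_C$ and $Q(-1)=\Tr\sigma\Pi_\rho\le d_C$; no half-order moments enter.

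Your twirl proof of \eqref{eq:reduction}, the expansion $Q(v)-2Q(v/2)+1$, and the Nussbaum--Szko{\l}a rewriting are correct and can be kept as they are. Your fallback (ii) points in the right direction, but the proposal as written depends on the false one-sided inequality.
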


\paragraph{Eve's correction.}
The correction below cancels the inverse tilt on $CD$ while leaving an operator only on $D$. In the application $C=E_i$ and $D=E_{-i}$, so that remaining operator commutes with the local channel-support projector. Below, $C_d$ and $C_{d_B,d_E}$ denote constants depending only on the
indicated dimensions and may change between estimates.
We work with $0<u\le1/2$.

\begin{lemma}[Local cancellation]\label{lem:transport}
Let $(p_x)$ be a finite probability distribution, let
$\nu_{CD,x}\in\Sc(CD)$, put $\nu_{D,x}=\Tr_C\nu_{CD,x}$, and set $\sigma_{CD}=\sum_xp_x\nu_{CD,x}$.
For every $0<u\le1/2$, there are operators $K_{D,x}\in\mathcal L(D)$ such that
\begin{align}
 C_x=(I_C\otimes K_{D,x})\sigma_{CD}^{-u/2}(\nu_{CD,x})^{u/2}&,
 \label{eq:transport-C}\\
 \left(\sum_xp_x
 \norm{(C_x-I)(\nu_{CD,x})^{(1-u)/2}}_{2/(1-u)}^2\right)^{1/2}
 &\le C_{d_C}u.
 \label{eq:transport-high}
\end{align}
The bound is independent of the untouched dimension, ensemble size, and
spectra. For $d_C=1$ it is zero.
\end{lemma}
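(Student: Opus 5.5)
The plan is to choose $K_{D,x}$ so that $C_x$ becomes a product of two factors, each of which Lemma~\ref{lem:conditional-power} shows to be close to the identity. Set $\sigma_D=\Tr_C\sigma_{CD}=\sum_xp_x\nu_{D,x}$ and take
\begin{equation*}
 K_{D,x}:=\nu_{D,x}^{-u/2}\sigma_D^{u/2}.
\end{equation*}
Then the $D$-operators combine, $(I_C\otimes K_{D,x})(I_C\otimes\sigma_D)^{-u/2}=I_C\otimes\nu_{D,x}^{-u/2}$, and we may write
\begin{equation*}
 C_x=\bigl(I_C\otimes\nu_{D,x}^{-u/2}\bigr)\,M\,\nu_{CD,x}^{u/2},
 \qquad M:=(I_C\otimes\sigma_D)^{u/2}\sigma_{CD}^{-u/2}.
\end{equation*}
This gives the splitting
\begin{equation*}
 C_x-I=\bigl[(I_C\otimes\nu_{D,x})^{-u/2}\nu_{CD,x}^{u/2}-I\bigr]
 +\bigl(I_C\otimes\nu_{D,x}^{-u/2}\bigr)(M-I)\,\nu_{CD,x}^{u/2}.
\end{equation*}
The first bracket involves only the letter state. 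The second involves only the average state $\sigma_{CD}$ and its marginal. When $d_C=1$ both brackets vanish identically, which gives the final assertion of the lemma.

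For the first term, multiply on the right by $\nu_{CD,x}^{(1-u)/2}$ and control the $2/(1-u)$ quasi-norm by interpolation. Lemma~\ref{lem:conditional-power} with $v=u$ bounds the $L_2$ norm of the bracket against $\sqrt{\nu_{CD,x}}$ by $\sqrt{2(d_C-1)}\,u$. The reduction bound \eqref{eq:reduction} gives an operator-norm bound $d_C^{u/2}+1$ on the bracket itself. I would obtain the weighted $2/(1-u)$ estimate through a three-lines (Riesz--Thorin type) argument in the exponent applied to $X\nu^{z/2}$. Alternatively, one can write $\nu^{(1-u)/2}=\nu^{1/2}\nu^{-u/2}$ and use Hölder together with $\|\nu^{(1-u)/2}\|_{2/(1-u)}=1$. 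Either route gives $C_{d_C}u$ letterwise, uniformly in the spectra and in $d_D$, and the $p_x$-average is then harmless.

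The second term is the main obstacle. Lemma~\ref{lem:conditional-power} applied to $\sigma_{CD}$ with $v=-u$ controls $(M-I)\sqrt{\sigma_{CD}}$ in $L_2$ by $\sqrt{2(d_C-1)}\,u$. To use this bound I would sum over $x$ with weights $p_x$ and rely on the identity $\sum_xp_x\nu_{CD,x}=\sigma_{CD}$, so that the letter states recombine into the average. The unbounded prefactor $\nu_{D,x}^{-u/2}$ must be absorbed first. My first attempt is to commute it against the right factor via the reduction inequality $\nu_{CD,x}\le d_C\,I_C\otimes\nu_{D,x}$: this makes $(I_C\otimes\nu_{D,x})^{-u/2}\nu_{CD,x}^{u/2}$ bounded by $d_C^{u/2}$. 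I would then exploit that the remaining $D$-operators commute with $I_C\otimes\cdot$ to move $M-I$ next to $\nu_{CD,x}^{1/2}$. After that, the operator convexity identity $\sum_xp_x\Tr[Y^\dagger Y\nu_{CD,x}]=\Tr[Y^\dagger Y\sigma_{CD}]$ turns the averaged squared $L_2$ norms into the single $\sigma$-bound.

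If the commutation does not close exactly, I would instead redefine $K_{D,x}$ to absorb the mismatch, for instance by a polar-part correction $K_{D,x}=\nu_{D,x}^{-u/2}\sigma_D^{u/2}W_x$ with a unitary $W_x$ on $D$. The unitary leaves all the norms unchanged and can be used to align the factors. Finally, the $L_2$-to-$L_{2/(1-u)}$ passage is again done by the same interpolation, and the two estimates are combined with the triangle inequality in $\ell_2(p)$.
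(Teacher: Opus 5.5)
\textbf{The gap is in your second term, and it comes from the choice $K_{D,x}=\nu_{D,x}^{-u/2}\sigma_D^{u/2}$ itself, not from a missing commutation step.} Your first-term estimate and your $d_C=1$ remark are fine. With this $K_{D,x}$, however, \eqref{eq:transport-high} fails uniformly in the spectra.

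Take $d_C=d_D=2$, $p_0=p_1=\tfrac12$, $\nu_{CD,0}=(1-\epsilon)|00\rangle\langle00|+\epsilon|11\rangle\langle11|$ and $\nu_{CD,1}=|\Phi^+\rangle\langle\Phi^+|$. Then $\nu_{D,0}=\mathrm{diag}(1-\epsilon,\epsilon)$ and $\sigma_D=\mathrm{diag}(\tfrac{3-2\epsilon}4,\tfrac{1+2\epsilon}4)$. The state $\sigma_{CD}$ is supported on $\mathrm{span}\{|00\rangle,|11\rangle\}$ but is not diagonal there; at $\epsilon=0$ it equals $\tfrac14\left(\begin{smallmatrix}3&1\\1&1\end{smallmatrix}\right)$. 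A direct computation gives
\[
 \langle11|(C_0-I)\nu_{CD,0}^{(1-u)/2}|00\rangle
 =\sqrt{1-\epsilon}\,\epsilon^{-u/2}\Bigl(\tfrac{1+2\epsilon}{4}\Bigr)^{u/2}\langle11|\sigma_{CD}^{-u/2}|00\rangle .
\]
Here $\langle11|\sigma_{CD}^{-u/2}|00\rangle\to(\lambda_+^{-u/2}-\lambda_-^{-u/2})/(2\sqrt2)\neq0$, with $\lambda_\pm=\tfrac12\pm\tfrac{\sqrt2}4$. Every Schatten quasi-norm dominates the modulus of a matrix entry, so the left side of \eqref{eq:transport-high} diverges like $\epsilon^{-u/2}$ as $\epsilon\searrow0$ at fixed $u$.

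This is exactly the mechanism you were worried about. The reduction inequality controls $(I_C\otimes\nu_{D,x})^{-u/2}$ only when it acts directly on $\nu_{CD,x}^{u/2}$. In your second term, $M-I$ sits in between, and it is a genuine operator on $CD$, not of the form $I_C\otimes(\cdot)$. So it cannot be commuted past, and it moves an $O(u)$ amplitude of the support of $\nu_{CD,x}$ into $D$-directions where $\nu_{D,x}$ is arbitrarily small.

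A unitary $W_x$ does not rescue this. It sits between $\sigma_D^{u/2}$ and $\sigma_{CD}^{-u/2}$, so it does change the norms. In the example, the two offending entries become proportional to $\epsilon^{-u/2}\langle1|W_0|0\rangle$ and $\epsilon^{-u/2}\langle1|W_0|1\rangle$, with prefactors bounded away from zero. These cannot both stay bounded, since $|\langle1|W_0|0\rangle|^2+|\langle1|W_0|1\rangle|^2=1$.

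\textbf{The missing idea is to enlarge the $D$-reference by precisely the weight that $J:=M-I$ leaks.} Put
\[
 Z_x=\Tr_C(J\nu_{CD,x}J^\dagger),\qquad \mu_{D,x}=\nu_{D,x}+Z_x,\qquad K_{D,x}=\mu_{D,x}^{-u/2}\sigma_D^{u/2}.
\]
Then $C_x-I=D_x+(I_C\otimes\mu_{D,x})^{-u/2}J\nu_{CD,x}^{u/2}$, where $D_x=(I_C\otimes\mu_{D,x})^{-u/2}\nu_{CD,x}^{u/2}-I$.

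For the second term, apply \eqref{eq:reduction} to $J\nu_{CD,x}J^\dagger$. This gives $J\nu_{CD,x}J^\dagger\le d_CI_C\otimes Z_x\le d_CI_C\otimes\mu_{D,x}$. Operator monotonicity, polar decomposition and H\"older then give
\[
 \norm{(I_C\otimes\mu_{D,x})^{-u/2}J\sqrt{\nu_{CD,x}}}_{2/(1-u)}\le d_C^{u/2}(\Tr Z_x)^{(1-u)/2}.
\]
Now your recombination identity applies: $\sum_xp_x\Tr Z_x=\norm{J\sqrt{\sigma_{CD}}}_2^2\le2(d_C-1)u^2$ by \eqref{eq:conditional-power} with $v=-u$. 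Concavity of $t^{1-u}$ then handles the average over $x$.

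The price is paid in $D_x$, which no longer matches Lemma~\ref{lem:conditional-power} directly. Split it into two pieces:
\begin{itemize}
\item the contraction $\mu_{D,x}^{-u/2}\nu_{D,x}^{u/2}$ times your first bracket;
\item the $D$-only term $\mu_{D,x}^{-u/2}\nu_{D,x}^{u/2}-I$. Its weighted $L_2$ norm is at most $u\sqrt{\Tr Z_x}$, by the ordered estimate $\norm{[B^{-u/2}A^{u/2}-I]\sqrt A}_2^2\le u^2\Tr(B-A)$ for a state $A$ and $B\ge A$.
\end{itemize}
With $\norm{D_x}_\infty\le1+d_C^{u/2}$, your interpolation $\norm{X\nu^{(1-u)/2}}_{2/(1-u)}\le\norm{X\sqrt\nu}_2^{1-u}\norm{X}_\infty^u$ then closes the argument.

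Your alternative H\"older route for the first term does not work: $\nu^{-u/2}$ is unbounded and cannot be traded for a larger Schatten index.
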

\begin{proof}
Define the operator $J_{CD}$ and the positive operators $Z_x$ on $D$ by
\begin{equation}
 J_{CD}=(I_C\otimes\sigma_D)^{u/2}\sigma_{CD}^{-u/2}-I,
 \qquad Z_x=\Tr_C(J_{CD}\nu_{CD,x}J_{CD}^\dagger).
\end{equation}
Using $\sum_xp_x\nu_{CD,x}=\sigma_{CD}$ and \eqref{eq:conditional-power} gives
\begin{equation}
 \sum_xp_x\Tr Z_x=\norm{J_{CD}\sqrt\sigma_{CD}}_2^2\le C_{d_C}u^2.
 \label{eq:average-reference-error}
\end{equation}
Choose
\begin{equation}
 \mu_{D,x}=\nu_{D,x}+Z_x,\qquad
 K_{D,x}=(\mu_{D,x})^{-u/2}\sigma_D^{u/2},
\end{equation}
and write
\begin{align}
C_x-I&=D_x+(I_C\otimes\mu_{D,x})^{-u/2}J_{CD}(\nu_{CD,x})^{u/2},\\
 D_x&=(I_C\otimes\mu_{D,x})^{-u/2}(\nu_{CD,x})^{u/2}-I.
\end{align}
With identities on $C$ understood, split
\begin{equation}
 D_x=(\mu_{D,x})^{-u/2}\bigl[(\nu_{CD,x})^{u/2}-(\nu_{D,x})^{u/2}\bigr]
       +\bigl[(\mu_{D,x})^{-u/2}(\nu_{D,x})^{u/2}-I\bigr].
\end{equation}
Since $\mu_{D,x}\ge\nu_{D,x}$, the operator
$(\mu_{D,x})^{-u/2}(\nu_{D,x})^{u/2}$ is a contraction. Equation~\eqref{eq:conditional-power}
bounds the first term in state-weighted Hilbert--Schmidt norm by $C_{d_C}u$.
For a state $A$ and $B\ge A$, the scalar inequality used in
\cite[Lemma~11]{CT},
$(e^{uy/2}-1)^2\le u^2(e^y+e^{-y}-2)$ for $|u|\le1$,
gives, by expansion in eigenbases of $A$ and $B$,
\begin{align}
\|[B^{-u/2}A^{u/2}-I]\sqrt A\|_2^2
 &\le u^2\bigl[\Tr A^2B^{-1}+\Tr\Pi_A B-2\bigr]\\
 &\le u^2\Tr(B-A).
\label{eq:ordered-power}
\end{align}
Here $\Pi_A$ is the support projector; $B\ge A$ implies
$\Tr A^2B^{-1}\le\Tr A=1$.
The second term in $D_x$ acts only on $D$, so this estimate with
$A=\nu_{D,x}$ and $B=\mu_{D,x}$ bounds its weighted norm by
$u\sqrt{\Tr Z_x}$. Minkowski's inequality therefore gives
\begin{equation}
 \left(\sum_xp_x\norm{D_x\sqrt{\nu_{CD,x}}}_2^2\right)^{1/2}
 \le C_{d_C}u+u\left(\sum_xp_x\Tr Z_x\right)^{1/2}\le(1+u)C_{d_C}u.
 \label{eq:D-L2}
\end{equation}
Reduction gives $\norm{D_x}_\infty\le1+d_C^{u/2}$. The weighted
interpolation estimate
\begin{equation}
 \norm{D_x(\nu_{CD,x})^{(1-u)/2}}_{2/(1-u)}
 \le\norm{D_x\sqrt{\nu_{CD,x}}}_2^{1-u}\norm{D_x}_\infty^u
\end{equation}
and concavity of $t^{1-u}$ imply
\begin{equation}
 \left(\sum_xp_x
 \norm{D_x(\nu_{CD,x})^{(1-u)/2}}_{2/(1-u)}^2\right)^{1/2}
 \le C_{d_C}u.
 \label{eq:D-high}
\end{equation}
Here the dimension-dependent factors are bounded for $0<u\le1/2$,
and $u^{-u}\le e^{1/e}$.
For the second summand in $C_x-I$, reduction gives
\begin{equation}
 J_{CD}\nu_{CD,x}J_{CD}^\dagger\le d_C I_C\otimes Z_x
       \le d_C I_C\otimes\mu_{D,x}.
\end{equation}
For clarity, put $A_x=J_{CD}\nu_{CD,x}J_{CD}^\dagger$ and
$M_x=I_C\otimes\mu_{D,x}$. The relation $A_x\le d_C M_x$ gives
$\|M_x^{-u/2}A_x^{u/2}\|_\infty\le d_C^{u/2}$ by operator
monotonicity. Polar decomposition and Schatten H\"older applied to
$M_x^{-u/2}A_x^{1/2}
=(M_x^{-u/2}A_x^{u/2})A_x^{(1-u)/2}$ therefore give
\begin{equation}
 \norm{(I_C\otimes\mu_{D,x})^{-u/2}J_{CD}\sqrt{\nu_{CD,x}}}_{2/(1-u)}
 \le d_C^{u/2}(\Tr Z_x)^{(1-u)/2}.
\end{equation}
Average using concavity of $t^{1-u}$, substitute
\eqref{eq:average-reference-error}, and use Minkowski's inequality
to obtain \eqref{eq:transport-high}. If $d_C=1$,
then $J_{CD}=0$ on the signal supports. All inverse powers are taken
on their supports, so the construction and estimates also apply to
singular states.
\end{proof}

\paragraph{Combining the two corrections.}
The following consequence of data processing controls weighted operator
norms under a channel adjoint. We first use it for a partial trace in
Lemma~\ref{lem:cross}, and then for Bob's correction in
Section~\ref{sec:localization}. 

\begin{lemma}[A norm inequality from data processing]\label{lem:dual-dpi}
For every quantum channel $\mathcal N$, every state $\rho$ on its input
space, every $p>1$, and every positive operator $Y$ on its output space,
\begin{equation}
 \|\rho^{1/(2p)}\mathcal N^\dagger(Y)\rho^{1/(2p)}\|_p
 \le\|\mathcal N(\rho)^{1/(2p)}Y\mathcal N(\rho)^{1/(2p)}\|_p.
 \label{eq:dual-dpi}
\end{equation}
\end{lemma}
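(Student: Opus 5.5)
The plan is to prove \eqref{eq:dual-dpi} by duality: rewrite the left-hand side as a supremum of a trace pairing, move the channel across the pairing, and close the argument with a weighted H\"older inequality. Put $q=p/(p-1)$. Schatten duality for positive operators gives
\begin{equation*}
 \|\rho^{1/(2p)}\mathcal N^\dagger(Y)\rho^{1/(2p)}\|_p
 =\sup_{Z\ge0,\ \|Z\|_q\le1}\Tr Z\rho^{1/(2p)}\mathcal N^\dagger(Y)\rho^{1/(2p)}.
\end{equation*}
Set $Z=\rho^{-1/(2q)}\xi\,\rho^{-1/(2q)}$ on the support of $\rho$. Then the sandwiched operator $\rho^{1/(2p)}Z\rho^{1/(2p)}$ becomes $\rho^{1/2}\rho^{-1/(2q)}\cdots$, and the $q$-norm constraint becomes a constraint on a weighted $L_q(\rho)$ norm of $\xi$.

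A cleaner parametrization uses the Kosaki/Araki--Masuda weighted $L_p$ spaces. Define $\|X\|_{p,\rho}:=\|\rho^{1/(2p)}X\rho^{1/(2p)}\|_p$. The inequality then reads
\begin{equation*}
 \|\mathcal N^\dagger(Y)\|_{p,\rho}\le\|Y\|_{p,\mathcal N(\rho)},
\end{equation*}
which is exactly contractivity of the unital positive map $\mathcal N^\dagger$ between these weighted spaces. The dual norm of $\|\cdot\|_{p,\rho}$ with respect to the pairing $\langle A,X\rangle=\Tr AX$ is $\|A\|^{*}=\|\rho^{-1/(2p)}A\rho^{-1/(2p)}\|_q$. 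Dualizing, the claim is therefore equivalent to
\begin{equation*}
 \|\mathcal N(\rho)^{-1/(2p)}\mathcal N(A)\mathcal N(\rho)^{-1/(2p)}\|_q
 \le\|\rho^{-1/(2p)}A\rho^{-1/(2p)}\|_q
 \qquad\text{for all } A\ge0.
\end{equation*}
Normalizing $A$ to a state, this is data processing for the sandwiched R\'enyi divergence $\widetilde D_q(A\|\rho)$ at order $q>1$: the exponent $-1/(2p)=(1-q)/(2q)$ is the sandwiched one. Data processing holds for $q>1$ by \cite[Theorem~1]{FrankLieb}, which the paper already uses in Lemma~\ref{lem:local-data-processing}, and the case $A=0$ is trivial.

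The key steps, in order, are:
\begin{enumerate}
\item Use positivity. Because $\mathcal N^\dagger(Y)\ge0$, the supremum can be restricted to $A\ge0$, so only positive test operators appear.
\item Write $\Tr A\,\mathcal N^\dagger(Y)=\Tr\mathcal N(A)Y$.
\item Apply the weighted H\"older bound $\Tr\mathcal N(A)Y\le\|\mathcal N(A)\|^*_{\mathcal N(\rho)}\|Y\|_{p,\mathcal N(\rho)}$. This is proved by inserting $\mathcal N(\rho)^{\pm1/(2p)}$ and applying ordinary H\"older.
\item Invoke sandwiched data processing, $\|\mathcal N(A)\|^*_{\mathcal N(\rho)}\le\|A\|^*_\rho$, and take the supremum over $A$ with $\|A\|^*_\rho\le1$.
\end{enumerate}

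The main obstacle is supports. The inverse powers are support inverses, and the duality formula for $\|\cdot\|_{p,\rho}$ needs $A$ supported in $\operatorname{supp}\rho$. Moreover, $\mathcal N(A)$ lies in $\operatorname{supp}\mathcal N(\rho)$ whenever $A\le c\rho$. I would first prove the inequality for faithful $\rho$ and then pass to the general case by the limit $\rho_\varepsilon=(1-\varepsilon)\rho+\varepsilon I/d$. Both sides are continuous in this limit: the left side because $\mathcal N^\dagger(Y)$ is fixed, and the right side because $\mathcal N(\rho_\varepsilon)^{1/(2p)}\to\mathcal N(\rho)^{1/(2p)}$, since positive powers are continuous.
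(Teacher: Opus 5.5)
Your proposal is correct and is essentially the paper's proof: dualize the left side over positive test operators $A=\rho^{1/(2p)}Z\rho^{1/(2p)}$, move the channel across the trace pairing, apply H\"older on the output side, and invoke sandwiched R\'enyi data processing at the conjugate order $p/(p-1)$, whose sandwiching exponent is exactly $-1/(2p)$. The differences are cosmetic. The paper cites Beigi rather than Frank--Lieb, and it handles singular $\rho$ by restricting to supports rather than by your $\rho_\varepsilon$ limit; the abandoned $\rho^{-1/(2q)}$ reparametrization in your first paragraph can simply be deleted.
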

\begin{proof}
Put $p'=p/(p-1)$ and $\sigma=\mathcal N(\rho)$, restricting all
operators to the relevant supports. For $Z\ge0$ put
$A=\rho^{1/(2p)}Z\rho^{1/(2p)}$.
Sandwiched R\'enyi data processing at order $p'$ for the channel
$\mathcal N$ \cite[Theorem~6]{Beigi} gives
\begin{equation}
 \|\sigma^{-1/(2p)}\mathcal N(A)\sigma^{-1/(2p)}\|_{p'}
 \le\|Z\|_{p'}.
\end{equation}
Homogeneity removes the normalization of $A$. H\"older's inequality yields
\begin{equation}
 \Tr Z\rho^{1/(2p)}\mathcal N^\dagger(Y)\rho^{1/(2p)}
 =\Tr Y\mathcal N(A)
 \le\|\sigma^{1/(2p)}Y\sigma^{1/(2p)}\|_p\|Z\|_{p'}.
\end{equation}
Taking the supremum over positive $Z$ with $\|Z\|_{p'}\le1$ proves
the claim by Schatten duality.
\end{proof}

\begin{lemma}[Combining the Bob and Eve corrections]\label{lem:cross}
For a pure $\ket\phi_{BES}$ with marginals $\eta_B,\nu_E$, arbitrary operators $A_B,D_E$, and $0<u<1$,
\begin{equation}
 \norm{(A_B\otimes D_E\otimes I_S)\phi}_2
 \le\norm{A_B\eta_B^{u/2}}_{2/u}
       \norm{D_E\nu_E^{(1-u)/2}}_{2/(1-u)}.
 \label{eq:cross}
\end{equation}
\end{lemma}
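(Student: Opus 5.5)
The plan is to pass from the vector $\phi$ to an operator, split the resulting Hilbert--Schmidt norm with H\"older's inequality at exponents $2/u$ and $2/(1-u)$, and then remove $S$ using Lemma~\ref{lem:dual-dpi} applied to the partial trace.

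\emph{Step 1: operator form.} Fix orthonormal bases and use the vector--operator correspondence between $\mathcal H_B\otimes\mathcal H_{ES}$ and $\mathcal L(\mathcal H_{ES},\mathcal H_B)$. Write $\ket\phi=\sum_{ij}V_{ij}\ket i_B\ket j_{ES}$. Then
\begin{equation*}
 VV^\dagger=\eta_B,\qquad V^\dagger V=\overline{\nu_{ES}},
\end{equation*}
where $\nu_{ES}=\Tr_B\phi$ and the bar denotes the transpose in the chosen basis. The vector $(A_B\otimes D_E\otimes I_S)\ket\phi$ corresponds to $A_BV\,\overline{D}'^{\,\top}$ with $D'=D_E\otimes I_S$, so
\begin{equation*}
 \norm{(A_B\otimes D_E\otimes I_S)\phi}_2=\norm{A_BV\,\overline{D}'^{\,\top}}_2 .
\end{equation*}
Transposition preserves all Schatten norms, so the bookkeeping of transposes is harmless.

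\emph{Step 2: splitting $V$.} Take the polar decomposition $V=\eta_B^{1/2}U$ with $U$ a partial isometry. Then $V=U\,\overline{\nu_{ES}}^{\,1/2}$ as well, and more generally
\begin{equation*}
 V=\eta_B^{u/2}\,U\,\overline{\nu_{ES}}^{\,(1-u)/2}.
\end{equation*}
This holds because $U$ intertwines functional calculus of $V^\dagger V$ and $VV^\dagger$ on their supports.

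\emph{Step 3: H\"older.} Since $u/2+(1-u)/2=1/2$, H\"older's inequality with $\norm U_\infty\le1$ gives
\begin{equation*}
 \norm{A_BV\,\overline{D}'^{\,\top}}_2
 \le\norm{A_B\eta_B^{u/2}}_{2/u}\,
 \norm{\overline{\nu_{ES}}^{\,(1-u)/2}\,\overline{D}'^{\,\top}}_{2/(1-u)} .
\end{equation*}
Undoing the transpose, the second factor equals $\norm{D'\nu_{ES}^{(1-u)/2}}_{2/(1-u)}$, using that $\norm{X}_q=\norm{X^\dagger}_q$.

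\emph{Step 4: removing $S$.} Set $p=1/(1-u)>1$ and $Y=D_E^\dagger D_E\ge0$. Then
\begin{equation*}
 \norm{D'\nu_{ES}^{(1-u)/2}}_{2/(1-u)}^2
 =\norm{\nu_{ES}^{1/(2p)}(Y\otimes I_S)\,\nu_{ES}^{1/(2p)}}_p .
\end{equation*}
Apply Lemma~\ref{lem:dual-dpi} with $\mathcal N=\Tr_S$, whose adjoint is $Y\mapsto Y\otimes I_S$, and with $\rho=\nu_{ES}$, so that $\mathcal N(\rho)=\nu_E$. This bounds the right side by $\norm{\nu_E^{1/(2p)}Y\nu_E^{1/(2p)}}_p$, which equals $\norm{D_E\nu_E^{(1-u)/2}}_{2/(1-u)}^2$. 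Taking square roots and combining with Step 3 gives \eqref{eq:cross}.

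\emph{Main obstacle.} The only delicate point is Step 2: one must check that $\eta_B$ and $\overline{\nu_{ES}}$ are matched correctly through $U$, including on singular supports. This follows from the singular value decomposition of $V$, with all powers taken on supports and zero outside them. Nothing else is more than H\"older's inequality together with Lemma~\ref{lem:dual-dpi}.
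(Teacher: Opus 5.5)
Your proposal is correct and follows the paper's argument, with the details written out. The factorization $V=\eta_B^{u/2}U(\nu_{ES}^{\top})^{(1-u)/2}$ followed by H\"older at exponents $2/u$ and $2/(1-u)$ is what the paper compresses into ``Schmidt decomposition and Schatten H\"older,'' and removing $S$ via Lemma~\ref{lem:dual-dpi} for $\Tr_S$ with $p=1/(1-u)$ and $Y=D_E^\dagger D_E$ is exactly its second step.

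One cosmetic fix: with $D'=D_E\otimes I_S$, the vector $(A_B\otimes D')\ket\phi$ corresponds to $A_BVD'^{\top}$, not $A_BV\overline{D}'^{\,\top}$. Then $(\nu_{ES}^{\top})^{(1-u)/2}D'^{\top}=(D'\nu_{ES}^{(1-u)/2})^{\top}$ gives your second factor directly.
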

\begin{proof}
Schmidt decomposition across $B:ES$ and Schatten H\"older give the right side with the second factor replaced by
$\norm{(D_E\otimes I_S)\rho_{ES}^{(1-u)/2}}_{2/(1-u)}$,
where $\rho_{ES}=\Tr_B\ket\phi\bra\phi$. Equation~\eqref{eq:dual-dpi}, applied to $\Tr_S$ with $p=1/(1-u)$ and $Y=D_E^\dagger D_E$, bounds the square of this factor by
$\norm{\nu_E^{(1-u)/2}D_E^\dagger D_E\nu_E^{(1-u)/2}}_{1/(1-u)}$.
Taking the square root proves the claim. The constant is one, independently of the purifying dimension.
\end{proof}

\subsection{Projecting the tilt back to the channel image}\label{sec:localization}
Fix $n$ and $0<u\le1/2$, and take the tilted
state $\Theta_{XB^nE^nS}(u)$ from \eqref{eq:tilted-purification}
for an $n$-use ensemble. Choose a Stinespring isometry $V:A\to BFE$ with
$\mathcal W(\rho)=\Tr_F V\rho V^\dagger$, where $F$ is unobserved,
and include $F^n$ in the purifying register $S$. Keep $F^n$ until the support repairs are complete. Write
\begin{equation}
 P_i=VV^\dagger\quad\text{on }B_iF_iE_i,
 \qquad Q_i=I-P_i.
\end{equation}
The original purifications satisfy $P_i\ket{\psi_x}=\ket{\psi_x}$.
Here the optimizing reference states are $\tau_{B^n}^\star$ and
$\sigma_{E^n}^\star$. Write
$\eta_{B_{-i},x}^\star=\Tr_{B_i}\eta_{B^n,x}^\star$ and
$\tau_{B_{-i}}^\star=\Tr_{B_i}\tau_{B^n}^\star$ for their relevant
Bob marginals. Identities on untouched tensor factors are implicit.

\paragraph{Bob's correction.}
For the split $B_i:B_{-i}$ define, with identities on $B_i$ implicit,
\begin{equation}
 C_{B,i,x}=(\eta_{B_{-i},x}^\star)^{u/2}(\tau_{B_{-i}}^\star)^{-u/2}
 (\tau_{B^n}^\star)^{u/2}(\eta_{B^n,x}^\star)^{-u/2}.
 \label{eq:CB}
\end{equation}
The bounds we need are
\begin{align}
 \left(\sum_xr_x\norm{(C_{B,i,x}-I)\sqrt{\eta_{B^n,x}^\star}}_2^2\right)^{1/2}
 &\le C_{d_B}u,
 \label{eq:Bob-L2}\\
 \norm{C_{B,i,x}(\eta_{B^n,x}^\star)^{u/2}}_{2/u}\le1.
 \label{eq:Bob-high}
\end{align}
The second is directly \eqref{eq:dual-dpi} for the partial trace over $B_i$, with $\rho=\tau_{B^n}^\star$, $p=1/u$, and $Y=(\tau_{B_{-i}}^\star)^{-u/2}(\eta_{B_{-i},x}^\star)^{u}(\tau_{B_{-i}}^\star)^{-u/2}$:
\begin{equation}
 \norm{C_{B,i,x}(\eta_{B^n,x}^\star)^{u/2}}_{2/u}^2
 =\norm{(\tau_{B^n}^\star)^{u/2}(I_{B_i}\otimes Y)(\tau_{B^n}^\star)^{u/2}}_{1/u}
 \le\norm{(\eta_{B_{-i},x}^\star)^{u}}_{1/u}=1.
\end{equation}
For \eqref{eq:Bob-L2}, define the operator on $B^n$
\begin{equation}
 T_i:=\bigl[I_{B_i}\otimes(\tau_{B_{-i}}^\star)^{-u/2}\bigr]
       (\tau_{B^n}^\star)^{u/2}-I_{B^n}.
 \label{eq:Bob-reference-deviation}
\end{equation}
It depends only on the reference state $\tau_{B^n}^\star$ and is therefore
independent of the signal label $x$. We split
\begin{equation}
 C_{B,i,x}-I=
 [(\eta_{B_{-i},x}^\star)^{u/2}(\eta_{B^n,x}^\star)^{-u/2}-I]
 +(\eta_{B_{-i},x}^\star)^{u/2}T_i(\eta_{B^n,x}^\star)^{-u/2}.
\end{equation}
The first error has state-weighted norm at most $C_{d_B}u$ by Equation~\eqref{eq:conditional-power}. For the second, use the joint concavity of the generalized Petz moment
$(A,B)\mapsto\Tr A^{1-u}T_i^\dagger B^uT_i$
\cite[Theorem~1 and Corollary~1.1]{Lieb}. Self-consistency gives
\begin{align}
 \sum_xr_x\Tr(\eta_{B^n,x}^\star)^{1-u}T_i^\dagger(\eta_{B_{-i},x}^\star)^{u}T_i
 &\le\Tr(\tau_{B^n}^\star)^{1-u}T_i^\dagger(\tau_{B_{-i}}^\star)^{u}T_i\\
 &=\norm{[I-(\tau_{B_{-i}}^\star)^{u/2}(\tau_{B^n}^\star)^{-u/2}]\sqrt{\tau_{B^n}^\star}}_2^2
 \le C_{d_B}u^2.
\end{align}
Minkowski's inequality proves \eqref{eq:Bob-L2}.

\paragraph{Joint cancellation and the support test.}
Apply Lemma~\ref{lem:transport} with $p_x=r_x(u)$ to Eve's ensemble.
Its average equals the reference by \eqref{au-selfconsistency}:
\begin{equation}
 \sigma_{E^n}^\star=\Theta_{E^n}(u)=\sum_xr_x\nu_{E^n,x}^\star
\end{equation}
with $C=E_i$, $D=E_{-i}$. Let $C_{E,i,x}$ be the resulting operator. By construction it has the form
\begin{equation}
 C_{E,i,x}=(I_{E_i}\otimes K_{E_{-i},x})(\sigma_{E^n}^\star)^{-u/2}(\nu_{E^n,x}^\star)^{u/2},
\end{equation}
and therefore, using the definition of the tilted vector,
\begin{equation}
 Q_i(C_{B,i,x}\otimes C_{E,i,x}\otimes I_S)\ket{\phi_x}=0.
 \label{eq:support-identity}
\end{equation}
Indeed, after the four tilt factors cancel, the remaining Bob and Eve operators act only on the systems outside $B_iF_iE_i$ and hence commute with $P_i$.

Use the two-term identity
\begin{equation}
 I-C_{B,i,x}\otimes C_{E,i,x}
 =(I-C_{B,i,x})\otimes I
   +C_{B,i,x}\otimes(I-C_{E,i,x}).
\end{equation}
Together with \eqref{eq:support-identity}, it bounds $Q_i\phi_x$ by Bob's error plus Bob's full correction applied to Eve's error. Equation~\eqref{eq:Bob-L2} controls the first term. For the second, apply Lemma~\ref{lem:cross} with $A_B=C_{B,i,x}$ and $D_E=I-C_{E,i,x}$, followed by \eqref{eq:Bob-high} and \eqref{eq:transport-high}. Minkowski's inequality gives
\begin{equation}
 \sqrt{\Tr Q_i\Theta(u)}\le C_{d_B,d_E}u.
 \label{eq:local-support}
\end{equation}
Thus the probability that the support test $\{P_i,Q_i\}$ rejects the
tilted state is $O_{d_B,d_E}(u^2)$, uniformly in blocklength and in
the ensemble.

\subsection{Information cost of restoring the channel output}\label{sec:restoration}
We use the sharp continuity bound for conditional entropy in trace distance
\cite{BertaSharpContinuity,ChengLiuSharpContinuity}.

\begin{lemma}[Conditional entropy continuity~\cite{BertaSharpContinuity,ChengLiuSharpContinuity}]\label{lem:entropy-continuity}
Let $\rho_{CD},\zeta_{CD}$ be states with $d_C=\dim C\ge2$ and
$T(\rho_{CD},\zeta_{CD})\le t\le1-d_C^{-2}$. Then
\begin{equation}
 |H(C|D)_\rho-H(C|D)_\zeta|
 \le t\log(d_C^2-1)+h_2(t).
 \label{eq:entropy-continuity}
\end{equation}
If $d_C=1$, both conditional entropies are zero.
\end{lemma}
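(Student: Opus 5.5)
The case $d_C=1$ is immediate, since then $H(C|D)=H(CD)-H(D)=0$ for every state. For $d_C\ge2$ I would follow the coupling strategy behind the Alicki--Fannes--Winter bound, but with the sharp flag construction of \cite{BertaSharpContinuity,ChengLiuSharpContinuity}. The aim is that the only entropic price of the coupling is the binary entropy $h_2(t)$, plus $t$ times the range of conditional entropy on a state that is nearly orthogonal to the common part. Since the statement is quoted from the literature, I would present this as a sketch and defer the sharpness details to those references.

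\emph{Step 1: common decomposition.} Write $\rho-\zeta=\Delta_+-\Delta_-$ with $\Delta_\pm\ge0$ of orthogonal support, so that $\Tr\Delta_\pm=T(\rho,\zeta)=:s\le t$. Put $\gamma=(\rho-s\Delta_+/s)/(1-s)$ when $s<1$. This gives
\begin{equation*}
\rho=(1-s)\gamma+s\,\delta_+,\qquad \zeta=(1-s)\gamma+s\,\delta_-,
\end{equation*}
where $\delta_\pm=\Delta_\pm/s$ are states on $CD$. Attach a classical flag $Z$ recording which component was drawn.

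\emph{Step 2: concavity and flags.} Concavity of conditional entropy gives a lower bound on $H(C|D)_\rho$. The upper bound comes from $H(C|D)_\rho\le H(C|DZ)+H(Z)$, where $H(Z)=h_2(s)$. Combining the two directions yields
\begin{equation*}
H(C|D)_\rho-H(C|D)_\zeta\le s\bigl[H(C|D)_{\delta_+}-H(C|D)_{\delta_-}\bigr]+h_2(s).
\end{equation*}

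\emph{Step 3: sharpening $2\log d_C$ to $\log(d_C^2-1)$.} The crude range bound on the bracket is $2\log d_C$, which is not sharp; improving it is the main obstacle. I would first try the device of the sharp references. Purify the states and replace the pair $(\delta_+,\delta_-)$ by a better-chosen decomposition, obtained from an optimal coupling in which $\zeta$ is expressed through $\rho$ mixed with a state whose component along a maximally entangled vector on $C$ and a copy of $C$ inside $D$ is removed. The extremal conditional entropies $\pm\log d_C$ are attained by orthogonal states (maximally entangled versus maximally mixed with pure $D$). The binary-entropy term then combines with a $(d_C^2-1)$-ary mixing term, giving $t\log(d_C^2-1)+h_2(t)$ in the spirit of Fannes--Audenaert. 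This is exactly the continuity estimate of \cite{BertaSharpContinuity,ChengLiuSharpContinuity}, and I would import that optimization rather than redo it.

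\emph{Step 4: monotonicity in $t$.} Finally, the map $s\mapsto s\log(d_C^2-1)+h_2(s)$ is nondecreasing on $[0,1-d_C^{-2}]$. Its derivative is $\log(d_C^2-1)+\log\frac{1-s}{s}$, which vanishes exactly at $s=1-d_C^{-2}$. Hence the bound at $s=T(\rho,\zeta)$ can be replaced by the bound at any $t\ge s$ in this range. Exchanging the roles of $\rho$ and $\zeta$ gives the absolute value.
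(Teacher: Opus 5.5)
The paper gives no proof of this lemma. It quotes the sharp bound from the two cited works and adds only the remark on $d_C=1$. Your Step~3 also imports the sharp estimate, so in the end your proposal rests on the same citation. Your $d_C=1$ argument is correct. So is the monotonicity argument of Step~4, which is what lets the bound be stated at any $t\ge T(\rho,\zeta)$ up to $1-d_C^{-2}$.

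However, Steps~1--2 are wrong for quantum states, and they misplace the difficulty.
\begin{itemize}
\item The operator $\rho-\Delta_+=\rho-(\rho-\zeta)_+$ need not be positive semidefinite, so your $\gamma$ need not be a state. Take two distinct, non-orthogonal pure states $\rho,\zeta$. Then $T(\rho,\zeta)<1$, yet no decomposition $\rho=(1-s)\gamma+s\delta_+$, $\zeta=(1-s)\gamma+s\delta_-$ with $s<1$ exists, because a pure state has no nontrivial convex decomposition. The classical maximal coupling does not carry over.
\item The correct quantum device (Winter's) mixes instead: $\frac{1}{1+s}\rho+\frac{s}{1+s}\delta_-=\frac{1}{1+s}\zeta+\frac{s}{1+s}\delta_+$. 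Concavity and the flag bound then give only $2s\log d_C+(1+s)h_2\bigl(\frac{s}{1+s}\bigr)$, and this binary term exceeds $h_2(s)$.
\item So it is not only the range $2\log d_C$ that must be sharpened. Both $h_2(t)$ and $\log(d_C^2-1)$ are the genuine content of the cited works.
\item What you describe in Step~3 is the extremal pair: a maximally entangled $\Phi$ on $C$ and a copy of $C$ in $D$, versus the isotropic state $(1-t)\Phi+\frac{t}{d_C^2-1}(I-\Phi)$. That shows the bound is attained, not that it holds.
\end{itemize}
It is fine to rely on the references, as the paper does, but Steps~1--3 should not be presented as a derivation.

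For every use of this lemma in the paper, the elementary bound from Winter's coupling would already suffice. The estimate \eqref{eq:local-private-continuity} absorbs dimension constants into $C_{d_B,d_E}t\log(e/t)$. The entropy error in the fidelity transfer only needs to be $o(k)$, which Winter's bound also gives.
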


For states $\rho,\zeta$ on $XB^nE^n$ with
$T(\rho,\zeta)\le t\le1$, the following consequence controls the
change at one channel use:
\begin{align}
&\left|\bigl[I(X:B_i|B_{-i})_\rho-I(X:E_i|E_{-i})_\rho\bigr]
       -\bigl[I(X:B_i|B_{-i})_\zeta-I(X:E_i|E_{-i})_\zeta\bigr]\right|
 \label{eq:local-private-continuity-start}\\
 &\qquad\le C_{d_B,d_E}t\log\frac{e}{t}.
 \label{eq:local-private-continuity}
\end{align}
For $t\le1/2$, apply Lemma~\ref{lem:entropy-continuity} to the four
conditional entropies and use $h_2(t)\le t\log(e/t)$; terms with a
one-dimensional conditioned system vanish. For $t>1/2$, boundedness
of these conditional entropies in the local dimensions gives the same
estimate after increasing $C_{d_B,d_E}$. At $t=0$, the right side is
understood by continuity.
If the $XB_{-i}E_{-i}$ marginals agree, the chain rule identifies the
left-hand side with the change in global private information.

Fix a state $\zeta_i$ on $B_iF_iE_i$ supported on $P_i$ and define
\begin{equation}
 \mathcal R_i(\rho)=P_i\rho P_i+
 \zeta_i\otimes\Tr_{B_iF_iE_i}(Q_i\rho Q_i).
 \label{eq:repair-channel}
\end{equation}
This channel preserves the full untouched marginal, including $X$.

\begin{proposition}[Local restoration]\label{prop:restoration}
Let $\rho_{XB^nF^nE^n}$ be a state classical on $X$, and put $q=\Tr Q_i\rho$.
Then
\begin{align}
&I(X:B^n)_\rho-I(X:E^n)_\rho
  -I(X:B^n)_{\mathcal R_i(\rho)}+I(X:E^n)_{\mathcal R_i(\rho)}
 \label{eq:restoration-cost-start}\\
 &\qquad\le C_{d_B,d_E}\sqrt q\log\frac{e}{\sqrt q}.
 \label{eq:restoration-cost}
\end{align}
\end{proposition}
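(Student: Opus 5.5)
We plan to bound the trace distance between $\rho$ and $\mathcal R_i(\rho)$ and then apply the local continuity estimate \eqref{eq:local-private-continuity-start}--\eqref{eq:local-private-continuity}, using the chain rule.

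\emph{Chain-rule reduction.} Since $\mathcal R_i$ acts only on $B_iF_iE_i$ and preserves the full untouched marginal on $XB_{-i}F_{-i}E_{-i}$, the $XB_{-i}E_{-i}$ marginals of $\rho$ and $\mathcal R_i(\rho)$ agree. The chain rule then gives
\[
I(X:B^n)-I(X:E^n)=\bigl[I(X:B_{-i})-I(X:E_{-i})\bigr]+\bigl[I(X:B_i|B_{-i})-I(X:E_i|E_{-i})\bigr].
\]
The first bracket is identical for the two states. The left side of \eqref{eq:restoration-cost-start} is therefore exactly the change in the local conditional difference, as already noted after \eqref{eq:local-private-continuity}. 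The systems $F^n$ do not enter these quantities; we only need the reduced states on $XB^nE^n$, whose trace distance is at most that of the full states.

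\emph{Trace-distance estimate.} We write
\[
\rho-\mathcal R_i(\rho)=P_i\rho Q_i+Q_i\rho P_i+Q_i\rho Q_i-\zeta_i\otimes\Tr_{B_iF_iE_i}(Q_i\rho Q_i).
\]
The last two terms each have trace norm $q$. For the cross terms, Hölder's inequality gives $\|P_i\rho Q_i\|_1\le\|P_i\sqrt\rho\|_2\|\sqrt\rho Q_i\|_2\le\sqrt{q}$, and the same bound holds for the adjoint term. Hence $T(\rho,\mathcal R_i(\rho))\le\sqrt q+q\le2\sqrt q=:t$.

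\emph{Conclusion.} If $t\le1$, we apply \eqref{eq:local-private-continuity} with this $t$. Since $t\mapsto t\log(e/t)$ is increasing on $[0,1]$, the bound becomes $2C\sqrt q\log(e/(2\sqrt q))\le 2C\sqrt q\log(e/\sqrt q)$. The factor $2$ is absorbed into $C_{d_B,d_E}$. If $2\sqrt q>1$, then $q>1/4$, and we use the trivial bound instead: the local conditional differences are bounded by constants in $d_B,d_E$ (each conditional mutual information is at most $2\log d_B$ or $2\log d_E$). Since $\sqrt q\log(e/\sqrt q)\ge\tfrac12$ when $q>1/4$, enlarging the constant covers this case. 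The case $q=0$ is trivial because then $\mathcal R_i(\rho)=\rho$.

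\emph{Main obstacle.} The only delicate point is that the local continuity estimate requires the untouched marginals to agree exactly. This holds because $\mathcal R_i$ is a channel acting only on $B_iF_iE_i$: it is trace preserving, since $\Tr P_i\rho P_i+\Tr Q_i\rho Q_i=1$ and $\Tr\zeta_i=1$, and it preserves the marginal because
\[
\Tr_{B_iF_iE_i}(P_i\rho P_i)+\Tr_{B_iF_iE_i}(Q_i\rho Q_i)=\Tr_{B_iF_iE_i}\rho,
\]
using $P_i+Q_i=I$ and cyclicity within the traced systems. The square-root dependence on $q$ comes from the coherent cross terms, and this is what produces the $\sqrt q$ in \eqref{eq:restoration-cost}.
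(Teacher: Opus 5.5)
Your proof is correct and follows the paper's route: bound $T(\rho,\mathcal R_i(\rho))$ by $O(\sqrt q)$, then invoke the local continuity estimate \eqref{eq:local-private-continuity-start}--\eqref{eq:local-private-continuity}, using that $\mathcal R_i$ preserves the $XB_{-i}E_{-i}$ marginal. The differences are cosmetic and only change the constant. The paper obtains $\|\rho-P_i\rho P_i\|_1\le2\sqrt q$ from a purification and rank-one computation, giving $T\le\tfrac32\sqrt q$, and then takes $t=\min\{1,3\sqrt q/2\}$ and uses concavity. You instead bound the cross terms by H\"older to get $T\le\sqrt q+q\le2\sqrt q$, and you treat $q>1/4$ as a separate case.
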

\begin{proof}
The projection estimate $\|\rho-P_i\rho P_i\|_1\le2\sqrt q$
follows by purifying $\rho$: the difference of the two rank-one
operators has trace norm $\sqrt{4q-3q^2}\le2\sqrt q$, and partial
trace contracts the trace norm. The replacement term in
\eqref{eq:repair-channel} is positive and has trace $q$, so the
triangle inequality gives
\begin{equation}
 T(\rho,\mathcal R_i(\rho))
 \le\tfrac12\|\rho-P_i\rho P_i\|_1+\tfrac q2
 \le\sqrt q+\tfrac q2\le\tfrac32\sqrt q.
 \label{eq:accepted-distance}
\end{equation}
Apply \eqref{eq:local-private-continuity-start}--\eqref{eq:local-private-continuity}
with $t=\min\{1,3\sqrt q/2\}$ and use marginal preservation.
Since $t\mapsto t\log(e/t)$ is concave and vanishes at zero,
$t\log(e/t)\le\tfrac32\sqrt q\log(e/\sqrt q)$.
This proves \eqref{eq:restoration-cost}, including $q=0$ by continuity.
\end{proof}

Sequential restoration channels act on disjoint output triples and
preserve each later rejected weight. Thus the conditional-entropy
telescoping argument of Leung--Smith \cite[Theorem~11]{LeungSmith},
together with \eqref{eq:local-support} and \eqref{eq:restoration-cost}, gives
\begin{equation}
 I(X:B^n)_{\Theta(u)}-I(X:E^n)_{\Theta(u)}
 \le P^{(1)}(\mathcal W^{\otimes n})
       +nC_{d_B,d_E}u\log\frac{e}{u}.
 \label{eq:private-tilted-bound}
\end{equation}
Indeed, each local cost is at most $C_{d_B,d_E}u\log(e/u)$ after
increasing the constant. The final restored state lies in the image
of $V^{\otimes n}$, so tracing out $F^n$ gives a valid cqq
channel-output ensemble.

To prove Theorem~\ref{thm:quantitative-continuity}, integrate
\eqref{eq:private-tilted-bound} using \eqref{eq:intro-integral}. This gives
\begin{equation}
 P_\alpha^{(1)}(\omega)
 \le P^{(1)}(\mathcal W^{\otimes n})
       +n\Gamma_{d_B,d_E}(\alpha),
       \qquad 1<\alpha\le2.
 \label{eq:radius-final}
\end{equation}
Here we may take
\begin{equation}
 \Gamma_{d_B,d_E}(\alpha)
 :=\frac{\alpha C_{d_B,d_E}}{\alpha-1}
 \int_0^{\frac{\alpha-1}{\alpha}}u\log\frac{e}{u}\,du.
 \label{eq:continuity-correction}
\end{equation}
The integrand extends continuously by zero at $u=0$, so
$\Gamma_{d_B,d_E}(\alpha)\to0$ as $\alpha\searrow1$.

For arbitrary physical ensembles, set
$\overline\omega=\sum_xp_x\omega_{BE,x}$ and replace each input state by
its mixture with the average input. The resulting output is
$\omega_{BE,x}(\varepsilon)=(1-\varepsilon)\omega_{BE,x}+
\varepsilon\overline\omega$ and has faithful conditional marginals
on the average supports. Apply the bound there, then use
Corollary~\ref{cor:state-continuity} to let $\varepsilon\searrow0$.
For each approximating ensemble, the cancellation is performed on
the common marginal supports. The local norm estimates use support
powers in the original tensor-product spaces and remain valid for
proper average supports. Thus \eqref{eq:radius-final} holds for all physical ensembles.

Finally, optimizing \eqref{eq:radius-final} over input ensembles and
blocklengths gives the upper bound in \eqref{eq:quantitative-continuity}.
For the reverse bound, the private coding theorem and the one-shot
converse in Theorem~\ref{thm:intro-one-shot} give
$R\le P_\alpha(\mathcal W)$ for every $R<P(\mathcal W)$; a one-message
code covers $P(\mathcal W)=0$. Thus $P(\mathcal W)\le P_\alpha(\mathcal W)$.
Since $\Gamma_{d_B,d_E}(\alpha)\to0$, this completes the proof of
Theorem~\ref{thm:quantitative-continuity} and
\eqref{eq:renyi-capacity-continuity}.

\section{Converse bounds}\label{sec:converse}

We prove an upper bound on fidelity and a lower bound on the joint
trace-distance error in terms of the R\'enyi private information of the
code's output ensemble. These bounds also control
$[p_{\rm succ}-\delta_T]_+$ from \eqref{eq:gap-definition} and provide
Theorem~\ref{thm:intro-one-shot} and the converse direction of
Theorem~\ref{thm:exact-quantum-exponent}.

Fix a $(1,M)$ secret-key transmission code $\mathcal C$ for a channel
$\mathcal W:A\to BE$, as defined in Section~\ref{sec:channel-exponent}.
Its cqq output ensemble before decoding is
\begin{equation}
 \omega_{MBE}=\frac1M\sum_{m=1}^M|m\rangle\langle m|
          \otimes\mathcal W(\rho_{A,m}).
 \label{eq:code-ensemble}
\end{equation}
For this ensemble, \eqref{eq:conditional-adjoint} defines
$\mathcal T_m^\dagger:\mathcal L(B)\to\mathcal L(E)$.
The operator $\mathcal T_m^\dagger(\Lambda_m)$ describes Eve's state on the
correct-decoding branch for message $m$.
We use $F(A,C)=\|\sqrt A\sqrt C\|_1^2$ also for positive,
possibly subnormalized operators. Applying the direct-sum formula for
fidelity to the classical registers in \eqref{eq:key-fidelity} gives
\begin{equation}
 F(\mathcal C;\mathcal W)=
 \max_{\sigma_E\in\Sc(E)}\left[\frac1M\sum_{m=1}^M
       \sqrt{F(\mathcal T_m^\dagger(\Lambda_m),\sigma_E)}\right]^2.
 \label{eq:fidelity-blocks}
\end{equation}
Thus \eqref{eq:fidelity-blocks} is an equivalent expression for the
criterion \eqref{eq:key-fidelity}, used here to prove the converse.

\begin{proposition}[One-shot fidelity converse]\label{prop:oneshot}
For every $(1,M)$ secret-key transmission code $\mathcal C$ with output
ensemble \eqref{eq:code-ensemble} and every $\alpha>1$, the fidelity in
\eqref{eq:key-fidelity} satisfies
\begin{equation}
 F(\mathcal C;\mathcal W)
 \le e^{-\frac{\alpha-1}{\alpha}[\log M-P_\alpha^{(1)}(\omega_{MBE})]}.
 \label{eq:one-shot-fidelity}
\end{equation}
\end{proposition}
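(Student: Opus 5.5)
The plan is to lower bound $e^{uP_\alpha^{(1)}(\omega_{MBE})}$ by $M^uF(\mathcal C;\mathcal W)$, where $u=(\alpha-1)/\alpha$ and $\beta=1/(1+u)$ as in \eqref{eq:alpha-coordinate}. Taking $u^{-1}\log$ of this bound and rearranging gives \eqref{eq:one-shot-fidelity}.

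\emph{Step 1: make Bob's system classical.} First I apply Bob's data processing \eqref{eq:data-processing-bob} with the decoding measurement $\mathcal N(Y)=\sum_k\Tr(\Lambda_kY)|k\rangle\langle k|$. This reduces the claim to the decoded ensemble, whose conditional Bob--Eve blocks are $\sum_k|k\rangle\langle k|\otimes\mathcal T_m^\dagger(\Lambda_k)$. By Corollary~\ref{cor:state-continuity}, the value is unaffected by restricting to the supports of the average marginals, so Bob's reference $\tau_B$ is a faithful distribution $t$ on the set of outcomes.

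\emph{Step 2: use the dual representation \eqref{eq:radius-primal} with explicit trial states.} For every fixed $t$, I pick $\sigma_E$ to be the maximizer in \eqref{eq:fidelity-blocks}. For each message I pick the point mass $\eta_{B,m}=|m\rangle\langle m|$. The operator $\tau^{-u/2}\eta_m^u\tau^{-u/2}$ then equals $t_m^{-u}|m\rangle\langle m|$. The letter-$m$ term therefore becomes $t_m^{-u}c_m$, where
\begin{equation*}
 c_m=\bigl\|\sigma_E^{u/2}\mathcal T_m^\dagger(\Lambda_m)\sigma_E^{u/2}\bigr\|_\beta .
\end{equation*}
This yields
\begin{equation*}
 e^{uP_\alpha^{(1)}}\ \ge\ \inf_t\frac1M\sum_mt_m^{-u}c_m
 =\Bigl(\sum_m(c_m/M)^{\beta}\Bigr)^{1/\beta}.
\end{equation*}
The final equality is the elementary scalar minimization: Lagrange multipliers give $t_m\propto(c_m/M)^{\beta}$.

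\emph{Step 3: compare $c_m$ with the fidelity.} This is the key step. Put $A_m=\mathcal T_m^\dagger(\Lambda_m)$. I split
\begin{equation*}
 A_m^{1/2}\sigma^{1/2}=\bigl(A_m^{1/2}\sigma^{u/2}\bigr)\,\sigma^{(1-u)/2}
\end{equation*}
and apply H\"older with exponents $2\beta$ and $2/(1-u)$. These are conjugate because $\tfrac{1+u}2+\tfrac{1-u}2=1$. Since $\|\sigma^{(1-u)/2}\|_{2/(1-u)}=1$ and $\|A_m^{1/2}\sigma^{u/2}\|_{2\beta}^2=c_m$, we get
\begin{equation*}
 \sqrt{F(A_m,\sigma_E)}\le\sqrt{c_m},\qquad\text{that is,}\qquad c_m\ge F_m:=F(A_m,\sigma_E).
\end{equation*}
The na\"ive alternative, monotonicity of the sandwiched R\'enyi divergence in its order, gives the inequality in the wrong direction. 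That is why I use H\"older directly.

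\emph{Step 4: Jensen.} Since $2\beta\ge1$, the map $s\mapsto s^{2\beta}$ is convex, so
\begin{equation*}
 \frac1M\sum_mF_m^{\beta}\ge\Bigl(\frac1M\sum_m\sqrt{F_m}\Bigr)^{2\beta}=F(\mathcal C;\mathcal W)^{\beta}.
\end{equation*}
Hence
\begin{equation*}
 \sum_m(c_m/M)^\beta\ \ge\ M^{1-\beta}F(\mathcal C;\mathcal W)^\beta .
\end{equation*}
Raising this to the power $1/\beta$ and using $(1-\beta)/\beta=u$ gives $e^{uP_\alpha^{(1)}}\ge M^uF(\mathcal C;\mathcal W)$. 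As noted at the start, this is the claim.

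The main obstacle I anticipate is handling supports correctly. Some $A_m$ may vanish, or $\sigma_E$ may be singular. If this causes trouble, I will mix $\sigma_E$ with $I_E/d_E$ and pass to the limit using continuity.
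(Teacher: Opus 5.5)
Your proof is correct, but it is organized differently from the paper's. The paper never passes to the decoded ensemble. For arbitrary fixed $\tau_B,\sigma_E$ it works in the norm representation \eqref{eq:radius-norm}:
\begin{itemize}
\item it bounds $\|\sigma_E^{u/2}\mathcal T_m^\dagger(\Lambda_m)\sigma_E^{u/2}\|_\beta\le a_mk_m$ by Schatten duality, with $a_m=\|\tau_B^{u/2}\Lambda_m\tau_B^{u/2}\|_{1/u}$;
\item it uses Araki--Lieb--Thirring and $\Lambda_m\le I_B$ to get $\sum_ma_m^{1/u}\le1$;
\item it then applies the same H\"older fidelity estimate as your Step~3, followed by Cauchy--Schwarz and $\sum_ma_m\le M^{1-u}$.
\end{itemize}
In your route, Bob's data processing \eqref{eq:data-processing-bob} turns the decoder into point masses. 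Then $a_m$ becomes $t_m^u$, and the Araki--Lieb--Thirring constraint becomes the normalization $\sum_mt_m=1$. Your exact scalar minimization and Jensen step take the place of Cauchy--Schwarz and the power-mean bound. You thus trade Araki--Lieb--Thirring for the Frank--Lieb data-processing inequality behind Lemma~\ref{lem:local-data-processing}. In return you get explicit trial states and a purely classical final optimization. The paper's argument is self-contained at the level of the code and gives \eqref{eq:decoder-norm-bound} for every pair of references. It reuses that estimate, with a different scalar H\"older step, for the trace-distance converse in Proposition~\ref{prop:oneshot-trace}.

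One step needs a line of justification. Restricting to supports does not make Bob's reference diagonal, and the infimum in \eqref{eq:radius-primal} runs over all faithful states on the outcome register. This is easy to repair. For arbitrary faithful $\tau$ and $\eta_m=|m\rangle\langle m|$, the conditional adjoint of the decoded ensemble gives
$\sum_k|\langle k|\tau^{-u/2}|m\rangle|^2\mathcal T_m^\dagger(\Lambda_k)\ge\langle m|\tau^{-u/2}|m\rangle^2A_m\ge t_m^{-u}A_m$,
where $t_m=\langle m|\tau|m\rangle$ and the last inequality is Jensen for $s\mapsto s^{-u/2}$. The $\beta$ quasi-norm is monotone in the positive-semidefinite order, so the bound follows. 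Alternatively, use the dephasing argument of Section~\ref{sec:classical}.

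It is also cleaner to state the scalar step as the H\"older inequality $\sum_mw_mt_m^{-u}\ge(\sum_mw_m^\beta)^{1/\beta}$ for every faithful distribution $t$. This covers vanishing $c_m$, where the Lagrange optimizer is not faithful. Singular $\sigma_E$ is harmless, since only positive powers of it appear.
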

\begin{proof}
Put $u=(\alpha-1)/\alpha$ and $\beta=1/(1+u)$. Fix $\tau_B\in\Sp(B)$ and
$\sigma_E\in\Sc(E)$, and set
\begin{equation}
 \gamma_{E,m}=\mathcal T_m^\dagger(\Lambda_m),\qquad
 a_m=\|\tau_B^{u/2}\Lambda_m\tau_B^{u/2}\|_{1/u},
\end{equation}
\begin{equation}
 k_m=\inf_{\nu\in\Sp(E)}
 \left\|\tau_B^{-u/2}\mathcal T_m
 (\sigma_E^{u/2}\nu^{-u}\sigma_E^{u/2})\tau_B^{-u/2}\right\|_\alpha.
\end{equation}
The average $M^{-1}\sum_m k_m$ is precisely the infimum over the
letterwise references in \eqref{eq:radius-norm}. Schatten duality and
the variational formula for the $\beta$ quasi-norm give the key estimate
\begin{align}
\|\sigma_E^{u/2}\gamma_{E,m}\sigma_E^{u/2}\|_\beta
 &=\inf_{\nu\in\Sp(E)}
   \Tr\Lambda_m\mathcal T_m(\sigma_E^{u/2}\nu^{-u}\sigma_E^{u/2})\\
 &\le a_m k_m.
\label{eq:decoder-norm-bound}
\end{align}
Araki--Lieb--Thirring \cite{Araki}, followed by
$0\le\Lambda_m\le I_B$, yields
\begin{align}
 a_m^{1/u}&\le\Tr\tau_B\Lambda_m^{1/u}\\
 &\le\Tr\tau_B\Lambda_m,
 \end{align}
 and hence, 
 \begin{align}
 \sum_m a_m^{1/u}&\le1,\\
 \sum_m a_m&\le M^{1-u}.
\end{align}
All of these inequalities include zero decoding effects.
Schatten H\"older with exponents $2/(1+u)$ and
$2/(1-u)$ gives
\begin{align}
 \sqrt{F(\gamma_{E,m},\sigma_E)}
 &\le\|\sqrt{\gamma_{E,m}}\sigma_E^{u/2}\|_{2/(1+u)}
       \|\sigma_E^{(1-u)/2}\|_{2/(1-u)}\\
 &=\|\sigma_E^{u/2}\gamma_{E,m}\sigma_E^{u/2}\|_\beta^{1/2}.
\end{align}
Consequently, \eqref{eq:decoder-norm-bound} and Cauchy--Schwarz imply
\begin{align}
 \left[\frac1M\sum_m\sqrt{F(\gamma_{E,m},\sigma_E)}\right]^2
 &\le\frac1{M^2}\left(\sum_m a_m\right)\left(\sum_m k_m\right)\\
 &\le M^{-u}\frac1M\sum_m k_m.
\end{align}
Maximize over $\sigma_E$ and minimize over $\tau_B$. Equations
\eqref{eq:fidelity-blocks} and \eqref{eq:radius-norm} give
\eqref{eq:one-shot-fidelity}.
\end{proof}

\begin{proof}[Proof of Theorem~\ref{thm:intro-one-shot}]
Apply Proposition~\ref{prop:oneshot} and use
$P_\alpha^{(1)}(\omega_{MBE})\le P_\alpha^{(1)}(\mathcal W)$.
\end{proof}

\begingroup\mdfsetup{nobreak=true}
\begin{proposition}[One-shot trace-distance converse]\label{prop:oneshot-trace}
For every $(1,M)$ secret-key transmission code $\mathcal C$ with output
ensemble \eqref{eq:code-ensemble} and every $\alpha>1$, the quantities
$p_{\rm succ},\delta_T$ in \eqref{eq:gap-definition} satisfy
\begin{align}
[p_{\rm succ}-\delta_T]_+
 &\le 1-\min_{\sigma_E\in\Sc(E)}
 T(\rho_{M\widehat M E},\kappa_{M\widehat M}\otimes\sigma_E)\label{eq:one-shot-gap}\\
 &\le e^{-\frac{\alpha-1}{2\alpha-1}[\log M-P_\alpha^{(1)}(\omega_{MBE})]}.
\label{eq:one-shot-trace}
\end{align}
Moreover,
$[p_{\rm succ}-\delta_T]_+^2\le F(\mathcal C;\mathcal W)\le p_{\rm succ}$.
\end{proposition}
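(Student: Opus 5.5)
The statement has three parts. I would prove them in order: the gap bound \eqref{eq:one-shot-gap}, the exponential bound \eqref{eq:one-shot-trace}, and finally the sandwich $[p_{\rm succ}-\delta_T]_+^2\le F(\mathcal C;\mathcal W)\le p_{\rm succ}$. Throughout, write $\gamma_{E,mm'}=\mathcal T_m^\dagger(\Lambda_{m'})$, so that $\gamma_{E,mm}=\gamma_{E,m}$ in the notation of Proposition~\ref{prop:oneshot}. Then
\[
\rho_{M\widehat ME}=\frac1M\sum_{m,m'}|mm'\rangle\langle mm'|\otimes\gamma_{E,mm'} .
\]
Since $\kappa_{M\widehat M}\otimes\sigma_E$ vanishes on the blocks with $m\neq m'$, the trace norm splits blockwise:
\[
2T(\rho_{M\widehat ME},\kappa_{M\widehat M}\otimes\sigma_E)
=(1-p_{\rm succ})+\frac1M\sum_m\|\gamma_{E,m}-\sigma_E\|_1 .
\]

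\textbf{Gap bound.} Take $\sigma_E$ to be the optimizer in the definition of $\delta_T$. The triangle inequality gives
\[
\|\gamma_{E,m}-\sigma_E\|_1\le\|\omega_{E,m}-\sigma_E\|_1+\Bigl\|\sum_{m'\ne m}\gamma_{E,mm'}\Bigr\|_1 .
\]
Averaging over $m$, the last term contributes $1-p_{\rm succ}$, so $\min_{\sigma_E}T\le 1-p_{\rm succ}+\delta_T$. Since $T\le1$, this yields $[p_{\rm succ}-\delta_T]_+\le 1-\min_{\sigma_E} T$.

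\textbf{Exponential bound.} Fix $\sigma_E$. By the block decomposition above,
\[
1-T=\frac1M\sum_m\tfrac12\bigl(\Tr\gamma_{E,m}+\Tr\sigma_E-\|\gamma_{E,m}-\sigma_E\|_1\bigr).
\]
\begin{itemize}
\item Audenaert's inequality bounds each summand by the Petz quantity $\Tr\gamma_{E,m}^{\beta}\sigma_E^{1-\beta}$, with $\beta=1/(1+u)$ as in \eqref{eq:alpha-coordinate}.
\item Araki--Lieb--Thirring with exponent $\beta<1$ goes in the favourable direction. It bounds the Petz quantity by the sandwiched one, $\|\sigma_E^{u/2}\gamma_{E,m}\sigma_E^{u/2}\|_\beta^\beta$, because $(1-\beta)/\beta=u$.
\item The key estimate \eqref{eq:decoder-norm-bound} then gives $1-T\le M^{-1}\sum_m a_m^\beta k_m^\beta$, where $a_m$ and $k_m$ are as in the proof of Proposition~\ref{prop:oneshot}.
\end{itemize}
Now I would apply H\"older with exponents $p=(1+u)/u$ and $q=1+u$, using $\sum_m a_m^{1/u}\le1$ from that proof:
\[
\sum_m a_m^\beta k_m^\beta\le\Bigl(\sum_m a_m^{1/u}\Bigr)^{u/(1+u)}\Bigl(\sum_m k_m\Bigr)^{1/(1+u)}\le M^{\beta}\Bigl(\frac1M\sum_m k_m\Bigr)^{\beta}.
\]
Hence $1-T\le M^{-u/(1+u)}\bigl(M^{-1}\sum_mk_m\bigr)^{\beta}$. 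Minimizing the right side over $\tau_B$, and then maximizing $1-T$ over $\sigma_E$, uses \eqref{eq:radius-norm}: there the average of $k_m$ optimizes to $e^{uP_\alpha^{(1)}(\omega_{MBE})}$. Since $u/(1+u)=(\alpha-1)/(2\alpha-1)$, this gives \eqref{eq:one-shot-trace}.

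\textbf{Fidelity sandwich.} The Fuchs--van de Graaf inequality $1-T(\rho,\sigma)\le\sqrt{F(\rho,\sigma)}$, applied at the trace-distance minimizer and combined with \eqref{eq:one-shot-gap}, gives $[p_{\rm succ}-\delta_T]_+^2\le F(\mathcal C;\mathcal W)$. For $F\le p_{\rm succ}$, I would apply the two-outcome measurement $\{\sum_m|mm\rangle\langle mm|,\;I-\sum_m|mm\rangle\langle mm|\}\otimes I_E$. Under it, $\rho$ maps to the distribution $(p_{\rm succ},1-p_{\rm succ})$ and $\kappa\otimes\sigma_E$ maps to $(1,0)$. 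Monotonicity of fidelity then gives $F\le p_{\rm succ}$ for every $\sigma_E$.

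The main obstacle is getting the sharper exponent $(\alpha-1)/(2\alpha-1)$ rather than the weaker $(\alpha-1)/(2\alpha)$ that Fuchs--van de Graaf together with Proposition~\ref{prop:oneshot} would give. The two points to check are that the Audenaert-plus-Araki--Lieb--Thirring chain really produces the $\beta$-quasi-norm with the correct sandwiching power, including for subnormalized $\gamma_{E,m}$, and that the H\"older exponents match the constraint $\sum_ma_m^{1/u}\le1$.
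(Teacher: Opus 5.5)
Your proposal is correct and follows essentially the same route as the paper. The gap bound is the same triangle inequality, done blockwise where the paper passes through a perfect copy of $M$. The exponential bound uses the same chain: Audenaert's Chernoff inequality, Araki--Lieb--Thirring, the key estimate \eqref{eq:decoder-norm-bound}, and scalar H\"older with exponents $(1+u)/u$ and $1+u$. The sandwich likewise follows from Fuchs--van de Graaf together with fidelity data processing under the test $M=\widehat M$.
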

\endgroup
\begin{proof}
Use the notation $u,\beta,\gamma_{E,m},a_m,k_m$ from the proof of
Proposition~\ref{prop:oneshot}, for arbitrary
$\tau_B\in\Sp(B)$ and $\sigma_E\in\Sc(E)$.
In particular, \eqref{eq:decoder-norm-bound} and
$\sum_m a_m^{1/u}\le1$ hold.

Block diagonality in $M,\widehat M$ and the
variational formula for the trace norm give
\begin{align}
&1-T(\rho_{M\widehat M E},\kappa_{M\widehat M}\otimes\sigma_E)\\
 &\quad=\frac1M\sum_m\min_{0\le D_m\le I_E}
       \{\Tr D_m\gamma_{E,m}+\Tr(I_E-D_m)\sigma_E\}.
\end{align}
The quantum Chernoff inequality \cite[Theorem~1]{Audenaert},
Araki--Lieb--Thirring, and \eqref{eq:decoder-norm-bound} bound each
minimum by
\begin{equation}
 \Tr\gamma_{E,m}^\beta\sigma_E^{1-\beta}
 \le\|\sigma_E^{u/2}\gamma_{E,m}\sigma_E^{u/2}\|_\beta^\beta
 \le(a_m k_m)^\beta.
\end{equation}
Scalar H\"older with exponents $(1+u)/u$ and $1+u$ now gives
\begin{align}
&1-T(\rho_{M\widehat M E},\kappa_{M\widehat M}\otimes\sigma_E)\\
 &\quad\le\frac1M
       \left(\sum_m a_m^{1/u}\right)^{u/(1+u)}
       \left(\sum_m k_m\right)^{1/(1+u)}\\
 &\quad\le M^{-u/(1+u)}
       \left[\frac1M\sum_m k_m\right]^{1/(1+u)}.
\end{align}
The same reference optimizations in \eqref{eq:radius-norm} prove
the exponential bound \eqref{eq:one-shot-trace}.

To compare with $[p_{\rm succ}-\delta_T]_+$, replace $\widehat M$ by a
perfect copy of $M$. This changes the decoded state by trace distance
$1-p_{\rm succ}$, while its distance from
$\kappa_{M\widehat M}\otimes\sigma_E$ becomes
$T(\rho_{ME},\pi_M\otimes\sigma_E)$. The triangle inequality gives
\begin{equation}
 T(\rho_{M\widehat M E},\kappa_{M\widehat M}\otimes\sigma_E)
 \le 1-p_{\rm succ}+T(\rho_{ME},\pi_M\otimes\sigma_E).
\end{equation}
Minimizing over $\sigma_E$ and using \eqref{eq:gap-definition} proves
the first trace-distance inequality. This is the argument of
\cite[Proposition~28]{WTB}, applied to each reference state before
optimization. Finally, $1-\sqrt F\le T$ gives
$[p_{\rm succ}-\delta_T]_+^2\le F(\mathcal C;\mathcal W)$, and
fidelity data processing under the test $M=\widehat M$ gives
$F(\mathcal C;\mathcal W)\le p_{\rm succ}$.
\end{proof}

%Applying Proposition~\ref{prop:oneshot-trace} to $\mathcal W^{\otimes n}$
%gives the corresponding trace-distance bounds for every $(n,M)$ code.

\subsection{Exponential strong converse}\label{sec:strong-converse}
Combining the one-shot converses with continuity at order one gives
exponential decay above private capacity.

\begin{corollary}[Exponential strong converse]\label{cor:quantitative-strong-converse}
For every finite-dimensional channel $\mathcal W:A\to BE$,
\begin{align}
 E_{\mathrm{sc}}(\mathcal W,R)&=0,
       &&0\le R\le P(\mathcal W),\\
 E_{\mathrm{sc}}(\mathcal W,R)&>0,
       &&R>P(\mathcal W).
 \label{eq:strong-converse-property}
\end{align}
For every $R>P(\mathcal W)$, there is a constant $c>0$ such that
all $(n,M)$ codes with $\log M\ge nR$ also satisfy
\begin{equation}
 [p_{\rm succ}-\delta_T]_+\le e^{-nc}.
 \label{eq:gap-strong-converse}
\end{equation}
\end{corollary}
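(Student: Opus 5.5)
The three claims separate: $E_{\mathrm{sc}}=0$ for $R\le P(\mathcal W)$, $E_{\mathrm{sc}}>0$ for $R>P(\mathcal W)$, and the trace-distance bound \eqref{eq:gap-strong-converse}. The second and third follow from the converse machinery already in place. For $R>P(\mathcal W)$, Theorem~\ref{thm:quantitative-continuity} gives some $\alpha\in(1,2]$ with $P_\alpha(\mathcal W)<R$. Applying Proposition~\ref{prop:oneshot} to $\mathcal W^{\otimes n}$, together with $P_\alpha^{(1)}(\omega_{MB^nE^n})\le P_\alpha^{(1)}(\mathcal W^{\otimes n})\le nP_\alpha(\mathcal W)$, yields \eqref{eq:intro-regularized-converse}. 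Hence $E_{\mathrm{sc}}(\mathcal W,R)\ge\frac{\alpha-1}{\alpha}[R-P_\alpha(\mathcal W)]>0$.

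For the trace-distance statement I would apply Proposition~\ref{prop:oneshot-trace} to $\mathcal W^{\otimes n}$ in the same way. Every $(n,M)$ code with $\log M\ge nR$ then satisfies
\[
 [p_{\rm succ}-\delta_T]_+\le \exp\Bigl\{-n\tfrac{\alpha-1}{2\alpha-1}[R-P_\alpha(\mathcal W)]\Bigr\},
\]
so we may take $c=\frac{\alpha-1}{2\alpha-1}[R-P_\alpha(\mathcal W)]>0$. Alternatively, combine $[p_{\rm succ}-\delta_T]_+^2\le F(\mathcal C;\mathcal W)$ with the fidelity bound and halve the exponent.

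For the first claim, $E_{\mathrm{sc}}\ge0$ holds trivially because $F^\star\le1$. The work is the upper bound $E_{\mathrm{sc}}(\mathcal W,R)\le0$ for $R\le P(\mathcal W)$. For $R<P(\mathcal W)$, the private coding theorem of Devetak and Cai--Winter--Yeung gives codes of rate $R$ with vanishing decoding error and secrecy error. I need these to translate into joint fidelity near one. I would argue that small $1-p_{\rm succ}$ and small $\delta_T$ give, by the triangle inequality in the proof of Proposition~\ref{prop:oneshot-trace}, a small trace distance $T(\rho_{M\widehat ME^n},\kappa\otimes\sigma)\le 1-p_{\rm succ}+\delta_T$. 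Then $F\ge(1-T)^2\to1$, so $F^\star(n,R)\to1$ and $-\frac1n\log F^\star\to0$.

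The boundary case $R=P(\mathcal W)$ is the main obstacle. Here I would use monotonicity of $F^\star(n,R)$ in $R$ together with a sub-exponential rate loss. A code of rate $R'<P(\mathcal W)$ with $\log M\ge nR'$ is not admissible at rate $R$, so instead I would argue by restriction. Take a good code of rate $R'$ on $n$ uses and pad it with a uniform extra key of $\lceil n(R-R')\rceil$ nats that is never transmitted and that Bob guesses. For a uniform guess on an independent key, success probability $e^{-n(R-R')}$ multiplies the squared fidelity, with the Eve factor unchanged. This gives $-\frac1n\log F^\star(n,R)\le (R-R')+o(1)$. Letting $R'\nearrow P(\mathcal W)$ gives $E_{\mathrm{sc}}(\mathcal W,P(\mathcal W))=0$, and the same padding covers $P(\mathcal W)=0$ using one-message codes. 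I would check that the product-key fidelity factorizes exactly via the block formula \eqref{eq:fidelity-blocks}: the padded key contributes $\sqrt{F}$ factors $e^{-n(R-R')/2}$ per block average, hence the squared factor.
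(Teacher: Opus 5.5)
Your proposal is correct and follows essentially the same route as the paper. Continuity (Theorem~\ref{thm:quantitative-continuity}) fixes $\alpha>1$ with $P_\alpha(\mathcal W)<R$, and Propositions~\ref{prop:oneshot} and~\ref{prop:oneshot-trace} applied to $\mathcal W^{\otimes n}$ give the positive exponents. The private coding theorem, together with padding by an independent uniformly guessed label (which divides the fidelity by the label size via \eqref{eq:fidelity-blocks}), gives the zero exponent up to and including $R=P(\mathcal W)$, and a one-message code handles $P(\mathcal W)=0$. Your conversion of vanishing decoding and secrecy errors into joint fidelity near one, via the triangle inequality and $1-\sqrt F\le T$, spells out a step the paper leaves implicit.
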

\begin{proof}
By Theorem~\ref{thm:quantitative-continuity}, for $R>P(\mathcal W)$
we may choose a fixed $\alpha>1$ such that $P_\alpha(\mathcal W)<R$.
Propositions~\ref{prop:oneshot} and~\ref{prop:oneshot-trace}, applied
to $\mathcal W^{\otimes n}$, then give exponential decay of fidelity
and of the reliability--secrecy gap, with a positive decay rate
independent of $n$.

The private coding theorem gives $E_{\mathrm{sc}}(\mathcal W,R)=0$
for $R<P(\mathcal W)$. To a code of rate $r<P(\mathcal W)$, append
an independent label of size $\lceil e^{n(R-r)}\rceil$, ignored by the
encoder and guessed uniformly by Bob. Equation~\eqref{eq:fidelity-blocks}
shows that this divides fidelity by the label size, so
$E_{\mathrm{sc}}(\mathcal W,R)\le R-r$ for $R\ge r$.
Letting $r\uparrow P(\mathcal W)$ proves the zero exponent at capacity.
When $P(\mathcal W)=0$, a one-message code gives that endpoint directly.
\end{proof}

\section{Achievability of the strong converse exponent}\label{sec:coding}

We prove the upper bound on $E_{\mathrm{sc}}$ that matches the
converse in Section~\ref{sec:converse}. The argument has four steps:
\begin{enumerate}
\item Achieve the variational exponent. Use ordinary private coding
on a modified channel, then transfer its performance to the physical
channel with a relative-entropy penalty. Optimizing the modification
gives the bound in terms of $\widetilde P_\alpha^{(1)}$
(Proposition~\ref{alt:change-channel}).
\item Compare the two quantities. On tensor powers, permutation
symmetry allows fixed universal states to replace the optimized Bob
and Eve reference states. Local pinching then bounds the remaining
difference between the trace and variational formulations by a cost
that vanishes per channel use.
\item Transfer the coding construction. Some pinches depend on the
input word, so codes for the pinched outputs do not automatically
apply to the original channel. Invariant acceptance tests allow this
transfer and yield the exponent in terms of $P_\alpha^{(1)}$
(Proposition~\ref{alt:operational-bridge}).
\item Regularize by blocking. Treat arbitrary input blocks as single
channel uses and optimize over their ensembles and lengths. This
gives the exponent in terms of $P_\alpha(\mathcal W)$
(Theorem~\ref{shell-interpolation}).
\end{enumerate}
The second and third steps are the main comparison and coding arguments;
the final step uses blocking and the convexity of the operational exponent.
We first record the elementary operational properties needed below.
\begin{lemma}[Existence and convexity of the strong converse exponent]\label{lem:operational-convexity}
For a finite-dimensional channel $\mathcal W$, the liminf in
\eqref{eq:operational-exponent} is a limit for every $R\ge0$.
As a function of $R\in[0,\infty)$, $E_{\mathrm{sc}}(\mathcal W,R)$ is
convex, nondecreasing, and $1$-Lipschitz. Moreover,
\begin{equation}
 0\le E_{\mathrm{sc}}(\mathcal W,R)\le R,\qquad
 E_{\mathrm{sc}}(\mathcal W,R)\ge R-\log d_B.
\end{equation}
\end{lemma}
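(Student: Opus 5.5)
The plan is to work with $f_n(R):=-\log F^\star(n,R;\mathcal W)$ and show that it is subadditive along blocklengths in a way that tolerates rate changes. Existence of the limit then follows from a Fekete-type argument, and convexity follows from concatenating codes of two different rates.

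\textbf{Concatenation.} Take an $(n_1,M_1)$ code and an $(n_2,M_2)$ code, and use them on disjoint channel blocks with independent uniform messages. This gives an $(n_1+n_2,M_1M_2)$ code. The ideal key factorizes, $\kappa_{M_1M_2\widehat M_1\widehat M_2}=\kappa_1\otimes\kappa_2$. Choosing the product reference $\sigma_{E^{n_1}}\otimes\sigma_{E^{n_2}}$ in \eqref{eq:key-fidelity} and using multiplicativity of fidelity under tensor products gives $F(\mathcal C_1\otimes\mathcal C_2)\ge F(\mathcal C_1)F(\mathcal C_2)$. Hence
\[
F^\star\!\left(n_1+n_2,\tfrac{n_1R_1+n_2R_2}{n_1+n_2}\right)\ge F^\star(n_1,R_1)\,F^\star(n_2,R_2).
\]

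\textbf{Rate adjustments.} Two further monotonicity facts are needed.
\begin{itemize}
\item \emph{Lowering the message size.} Restricting to a subset of messages does not preserve the fidelity exactly. Instead I use the label-padding trick from the proof of Corollary~\ref{cor:quantitative-strong-converse}: appending a uniformly guessed label of size $L$ multiplies the fidelity by exactly $1/L$, by \eqref{eq:fidelity-blocks}. This gives $F^\star(n,R+\delta)\ge e^{-n\delta-1}F^\star(n,R)$, allowing for ceiling effects. It yields the $1$-Lipschitz upper bound $E(R')\le E(R)+(R'-R)$.
\item \emph{Monotonicity in $R$.} Monotonicity is immediate, since the supremum defining $F^\star(n,R)$ is over codes with $\log M\ge nR$.
\end{itemize}
For extra channel uses, a code of length $n$ extends to length $n+k$ by sending arbitrary inputs on the last $k$ uses; Eve's extra output is product, so this costs nothing except the small rate dilution $nR/(n+k)$. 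That dilution is absorbed by the Lipschitz/padding bound. With these facts, a standard Fekete argument on $f_n(R)$ (subadditive up to $O(1)$ terms, applied with fixed $R$) gives $\lim_n f_n(R)/n=\inf_n f_n(R)/n$ for rational-friendly rates. The Lipschitz bound then extends this to all $R$, so the liminf is a limit.

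\textbf{Convexity and bounds.} Convexity follows from the concatenation inequality with $n_1=\lambda n$ and $n_2=(1-\lambda)n$, after passing to limits; integrality is handled again by Lipschitz continuity. The upper bound $E\le R$ comes from the one-message code (fidelity $1$) padded with a label of size $e^{nR}$. The lower bound $E\ge R-\log d_B$ comes from $F\le p_{\rm succ}$ in Proposition~\ref{prop:oneshot-trace}, combined with the standard bound $p_{\rm succ}\le d_B^n/M$: indeed $\sum_m\Tr\Lambda_m\omega_m\le\sum_m\Tr\Lambda_m= d_B^n$. Nonnegativity of $E$ is trivial since $F\le1$.

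\textbf{Main obstacle.} I expect the main obstacle to be the bookkeeping in the Fekete step when combining blocklengths with non-divisible rates. I would handle it by first proving existence of the limit along multiples of a fixed $n_0$, and then controlling intermediate blocklengths with the filler-uses argument plus the Lipschitz padding estimate.
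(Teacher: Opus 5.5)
Your proposal is correct and follows essentially the same route as the paper: product codes plus Fekete's lemma for existence of the limit, label padding with uniform guessing for the $1$-Lipschitz bound and for $E_{\mathrm{sc}}\le R$, concatenation at two rates for convexity, and $F\le p_{\rm succ}\le d_B^n/M$ for the lower bound. The bookkeeping you flag as the main obstacle does not arise. Taking $R_1=R_2=R$ in your concatenation inequality already gives exact subadditivity $f_{n_1+n_2}(R)\le f_{n_1}(R)+f_{n_2}(R)$ at every fixed $R$, with $f_n(R)<\infty$ by the padded one-message code, so Fekete's lemma applies directly without filler uses, $O(1)$ corrections, or a restriction to special rates.
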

\begin{proof}
For the optimal fidelity in \eqref{eq:optimal-fidelity}, product codes give
\begin{equation}
 F^\star(n+m,R;\mathcal W)\ge
 F^\star(n,R;\mathcal W)F^\star(m,R;\mathcal W).
\end{equation}
Fekete's lemma therefore proves existence of the limit. A constant
encoder with uniform guessing has fidelity $1/\lceil e^{nR}\rceil$,
which gives $0\le E_{\mathrm{sc}}(\mathcal W,R)\le R$.
The bound $F(\mathcal C;\mathcal W)\le p_{\rm succ}\le d_B^n/M$
gives $E_{\mathrm{sc}}(\mathcal W,R)\ge R-\log d_B$.

For $R'\ge R$, append an independent label of size
$\lceil e^{n(R'-R)}\rceil$, ignored by the encoder and guessed uniformly
by Bob. This divides the fidelity in \eqref{eq:fidelity-blocks} by that
label size. Together with monotonicity in the rate, it gives
\begin{equation}
 0\le E_{\mathrm{sc}}(\mathcal W,R')-E_{\mathrm{sc}}(\mathcal W,R)
 \le R'-R.
\end{equation}
Concatenating codes at different rates proves convexity, with continuity
covering arbitrary rate mixtures.
\end{proof}

Grouping channel uses into blocks also gives
\begin{equation}
 E_{\mathrm{sc}}(\mathcal W^{\otimes k},R)
 =kE_{\mathrm{sc}}(\mathcal W,R/k).
 \label{eq:exponent-block-scaling}
\end{equation}

\subsection{Achievability of the variational exponent}\label{sec:variational-achievability}

The first step achieves the exponent expressed through the variational
quantity in Definition~\ref{alt:variational-definition}.

\begin{proposition}[Achievability of the variational exponent]
\label{alt:change-channel}
For every finite cqq output ensemble $\omega_{XBE}$ of a channel
$\mathcal W:A\to BE$ and every $R\ge0$,
\begin{equation}
 E_{\mathrm{sc}}(\mathcal W,R)
 \le\sup_{\alpha>1}\frac{\alpha-1}{\alpha}
       \left[R-\widetilde P_\alpha^{(1)}(\omega)\right].
 \label{alt:variational-achievable}
\end{equation}
\end{proposition}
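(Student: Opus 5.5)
The plan is a change of measure in the spirit of Mosonyi and Ogawa. Fix a cqq state $\theta_{XBE}\ll\omega_{XBE}$ and write $I_\theta:=I(X:B)_\theta-I(X:E)_\theta$. The goal is the operational bound
\begin{equation}
 E_{\mathrm{sc}}(\mathcal W,R)\le D(\theta\|\omega)+[R-I_\theta]_+ .
 \label{eq:plan-theta-bound}
\end{equation}
To get it, use Devetak's random private coding built from the tilted letter states $\theta_{BE,x}$ instead of the physical $\omega_{BE,x}$. Codewords are drawn from the type class of (an $n$-type approximation of) $\theta_X$, with the usual privacy-amplification binning and a decoder $\{\Lambda_m\}$ on $B^n$. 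The coding argument only uses the cqq source $\theta^{\otimes n}$, not the fact that it comes from a channel. So for every $r<I_\theta$ it gives codes whose squared key fidelity under $\theta$ tends to one. Physically, Alice sends the input words $\rho_{A,x^n}$, so Bob and Eve actually receive the blocks $\omega_{BE,x^n}$.

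The key step, which I expect to be the main obstacle, is to transfer the fidelity from $\theta$ to $\omega$ at a cost of only $e^{-n[D(\theta\|\omega)+o(1)]}$. The classical analogue rests on data processing of relative entropy under a binary test, giving
\begin{equation*}
 \Tr\Pi\,\omega^{\otimes n}\ \ge\ \exp\{-(D(\theta^{\otimes n}\|\omega^{\otimes n})+\log2)/\Tr\Pi\,\theta^{\otimes n}\}.
\end{equation*}
Squared fidelity is not linear in the state, so I would first write the code's fidelity as the success probability of a single test on an enlarged system. Concretely, I would fix Eve's reference to be the $\theta$-optimal $\sigma_{E^n}$ for the $\theta$-code. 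I would then look for an operator $\Pi$ that acts on $X^nB^nE^n$ only, is independent of which state is plugged in, and satisfies $0\le\Pi\le I$ and $\sqrt{F(\rho,\kappa\otimes\sigma_{E^n})}\ge\Tr\Pi\rho$ for every $\rho$, with near equality at $\rho=\theta^{\otimes n}$ restricted to the code.

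For this I would try the pretty-good-type bound $\sqrt F(\rho,\tau)\ge\Tr\rho^{1/2}\tau^{1/2}$, applied blockwise through \eqref{eq:fidelity-blocks}. I would combine it with the cq structure of the code, conditioning on codewords so that only the $BE$ blocks are quantum. The aim is to reduce to a monotone functional to which relative-entropy data processing applies, with $D(\theta_{X^nB^nE^n}\|\omega_{X^nB^nE^n})=n D(\theta\|\omega)+o(n)$ for constant-composition codebooks. If this linearization fails, the fallback is to replace $D$ by a sandwiched divergence of order close to one and let that order tend to one at the end. In either case the outcome should be codes of rate $r$ with physical fidelity at least $e^{-n[D(\theta\|\omega)+o(1)]}$.

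Rates above $r$ are handled as in the proof of Lemma~\ref{lem:operational-convexity}. Append an independent uniformly guessed label of size $\lceil e^{n(R-r)}\rceil$; this costs exactly the factor $e^{-n(R-r)}$. Letting $r\uparrow I_\theta$ yields \eqref{eq:plan-theta-bound}.

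It remains to match \eqref{eq:plan-theta-bound} with the right-hand side of \eqref{alt:variational-achievable}. With $u=(\alpha-1)/\alpha\in(0,1)$ the target reads
\begin{equation*}
 \sup_{u\in(0,1)}\inf_{\theta}\bigl\{u(R-I_\theta)+D(\theta\|\omega)\bigr\},
\end{equation*}
while \eqref{eq:plan-theta-bound} gives $\inf_\theta\sup_{u\in[0,1]}\{\cdots\}=\inf_\theta\{D(\theta\|\omega)+[R-I_\theta]_+\}$. The objective is affine in $u$, but it need not be convex in $\theta$, because $I_\theta$ is not concave. So rather than invoking a minimax theorem directly, I would convexify operationally. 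Lemma~\ref{lem:operational-convexity} makes $E_{\mathrm{sc}}(\mathcal W,\cdot)$ convex, and time-sharing codes built from finitely many $\theta_j$ shows that $E_{\mathrm{sc}}(\mathcal W,R)$ lies below the lower convex envelope in $R$ of $\inf_\theta\{D+[R-I_\theta]_+\}$.

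Finally, write $\varphi(R)$ for this envelope, where $\varphi(R)=\inf_\theta\{D(\theta\|\omega)+[R-I_\theta]_+\}$. The envelope is a closed convex function of $R$ with slopes in $[0,1]$. By Fenchel duality, such a function equals the supremum of its supporting lines $uR-\varphi^*(u)$ over $u\in[0,1]$, where $\varphi^*(u)=\sup_R\{uR-\varphi(R)\}$. Computing the conjugate gives $\varphi^*(u)=u\,\widetilde P^{(1)}_{1/(1-u)}(\omega)$ for $u\in(0,1)$. The endpoint slopes $u=0$ and $u=1$ are covered by taking limits $\alpha\searrow1$ and $\alpha\to\infty$. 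This yields \eqref{alt:variational-achievable}.
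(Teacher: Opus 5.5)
Your overall architecture matches the paper's: code on the letters $\theta_{BE,x}$, pay an exponential price to return to $\omega_{BE,x}$, pad the rate, then dualize in $u$. However, the transfer step carries the whole proof, and there it has a genuine gap: the test-based linearization you propose cannot work.

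Suppose $0\le\Pi\le I$ satisfies $\Tr\Pi\rho\le\sqrt{F}$ for every state $\rho$ of the code's registers. Every code state is a mixture of pure states with a definite message $m$. After decoding, such a pure state has root fidelity at most $M^{-1/2}$ with $\kappa\otimes\sigma$, since only the $|mm\rangle$ block contributes and $\kappa$ gives it weight $1/M$. By linearity, $\Tr\Pi\rho\le M^{-1/2}$ for every code state. So near equality at the $\theta$-code state, whose root fidelity tends to one, is impossible. Dividing by $\Tr\Pi\theta$ in your change-of-measure bound then destroys the exponent, and the blockwise version via \eqref{eq:fidelity-blocks} hits the same obstruction.

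Moreover, even an ideal test with $\Tr\Pi\theta\to1$ would only give $\sqrt F\ge e^{-nD(1+o(1))}$, i.e.\ squared fidelity $e^{-2nD}$. A linear bound on root fidelity doubles the exponent, contrary to the $e^{-n[D+o(1)]}$ you then assert. The sandwiched-divergence fallback is not specified enough to repair this.

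The missing idea is Lemma~\ref{lem:relative-entropy-fidelity}. For any state $\xi$ and positive $A,B$, Golden--Thompson and Gibbs' variational principle give
\[
 \sqrt{F(A,B)}\ge\Tr\sqrt A\sqrt B\ge\Tr e^{(\log A+\log B)/2}\ge e^{-[D(\xi\|A)+D(\xi\|B)]/2}.
\]
You had the first inequality but not this link to relative entropy. Make the following choices:
\begin{itemize}
\item $\xi$ is the normalized correct-decoding state of the $\theta$-code on $ME$;
\item $A$ is the subnormalized correct-decoding state of the same code on the physical channel;
\item $B=\pi_M\otimes\xi_E$, which also fixes Eve's reference.
\end{itemize}
Then $D(\xi\|B)=\log M-H(M|E)_\xi=o(n)$ by reliability and secrecy of the $\theta$-code. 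Data processing under the decode-and-accept map, completed by a failure flag, gives $D(\xi\|A)\le[D(\rho^\theta\|\rho^\omega)+h_2(a)]/a\approx nd_\theta$, where $d_\theta=\sum_xq_xD(\theta_{BE,x}\|\omega_{BE,x})\le D(\theta\|\omega)$. The two divergences add inside $-\log F$, so $D$ enters only once and no test is needed.

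Your final duality step is a valid alternative to the paper's use of Sion; the paper uses the same conjugate device in Theorem~\ref{shell-interpolation}. It needs two adjustments: take $\varphi$ and its conjugate over all real $R$, so that $\sup_R\{uR-[R-I_\theta]_+\}=uI_\theta$ even when $I_\theta<0$; and extend $E_{\mathrm{sc}}$ by zero to negative rates.

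Your stated reason for avoiding Sion is wrong, however. Up to a linear term, $uI_\theta-D(\theta\|\omega)$ equals $(1-u)H(XBE)+uH(E|XB)+uH(X|E)+uH(B)$. This is concave in $\theta$ for $u\in[0,1]$, so Sion applies directly.
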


The proof uses the following relative-entropy bound on fidelity.

\begin{lemma}[Relative-entropy fidelity bound]\label{lem:relative-entropy-fidelity}
Let $\rho\ll\zeta$ be states and let $\mathcal A$ be a completely
positive, trace-nonincreasing map with output registers $ME$, where
$M$ is classical. Put $a=\Tr\mathcal A(\rho)>0$ and
$\xi_{ME}=\mathcal A(\rho)/a$, and let $\pi_M$ be uniform on $M$.
Then
\begin{equation}
 \log F(\mathcal A(\zeta),\pi_M\otimes\xi_E)
 \ge H(M|E)_\xi-\log|M|
             -\frac{D(\rho\|\zeta)+h_2(a)}a.
 \label{eq:relative-entropy-fidelity}
\end{equation}
\end{lemma}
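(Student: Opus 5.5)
The plan is to bound $\log F$ by two relative entropies measured from the normalized post-map state $\xi_{ME}$. The first measures how far $\xi$ is from the unnormalized operator $\mathcal A(\zeta)$. The second measures how far $\xi$ is from the target $\pi_M\otimes\xi_E$. The first is then controlled by data processing from $D(\rho\|\zeta)$, and the second is computed exactly. Throughout, the formula $D(P\|Q)=\Tr P(\log P-\log Q)$ is used verbatim when the second argument is positive but not normalized.

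\emph{Step 1 (a fidelity ``triangle'' bound).} For a state $\xi$ and positive operators $P,Q$ with $\xi\ll P$ and $\xi\ll Q$, I will show
\begin{equation*}
 \log F(P,Q)\ge -D(\xi\|P)-D(\xi\|Q).
\end{equation*}
In the classical case this is Jensen's inequality. Writing $\sqrt F=\sum_k\sqrt{P_kQ_k}$ and weighting by $\xi_k$ gives
\begin{equation*}
 \log\sum_k\xi_k\sqrt{P_kQ_k}/\xi_k\ge\sum_k\xi_k\log\bigl(\sqrt{P_kQ_k}/\xi_k\bigr)=-\tfrac12\bigl[D(\xi\|P)+D(\xi\|Q)\bigr].
\end{equation*}
For the quantum case I use the Fuchs--Caves theorem: $F(P,Q)$ equals the minimum over POVMs of the classical fidelity of the outcome distributions, and this extends to unnormalized $P,Q$ by homogeneity. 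For the optimal POVM, apply the classical bound with $\xi$'s outcome distribution. Then use data processing of $D$ under the measurement, which holds for arbitrary positive second arguments. This is the step I expect to be the main obstacle. It needs care only in justifying Fuchs--Caves for subnormalized operators, and in keeping the support conditions, which are inherited by the measured distributions.

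\emph{Step 2 (the target term).} Take $Q=\pi_M\otimes\xi_E$. Since $M$ is classical, $\xi_{ME}\le I_M\otimes\xi_E$, so the supports are fine. Directly,
\begin{equation*}
 D(\xi_{ME}\|\pi_M\otimes\xi_E)=\log|M|-H(M|E)_\xi.
\end{equation*}

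\emph{Step 3 (the change-of-state term).} Take $P=\mathcal A(\zeta)$ and put $b=\Tr\mathcal A(\zeta)\le1$. Complete $\mathcal A$ to a trace-preserving map $X\mapsto\mathcal A(X)\oplus\mathcal A^c(X)$ with a classical flag, where $\mathcal A^c$ is the complementary trace-nonincreasing part (for example $X\mapsto\Tr[(I-\mathcal A^\dagger(I))X]$). Data processing and the direct-sum decomposition give
\begin{equation*}
 D(\rho\|\zeta)\ge D\bigl(a\xi\|\mathcal A(\zeta)\bigr)+(1-a)\log\frac{1-a}{1-b}
 =aD(\xi\|\mathcal A(\zeta))+a\log a+(1-a)\log\frac{1-a}{1-b}.
\end{equation*}
Because $1-b\le1$, the last term is at least $(1-a)\log(1-a)$. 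Hence
\begin{equation*}
 D(\xi\|\mathcal A(\zeta))\le\frac{D(\rho\|\zeta)+h_2(a)}{a}.
\end{equation*}
The hypothesis $\rho\ll\zeta$ gives $\xi\ll\mathcal A(\zeta)$, because $\mathcal A$ is completely positive.

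Combining Steps 1--3 with $P=\mathcal A(\zeta)$ and $Q=\pi_M\otimes\xi_E$ yields \eqref{eq:relative-entropy-fidelity}.
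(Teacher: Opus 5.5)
Your proof is correct, and its skeleton matches the paper's. You bound $\log F(P,Q)$ below by $-D(\xi\|P)-D(\xi\|Q)$ for a state $\xi$, evaluate $D(\xi_{ME}\|\pi_M\otimes\xi_E)=\log|M|-H(M|E)_\xi$, and control $D(\xi\|\mathcal A(\zeta))$ by data processing for the flag-completed channel, with the same $a\log a+(1-a)\log\frac{1-a}{1-b}\ge-h_2(a)$ step.

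The only genuine difference is how you prove the two-relative-entropy fidelity bound. The paper obtains it from the chain
\[
 e^{-[D(\xi\|A)+D(\xi\|B)]/2}\le\Tr e^{(\log A+\log B)/2}\le\Tr\sqrt A\sqrt B\le\sqrt{F(A,B)},
\]
using Gibbs' variational principle followed by Golden--Thompson. You instead reduce to the classical case through the Fuchs--Caves measured characterization of fidelity, apply Jensen there, and lift back using monotonicity of $D$ under the optimal measurement.

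The paper's route is a three-line operator-inequality chain. It actually proves a stronger intermediate statement, with the Petz overlap $\Tr\sqrt A\sqrt B$ (and even the log-Euclidean one) in place of $\sqrt F$. However, it works with $\log A$ and $\log B$, so singular operators require the approximation argument the paper mentions.

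Your route needs no matrix trace inequality beyond Fuchs--Caves, which extends to unnormalized operators by homogeneity, as you note. It also handles the support conditions directly, because $\xi\ll P$ is inherited by the measured outcome distributions. You would not even need attainment of the Fuchs--Caves minimum: an infimizing sequence of POVMs gives the same bound in the limit.
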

\begin{proof}
For a state $\xi$ and positive operators $A,B$, Gibbs' variational
principle and Golden--Thompson give
\begin{align}
 e^{-[D(\xi\|A)+D(\xi\|B)]/2}
 &\le\Tr e^{(\log A+\log B)/2},\\
 &\le\Tr\sqrt A\sqrt B,\\
 &\le\sqrt{F(A,B)}.
 \label{eq:fidelity-relative-entropy-variational}
\end{align}
Here $D(\xi\|A)=\Tr\xi(\log\xi-\log A)$ also for subnormalized
$A$; singular operators follow by approximation.
Complete $\mathcal A$ with a failure flag and write
$b=\Tr\mathcal A(\zeta)$. Relative-entropy data processing yields
\begin{align}
 D(\rho\|\zeta)
 &\ge aD(\xi\|\mathcal A(\zeta))
       +a\log a+(1-a)\log\frac{1-a}{1-b},\\
 &\ge aD(\xi\|\mathcal A(\zeta))-h_2(a).
\end{align}
Apply \eqref{eq:fidelity-relative-entropy-variational} with
$A=\mathcal A(\zeta)$ and $B=\pi_M\otimes\xi_E$, and use
$D(\xi\|B)=\log|M|-H(M|E)_\xi$.
\end{proof}

\begin{proof}[Proof of Proposition~\ref{alt:change-channel}]
Fix $\theta_{XBE}\ll\omega_{XBE}$, write its letter probabilities
as $q_x$, and put
\begin{align}
 I_\theta&:=I(X:B)_\theta-I(X:E)_\theta,\label{alt:aux-private}\\
 d_\theta&:=\sum_xq_xD(\theta_{BE,x}\|\omega_{BE,x}).\label{alt:aux-cost}
\end{align}
We first show the change-of-channel bound
\begin{equation}
 E_{\mathrm{sc}}(\mathcal W,R)\le d_\theta+[R-I_\theta]_+.
 \label{alt:change-channel-exponent}
\end{equation}
Suppose $I_\theta>0$, and fix $0<r<I_\theta$ and $\delta>0$.
Apply the ordinary private coding theorem to the modified channel
$x\mapsto\theta_{BE,x}$ with distribution $q$. Its typical-input
construction \cite[Theorem~1 and its direct proof]{Devetak} gives
codes with $M_k=\lceil e^{kr}\rceil$, vanishing decoding error and
mutual-information leakage, using only $q$-typical input words. Choose the typicality
tolerance so that every encoded word satisfies
\begin{equation}
 \sum_{i=1}^kD(\theta_{BE,x_i}\|\omega_{BE,x_i})
 \le k(d_\theta+\delta).
 \label{alt:typical-divergence-cost}
\end{equation}
Let $\rho_k^\theta$ and $\rho_k^\omega$ be the states of the uniform
message and the channel outputs, with the same stochastic encoder
applied to the modified and physical channels. Additivity and joint
convexity of relative entropy give
\begin{equation}
 D(\rho_k^\theta\|\rho_k^\omega)\le k(d_\theta+\delta).
 \label{alt:code-relative-entropy}
\end{equation}

Let $\mathcal A_k$ apply Bob's decoder, retain correct decoding,
and discard his output copy of the message. On the modified channel,
$a_k=\Tr\mathcal A_k(\rho_k^\theta)\to1$ and
$I(M:E^k)_{\rho_k^\theta}\to0$. Write $\xi_k$ for the normalized
accepted state. Conditioning on the binary correct-decoding flag
and using the conditional-entropy chain rule gives
\begin{align}
 D(\xi_k\|\pi_{M_k}\otimes(\xi_k)_{E^k})
 &=\log M_k-H(M|E^k)_{\xi_k},\\
 &\le\frac{I(M:E^k)_{\rho_k^\theta}+h_2(a_k)}{a_k}
 \longrightarrow0.
 \label{alt:code-entropy-deficit}
\end{align}
The fidelity of the physical code is at least the fidelity of its
correct-decoding part with $\pi_{M_k}\otimes(\xi_k)_{E^k}$.
Lemma~\ref{lem:relative-entropy-fidelity} therefore gives
\begin{equation}
 -\frac1k\log F(\mathcal C_k;\mathcal W)
 \le\frac{d_\theta+\delta+h_2(a_k)/k}{a_k}+o(1).
 \label{alt:change-channel-fidelity}
\end{equation}
Take $k\to\infty$ and then $\delta\searrow0$. Thus
$E_{\mathrm{sc}}(\mathcal W,r)\le d_\theta$ for $0<r<I_\theta$.
Monotonicity, continuity and the $1$-Lipschitz bound in
Lemma~\ref{lem:operational-convexity} extend this to
\eqref{alt:change-channel-exponent}. If $I_\theta\le0$, that bound
already follows from $E_{\mathrm{sc}}(\mathcal W,R)\le R$.

It remains to optimize over $\theta$. The relative-entropy chain rule gives
\begin{equation}
 D(\theta\|\omega)=D(q\|p)+d_\theta\ge d_\theta.
\end{equation}
For $0\le u\le1$, $uI_\theta-D(\theta\|\omega)$ is concave in
$\theta$: apart from the linear term $\Tr\theta\log\omega$, it equals
\begin{equation}
 (1-u)H(XBE)_\theta+uH(E|XB)_\theta
       +uH(X|E)_\theta+uH(B)_\theta.
 \label{eq:variational-objective-concavity}
\end{equation}
The supported cqq state space is compact, so Sion's minimax theorem
\cite{Sion} gives
\begin{align}
 \inf_{\theta\ll\omega}\{D(\theta\|\omega)+[R-I_\theta]_+\}
 &=\sup_{0\le u\le1}\inf_{\theta\ll\omega}
       \{D(\theta\|\omega)+u(R-I_\theta)\},\\
 &=\sup_{\alpha>1}\frac{\alpha-1}{\alpha}
       \left[R-\widetilde P_\alpha^{(1)}(\omega)\right].
 \label{alt:variational-exponent-minimax}
\end{align}
In the last line, set $u=(\alpha-1)/\alpha$; compactness and continuity
allow the endpoint values to be recovered as limits from $0<u<1$.
Together with \eqref{alt:change-channel-exponent}, this proves the claim.
\end{proof}

\subsection{Comparison by local pinching}\label{sec:pinching-comparison}

We now compare the two quantities. We first replace their reference
states by common universal states. These dominate every permutation-invariant state up to a
polynomial factor, which becomes an $O(\log n)$ cost after taking
logarithms. The conditional auxiliary optimizations remain.
We then use the log-Euclidean pinching approach \cite{MosonyiOgawa}
on both Bob's and Eve's systems to control their remaining difference.
Section~\ref{sec:pinching-transfer} then transfers the coding tests to
the original channel.

\begingroup\mdfsetup{nobreak=true}
\begin{proposition}[Asymptotic comparison by local pinching]
\label{alt:controlled-comparison}
For every finite cqq state $\omega_{XBE}$, there are ensembles
$\widehat\omega_n$ obtained by locally pinching the conditional
outputs of $\omega^{\otimes n}$, independently of $\alpha$, such
that for every $\alpha>1$,
\begin{equation}
 \lim_{n\to\infty}\frac1n
       \widetilde P_\alpha^{(1)}(\widehat\omega_n)
 =P_\alpha^{(1)}(\omega).
 \label{alt:asymptotic-comparison}
\end{equation}
\end{proposition}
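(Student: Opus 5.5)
We prove the two inequalities $\limsup\le$ and $\liminf\ge$ separately, using the same pinched ensembles for both. Fix universal states $\omega^{\mathrm u}_{B^n}$ and $\omega^{\mathrm u}_{E^n}$, i.e.\ permutation-invariant states dominating every permutation-invariant state on $B^n$ (resp.\ $E^n$) up to a factor $g_n=(n+1)^{O(d^2)}$. Let $\mathcal P_{B^n}$ and $\mathcal P_{E^n}$ be the pinchings in their eigenbases; these have $v_n=\mathrm{poly}(n)$ distinct eigenvalues. For a word $x^n$, further pinch the conditional outputs $\omega_{B^n,x^n}$ and $\omega_{E^n,x^n}$ with respect to $\omega_{B^n,x^n}$ and $\omega_{E^n,x^n}$, i.e.\ with respect to their type-class–invariant spectral projections, which number only polynomially many. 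This defines $\widehat\omega_n$. These choices do not depend on $\alpha$.

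\textbf{First}, since $P_\alpha^{(1)}(\omega^{\otimes n})=nP_\alpha^{(1)}(\omega)$ by Lemma~\ref{lem:ensemble-additivity}, the optimal references are permutation invariant by uniqueness and concavity/convexity in Lemma~\ref{lem:dual-norm}. Replacing $\tau_{B^n}$ and $\sigma_{E^n}$ by the universal states in \eqref{eq:radius-norm} and \eqref{eq:radius-primal}, via operator monotonicity of $t^{\pm u}$, therefore changes the value by at most $O(\log g_n)=O(\log n)$.

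\textbf{Second}, after pinching, the operators $\tau^{\mathrm u}$, $\sigma^{\mathrm u}$ and the pinched conditional states commute blockwise up to the pinching inequality $\rho\le v_n\mathcal P(\rho)$. On the commuting (pinched) ensemble, the quantum trace objective $G_\alpha$ reduces to a classical-like expression. For classical laws, Proposition~\ref{prop:classical-radius}, the variational identity \eqref{eq:intro-classical-private-information}, gives $P_\alpha^{(1)}=\widetilde P_\alpha^{(1)}$. I would extend this to the "block-classical" pinched ensemble using the Gibbs variational principle in log-Euclidean form (as in Mosonyi--Ogawa): $\sup_\theta\{\Tr\theta\log Z - D(\theta\|\omega)\}=\log\Tr e^{\log\omega+\log Z}$. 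Together with the fact that, for commuting operators, the four nested optimizations in $G_\alpha$ become the conditional relative-entropy corrections of \eqref{eq:intro-first-order}, this yields $\widetilde P_\alpha^{(1)}(\widehat\omega_n)=P_\alpha^{(1)}(\widehat\omega_n)+O(\log n)$.

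\textbf{Third}, compare $P_\alpha^{(1)}(\widehat\omega_n)$ with $P_\alpha^{(1)}(\omega^{\otimes n})$. Bob-side pinching is a local channel, so it cannot increase the quantity, by Lemma~\ref{lem:local-data-processing}. The pinching inequality $\rho\le v_n\mathcal P(\rho)$, inserted into $G_\alpha$ with exponents $\pm u$, bounds the loss by $u\log v_n$ per system, hence by $O(\log n)$. Eve-side pinching cannot decrease the quantity, again by Lemma~\ref{lem:local-data-processing}, and the reverse estimate follows in the same way from the pinching inequality on $E^n$. Dividing by $n$ and letting $n\to\infty$ gives \eqref{alt:asymptotic-comparison}.

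The main obstacle is the second step. The word-dependent pinchings of $\omega_{BE,x^n}$ commute with the universal references but not jointly between $B$ and $E$ within the same letter. $G_\alpha$ involves the joint state $\omega_{BE,x}$, while only the marginals are pinched, so the reduction to a commutative Gibbs formula is not immediate. I would handle this by noting that $G_\alpha$ depends on $\omega_{BE,x}$ only through local operators $Y_B\otimes Z_E$ with $Y_B$ and $Z_E$ commuting with the respective pinched marginals. This lets the log-Euclidean Golden--Thompson argument apply to $\log\omega_{BE,x}+\log Y_B+\log Z_E$. The inequality in one direction is then Golden--Thompson itself, and the other direction is recovered up to $\log v_n$ by the pinching inequality.
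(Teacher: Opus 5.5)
Your Steps~1 and~3 parallel the paper's universal-reference and pinching-cost estimates. Step~2, however, has a genuine gap, and it lies in exactly the direction the proposition is used for.

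\textbf{The gap in Step~2.} Golden--Thompson bounds the Gibbs (log-Euclidean) expression from above by an ordered one. Achievability needs the opposite inequality, $\frac1n\widetilde P_\alpha^{(1)}(\widehat\omega_n)\ge P_\alpha^{(1)}(\omega)-o(1)$, which you plan to recover ``by the pinching inequality''. The Mosonyi--Ogawa device gives such a reverse bound only after the state has been pinched with respect to the \emph{fixed} operator in the exponent.

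In $G_\alpha$ that operator contains the optimized conditional auxiliaries $\eta_{B,x},\nu_{E,x}$. They are not available when $\widehat\omega_n$ is built, and on $B^n,E^n$ they have exponentially many eigenvalues. Pinching with respect to them would therefore cost order $n$, not $O(\log n)$.

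Your proposed fix does not close this. Local pinching of $\widehat\omega_{BE,x}$ at best lets you take the auxiliaries block diagonal, but inside each block, of exponential dimension, they are arbitrary. For instance, the maximizing $\eta_{B,x}$ is a normalized $\alpha$-power of $\tau_B^{-u/2}\mathcal T_x(\sigma_E^{u/2}\nu_{E,x}^{-u}\sigma_E^{u/2})\tau_B^{-u/2}$, which is not scalar on blocks. Moreover, operators commuting with both marginals need not commute with the correlated joint state. So the pinched ensemble is not block-classical, and Proposition~\ref{prop:classical-radius} does not transfer.

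There is a further problem with your choice of pinching. Spectral projections of $\bigotimes_i\omega_{B,x_i}$ do not commute with the spectral pinching of the permutation-invariant universal state; already $A\otimes B$ with $A\ne B$ fails to commute with the swap. So your pinched states need not commute with the references either.

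\textbf{The missing idea} is a symmetry reduction of the conditional optimizations, which is what the paper does.
\begin{enumerate}
\item Fix the references to the universal states in both quantities. For $\widetilde P_\alpha^{(1)}$ this means carrying the terms $D(\theta_B\|\tau)-D(\theta_E\|\sigma)$, i.e.\ the paper's $\widetilde J$.
\item Average $\eta_z,\nu_z$ and the variational conditionals $\theta_z$ over the word stabilizer $G_z\cong\prod_xS_{n_x(z)}$. This preserves optimality by concavity and convexity.
\item Invariant operators have the form $\bigoplus_\lambda A_\lambda\otimes I_{\mathcal V_{R,z,\lambda}}$, with effective dimension $D_{R,z}=\sum_\lambda\dim\mathcal U_{R,z,\lambda}$ polynomial in $n$.
\item Take maximally mixed effective auxiliaries in $J$, and bound the effective conditional entropies in $\widetilde J$ by $\log D_{R,n}$. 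This places both $J$ and $\widetilde J$ in $[\frac1u\log Z-\log D_{B,n},\frac1u\log Z+\log D_{E,n}]$, with $Z=\sum_zp^{\otimes n}(z)\Tr\widehat\omega_ze^{uK_z}$. Here $K_z$ involves only the fixed references and central $\log d_\lambda$ terms.
\item The Gibbs supremum equals $\frac1u\log Z$ because $K_z$ commutes with $\widehat\omega_z$. This is why the word-dependent pinching must be onto $G_z$-isotypic components: their central projections commute with every permutation, hence with the common pinching.
\end{enumerate}

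\textbf{A smaller point in Step~3.} For Eve, the inequality $\rho\le v\mathcal P(\rho)$ gives only the direction already supplied by data processing. The reverse direction needs a different argument, e.g.\ subadditivity of $\Tr A^\beta$ in the dual quasi-norm representation.
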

\endgroup
The universal-reference lemma below fixes the outer references. We
then construct the local pinchings and prove the comparison.

\subsubsection{Universal reference states}\label{alt:foundations}
Fix $u=(\alpha-1)/\alpha$ and $\beta=1/(1+u)$. We temporarily fix
faithful references and write
\begin{align}
 J(\omega\mid\tau,\sigma)
 &:=\frac1u\log\sup_{\boldsymbol\eta_B}\inf_{\boldsymbol\nu_E}
 G_\alpha(\tau,\sigma,\boldsymbol\nu_E,\boldsymbol\eta_B),
 \label{alt:fixed-ordered-log}\\
 \widetilde J(\omega\mid\tau,\sigma)
 &:=\sup_{\theta\ll\omega}
 \left\{I(X:B)_\theta-I(X:E)_\theta-\frac1uD(\theta\|\omega)\right.
 \label{alt:fixed-variational-start}\\
 &\hspace{25mm}\left.+D(\theta_B\|\tau)-D(\theta_E\|\sigma)\right\}.
 \label{alt:fixed-games}
\end{align}
The second supremum is over cqq states. In $J$, only the outer
references $\tau_B,\sigma_E$ are fixed; the conditional auxiliaries
$\boldsymbol\eta_B,\boldsymbol\nu_E$ remain optimized. In
$\widetilde J$, the extra relative entropies account for replacing
$\theta_B,\theta_E$ by these same references. We can therefore
compare the two formulations at a common pair of reference states.

\begin{lemma}[Universal reference states]
\label{alt:outer-reference-lemma}\label{alt:universal-references}
For $R=B,E$, there are faithful permutation-invariant states
$\Omega_{R,n}$ with polynomially many eigenvalues such that every
permutation-invariant state satisfies
$\rho_{R^n}\le c_{R,n}\Omega_{R,n}$, where $c_{R,n}$ is polynomial
in $n$. For any permutation-covariant ensemble on $XB^nE^n$,
\begin{align}
 P_\alpha^{(1)}(\omega)
 &=J(\omega\mid\Omega_{B,n},\Omega_{E,n})+O(\log n),
 \label{alt:universal-ordered-lower}\\
 \widetilde P_\alpha^{(1)}(\omega)
 &=\widetilde J(\omega\mid\Omega_{B,n},\Omega_{E,n})+O(\log n).
 \label{alt:universal-variational-lower}
\end{align}
The errors are uniform in the ensemble and in $\alpha>1$ at fixed output dimensions.
\end{lemma}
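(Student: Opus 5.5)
We take $\Omega_{R,n}$ to be the standard universal symmetric state. Averaging the normalized projectors onto the Schur--Weyl (type) subspaces gives
\[
\Omega_{R,n}=\sum_{\lambda}\frac{1}{N_n}\frac{P_\lambda}{\Tr P_\lambda}\otimes\frac{I_{Q_\lambda}}{\dim Q_\lambda},
\]
or equivalently $\Omega_{R,n}=\int\rho^{\otimes n}\,d\rho$ pinched by the Young projectors. This state is faithful and permutation invariant, and it has at most $(n+1)^{d_R^2}$ distinct eigenvalues. The domination $\rho_{R^n}\le c_{R,n}\Omega_{R,n}$, with $c_{R,n}$ polynomial in $n$, holds for every permutation-invariant state. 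This is the familiar Hayashi/Christandl--K\"onig--Renner estimate: the number of Young diagrams and the dimensions of the irreducible $U(d_R)$ representations are both polynomial in $n$. We will cite this and not re-derive it. We also need the reverse, trivial direction. Permutation invariance of the optimal references reduces matters to invariant states, and $\Omega_{R,n}$ is itself a state.

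For \eqref{alt:universal-ordered-lower}, we compare $J(\omega\mid\tau,\sigma)$ for arbitrary references with its value at the universal states. Since the ensemble is permutation covariant, the objective after the conditional optimizations is invariant under simultaneous permutation of the references. Convexity in $\tau_B$ and concavity in $\sigma_E$ (Lemma~\ref{lem:dual-norm}) then let us symmetrize the references: a symmetric $\tau_B$ is no worse for the infimum, and a symmetric $\sigma_E$ is no worse for the supremum. With symmetric optimal references $\tau^\star,\sigma^\star$ in hand, use $\tau^\star\le c_{B,n}\Omega_{B,n}$. Operator monotonicity of $t^{u}$ (inverted, on the $\tau^{-u}$ factor via the equivalent quasi-norm form) gives $J(\omega\mid\Omega_{B,n},\sigma)\le J(\omega\mid\tau^\star,\sigma)+\log c_{B,n}$, since the factor $c^{u}$ is divided by $u$ inside $\frac1u\log$. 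Symmetrically, $\sigma^\star\le c_{E,n}\Omega_{E,n}$ and $t^u$ monotone in the $\sigma^{u/2}\cdot\sigma^{u/2}$ sandwich of \eqref{eq:radius-primal} give the lower bound with $-\log c_{E,n}$. These one-sided comparisons, together with the trivial inequalities coming from $\Omega$ being an admissible reference on each side, sandwich $P^{(1)}_\alpha(\omega)$ between $J(\omega\mid\Omega_{B,n},\Omega_{E,n})\pm(\log c_{B,n}+\log c_{E,n})$. Because the factors $c^{\pm u}$ sit inside a logarithm multiplied by $1/u$, the error is exactly $\log c$, independent of $\alpha$.

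For \eqref{alt:universal-variational-lower}, we first observe that $\widetilde P_\alpha^{(1)}(\omega)=\sup_\theta\inf_{\tau}\sup_{\sigma}$ of the expression in $\widetilde J$. This holds because $\inf_\tau D(\theta_B\|\tau)=0$ and $\sup_\sigma[-D(\theta_E\|\sigma)]=0$, attained at $\tau=\theta_B$ and $\sigma=\theta_E$. Covariance lets us restrict to symmetric $\theta$: the objective is concave in $\theta$ by \eqref{eq:variational-objective-concavity} and invariant under permutations, so averaging $\theta$ over permutations does not decrease it. For symmetric $\theta$, we have $\theta_{R^n}\le c_{R,n}\Omega_{R,n}$, which gives $0\le D(\theta_B\|\Omega_{B,n})\le\log c_{B,n}$ and likewise for $E$. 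Substituting the universal references therefore perturbs the objective by at most $\log c_{B,n}+\log c_{E,n}$ uniformly in $\theta$ and $\alpha$, and this yields the claimed $O(\log n)$ equality.

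The main obstacle is the symmetrization step in the first part. The exchange in Lemma~\ref{lem:dual-norm} supplies convexity and concavity only after the conditional auxiliaries are optimized, so we must check that permutations act covariantly on the auxiliary families: a permuted letter $\pi(x)$ is paired with permuted auxiliaries. We must also check that the support conventions (restriction to the average supports) are preserved by $\Omega_{R,n}$. The latter is handled by Corollary~\ref{cor:state-continuity}, by compressing $\Omega$ to the support, which keeps the domination with the same constant.
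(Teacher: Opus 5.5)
\paragraph{Overall.} Your construction of $\Omega_{R,n}$ and your argument for the ordered quantity follow the paper. You symmetrize (near-)optimal references using the convexity and concavity from Lemma~\ref{lem:dual-norm}, then combine universal domination with operator monotonicity, and the factors $c^{\pm u}$ become $\pm\log c$ after $\frac1u\log$. One small correction: for Bob, use the $\alpha$-norm form \eqref{eq:radius-norm}, written as $\|A^{1/2}\tau^{-u}A^{1/2}\|_\alpha$. In the quasi-norm form \eqref{eq:radius-primal}, Bob's reference sits inside $\mathcal T_x^\dagger$ next to $\eta_{B,x}^u$ and is not directly comparable.

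\paragraph{The gap in the variational part.} You symmetrize $\theta$ only in the objective of $\widetilde P_\alpha^{(1)}$, justified by \eqref{eq:variational-objective-concavity}. You then claim that substituting the universal references changes the objective by at most $\log c_{B,n}+\log c_{E,n}$ ``uniformly in $\theta$.'' That holds only for permutation-invariant $\theta$.

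For a non-invariant state $\rho$ on $B^n$, $D(\rho\|\Omega_{B,n})$ can be of order $n$, because $\Omega_{B,n}$ has exponentially small eigenvalues. For qubits, a single computational-basis word of balanced type already gives about $n\log 2$. Hence the inequality $\widetilde J(\omega\mid\Omega_{B,n},\Omega_{E,n})\le\widetilde P_\alpha^{(1)}(\omega)+\log c_{B,n}$ needs the supremum defining $\widetilde J$ itself to be restricted to invariant $\theta$. That restriction requires concavity of the $\widetilde J$ objective, and your citation does not give it: the extra term $+D(\theta_B\|\tau)$ is convex in $\theta$.

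\paragraph{The fix.} The missing observation, which the paper makes, is a cancellation. Since $D(\theta_B\|\tau)=-H(B)_\theta-\Tr\theta_B\log\tau$, it cancels the $H(B)_\theta$ inside $I(X:B)_\theta$. Thus $u$ times the $\widetilde J$ objective equals $(1-u)H(XBE)_\theta+uH(E|XB)_\theta+uH(X|E)_\theta-uD(\theta_E\|\sigma)$ plus terms linear in $\theta$, and is therefore concave. With permutation-invariant references and a covariant ensemble it is also invariant, so both suprema may be taken over invariant $\theta$. Your bound $0\le D(\theta_{R^n}\|\Omega_{R,n})\le\log c_{R,n}$ then completes the proof as in the paper. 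The opposite inequality, $\widetilde P_\alpha^{(1)}(\omega)\le\widetilde J(\omega\mid\Omega_{B,n},\Omega_{E,n})+\log c_{E,n}$, is fine as you argued it.
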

\begin{proof}
Choose the standard universal states obtained by averaging normalized
projectors onto the Schur--Weyl summands \cite{MosonyiOgawa}.
The number of summands and their multiplicity dimensions are
polynomial, giving the stated spectral and domination bounds.

For the ordered quantity, Lemma~\ref{lem:dual-norm} permits averaging
Bob's minimizing reference and Eve's maximizing reference over
permutations. Universal domination and operator monotonicity of
$t^{-u}$ in the norm representation, and of $t^u$ in the dual
representation, place $P_\alpha^{(1)}-J$ between
$-\log c_{B,n}$ and $\log c_{E,n}$.

For the variational quantity, the objectives defining
$\widetilde P_\alpha^{(1)}$ and $\widetilde J$ are concave in $\theta$.
After multiplication by $u$, their common nonlinear terms are
$(1-u)H(XBE)+uH(E|XB)+uH(X|E)$, supplemented respectively by
$uH(B)$ and $-uD(\theta_E\|\sigma)$; the remaining terms are linear.
Covariance therefore permits averaging $\theta$ over permutations.
For invariant marginals, universal domination gives
\begin{equation}
 0\le D(\theta_{R^n}\|\Omega_{R,n})\le\log c_{R,n},\qquad R=B,E.
 \label{alt:universal-relative-entropy}
\end{equation}
The two variational objectives differ by the Bob relative entropy
minus the Eve relative entropy. Their suprema consequently differ
by at most $\log c_{B,n}+\log c_{E,n}$.
\end{proof}

\subsubsection{Comparison after controlled local pinching}
\label{alt:schur-comparison}
Let $\mathcal E_{R,n}$ be the spectral pinching of $\Omega_{R,n}$,
with $v_{R,n}$ projections. For a word $z\in\mathcal X^n$, its
stabilizer $G_z\cong\prod_x S_{n_x(z)}$ has local decompositions
\begin{equation}
 R^{\otimes n}=\bigoplus_\lambda
 \mathcal U_{R,z,\lambda}\otimes\mathcal V_{R,z,\lambda},
 \qquad R=B,E.
 \label{alt:schur-decomposition}
\end{equation}
The spaces $\mathcal V_{R,z,\lambda}$ carry inequivalent irreducible
representations; $\mathcal U_{R,z,\lambda}$ are the multiplicity
spaces. Let $\mathcal C_{R,z}$ pinch onto these summands, and define
\begin{align}
 \omega'_{B^nE^n,z}
 &:=(\mathcal E_{B,n}\otimes\mathcal E_{E,n})
       \left(\bigotimes_{i=1}^n\omega_{BE,z_i}\right),\\
 \widehat\omega_{B^nE^n,z}
 &:=(\mathcal C_{B,z}\otimes\mathcal C_{E,z})(\omega'_{B^nE^n,z}),\\
 \widehat\omega_n
 &:=\sum_zp^{\otimes n}(z)|z\rangle\langle z|
       \otimes\widehat\omega_{B^nE^n,z}.
 \label{alt:controlled-pinches}
\end{align}
We suppress output subscripts on the conditional states below.
The common spectral pinches commute with the word-dependent center
pinches, since the references commute with every permutation.
The conditional states are simultaneously $G_z$-invariant, and
$\widehat\omega_n$ remains permutation covariant. Schur--Weyl duality
gives polynomial bounds, uniform over words, on both the number
$V_{R,n}$ of projections in the composed pinching and
\begin{equation}
 D_{R,n}:=\max_z D_{R,z},\qquad
 D_{R,z}:=\sum_\lambda\dim\mathcal U_{R,z,\lambda}.
 \label{alt:effective-dimensions}
\end{equation}
The alphabet and local output dimensions are fixed throughout.

\begin{proof}[Proof of Proposition~\ref{alt:controlled-comparison}]
Put $\tau=\Omega_{B,n}$ and $\sigma=\Omega_{E,n}$.
A pinching with $v$ projections contracts the $\alpha$ norm and,
by the pinching inequality, reduces it by at most a factor $v$.
For the $\beta$ quasi-norm, concavity gives
$\|\mathcal E(A)\|_\beta\ge\|A\|_\beta$, while subadditivity of
$\Tr A^\beta$ gives $\|\mathcal E(A)\|_\beta\le v^u\|A\|_\beta$.
Apply these bounds to Bob's norm and Eve's dual quasi-norm
representations, respectively. Since the pinches fix the references,
\begin{equation}
 \left|J(\widehat\omega_n\mid\tau,\sigma)
          -J(\omega^{\otimes n}\mid\tau,\sigma)\right|
 \le\frac1u\log V_{B,n}+\log V_{E,n}.
 \label{alt:fixed-pinching-cost}
\end{equation}

To compare $J$ and $\widetilde J$ on $\widehat\omega_n$, average
both ordered auxiliaries and each variational conditional state over
$G_z$. This preserves optimality by concavity in $\eta$ after
minimization, convexity in $\nu$, and concavity of the variational
objective. For an invariant state
$A_R=\bigoplus_\lambda A_{R,\lambda}\otimes I_{\mathcal V_{R,z,\lambda}}$,
write $\overline A_R=\bigoplus_\lambda d_{R,\lambda}A_{R,\lambda}$,
where $d_{R,\lambda}=\dim\mathcal V_{R,z,\lambda}$.
This is a state in dimension $D_{R,z}$; a superscript $\uparrow$
lifts its blocks by tensoring with $I_{\mathcal V_{R,z,\lambda}}$.
On each block, the identity
\[
 \log A_{R,\lambda}=\log\overline A_{R,\lambda}
                    -(\log d_{R,\lambda})I
\]
holds on the support. Thus the $\log d_{R,\lambda}$ terms cancel in
relative entropies against the references; the scalar powers cancel
in the ordered products. Define
\begin{align}
 K_z&:=-\log\overline\tau^{\uparrow}\otimes I_{E^n}
             +I_{B^n}\otimes\log\overline\sigma^{\uparrow},\\
 Z&:=\sum_zp^{\otimes n}(z)\Tr\widehat\omega_z e^{uK_z}.
 \label{alt:common-partition-function}
\end{align}
The bars and lifts refer to the decomposition for $z$.
The common and center pinchings ensure that $K_z$ commutes with
$\widehat\omega_z$.

Suppressing the fixed arguments $(\widehat\omega_n\mid\tau,\sigma)$,
both quantities lie in the same interval:
\begin{equation}
 J,\;\widetilde J\in\left[\frac1u\log Z-\log D_{B,n},
          \frac1u\log Z+\log D_{E,n}\right].
 \label{alt:common-dimension-bound}
\end{equation}
For $J$, choose $\overline\eta_z=I/D_{B,z}$ for the lower bound
and $\overline\nu_z=I/D_{E,z}$ for the upper bound, using
$\overline\eta_z^u\le I$ and $\overline\nu_z^{-u}\ge I$.
For $\widetilde J$, its objective at
$\theta=\sum_zq_z|z\rangle\langle z|\otimes\theta_z$ is
\begin{equation}
 \sum_zq_z\bigl[-H(\overline\theta_{B,z})
                   +H(\overline\theta_{E,z})+\Tr\theta_zK_z\bigr]
                   -\frac1uD(\theta\|\widehat\omega_n).
 \label{alt:effective-entropy-objective}
\end{equation}
The entropy contribution lies between $-\log D_{B,n}$ and
$\log D_{E,n}$. Gibbs' formula maximizes the remaining terms to
$u^{-1}\log Z$; its optimizer has blocks proportional to
$p^{\otimes n}(z)\widehat\omega_z e^{uK_z}$ and retains the required
invariance. Hence $|\widetilde J-J|\le\log D_{B,n}+\log D_{E,n}$.
Combine this with \eqref{alt:fixed-pinching-cost}, the universal
reference bounds, and additivity
$P_\alpha^{(1)}(\omega^{\otimes n})=nP_\alpha^{(1)}(\omega)$.
All polynomial factors are independent of $\alpha$, so
\begin{equation}
 \sup_{\alpha>1}\frac{\alpha-1}{\alpha}
 \left|\widetilde P_\alpha^{(1)}(\widehat\omega_n)
              -nP_\alpha^{(1)}(\omega)\right|
 \le C\log(n+1).
 \label{alt:uniform-weighted-comparison}
\end{equation}
For fixed $\alpha>1$, division by $n$ proves the proposition.
The uniform weighted estimate also permits optimization over
$\alpha$ in the coding argument.
\end{proof}

\subsection{Transferring the coding construction}\label{sec:pinching-transfer}
The third step transfers the variational coding construction to the
original channel.

\begin{proposition}[Transfer to the original channel]
\label{alt:operational-bridge}
For every finite cqq output ensemble $\omega_{XBE}$ of a channel
$\mathcal W:A\to BE$, the pinched ensembles from
Proposition~\ref{alt:controlled-comparison} satisfy, for every $R\ge0$,
\begin{equation}
 E_{\mathrm{sc}}(\mathcal W,R)
 \le\lim_{n\to\infty}\sup_{\alpha>1}\frac{\alpha-1}{\alpha}
       \left[R-\frac1n\widetilde P_\alpha^{(1)}(\widehat\omega_n)\right].
 \label{alt:bridge-bound}
\end{equation}
\end{proposition}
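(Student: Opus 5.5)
The plan is to rerun the change-of-channel argument of Proposition~\ref{alt:change-channel} with $n$-letter blocks, where a block is a word $z\in\mathcal X^n$ sent through $\mathcal W^{\otimes n}$. The difficulty is that Proposition~\ref{alt:change-channel} applies only to output ensembles of the physical channel. The pinched ensemble $\widehat\omega_n$ is not one: its conditional states are $(\mathcal C_{B,z}\otimes\mathcal C_{E,z})(\mathcal E_{B,n}\otimes\mathcal E_{E,n})(\omega_z^{\otimes})$, and the pinchings act locally on Bob's and Eve's outputs with $\mathcal C_{R,z}$ depending on the input word. I would therefore not apply the proposition to $\widehat\omega_n$ directly. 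Instead I would fix $\theta\ll\widehat\omega_n$ and repeat the two ingredients of its proof: ordinary private coding on the modified block channel $z\mapsto\theta_{B^nE^n,z}$, followed by Lemma~\ref{lem:relative-entropy-fidelity} to return to a reference channel. The only new point is that the reference channel is now the pinched one.

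Bob's side is handled first. Since $\mathcal C_{B,z}\circ\mathcal E_{B,n}$ fixes $\theta_{B^n,z}$ blockwise, I would choose each decoding POVM for the $\theta$-code so that it commutes with the pinching. Concretely, I would use Bob's invariant acceptance tests, built from the Schur--Weyl projections onto the summands in \eqref{alt:schur-decomposition}, intersected with the spectral projections of $\Omega_{B,n}$. Because these projections are common to all letters through the common spectral pinching, Bob applies the corresponding pinching himself before decoding, and his test is invariant. With such tests, Bob's decoding statistics on $\omega_z^{\otimes}$ equal those on the Bob-pinched state. The word-dependent center pinching is the delicate case. For a codeword $z$, the test is invariant under $G_z$, so averaging over $G_z$ costs nothing. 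Decoding of the message and of the type/word can be arranged sequentially, with the type known from the codebook constant-composition structure.

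Eve's side is different, because Eve is not cooperative. I would use monotonicity of the fidelity criterion \eqref{eq:fidelity-blocks} under Eve's processing in the favourable direction. Pinching Eve's system is a channel on $E^n$, which can only increase fidelity to $\kappa\otimes\sigma$. That is the wrong direction for transfer. The repair is the pinching inequality: $\omega\le V_{E,n}D_{E,n}\,\mathcal C_{E,z}\mathcal E_{E,n}(\omega)$ gives the reverse operator inequality up to a polynomial factor. In the relative-entropy bound of Lemma~\ref{lem:relative-entropy-fidelity}, this means replacing $D(\rho\|\widehat\zeta)$ by $D(\rho\|\zeta)$, plus $\log(V_{B,n}V_{E,n}D_{B,n}D_{E,n})=O(\log n)$ per block. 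Equivalently, I would compare with the physical block ensemble via $D(\theta\|\omega^{\otimes n}) \le D(\theta\|\widehat\omega_n)+O(\log n)$, which holds for invariant $\theta$ by the pinching inequality and operator monotonicity of $\log$. Proposition~\ref{alt:change-channel}, applied to the physical ensemble $\omega^{\otimes n}$ of $\mathcal W^{\otimes n}$ with this $\theta$, then gives $E_{\mathrm{sc}}(\mathcal W^{\otimes n},nR)\le \sup_{\alpha>1}\frac{\alpha-1}{\alpha}[nR-\widetilde P_\alpha^{(1)}(\widehat\omega_n)]+O(\log n)$. This step still requires the private-information term $I_\theta$ to be evaluated on $\theta$ itself, which it is. The only loss is therefore the penalty, bounded through the minimax in \eqref{alt:variational-exponent-minimax} uniformly over $u\in[0,1]$.

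Finally, \eqref{eq:exponent-block-scaling} divides by $n$, and the uniform estimate \eqref{alt:uniform-weighted-comparison} makes the limit exist, giving \eqref{alt:bridge-bound}. The main obstacle is the inequality $D(\theta\|\omega^{\otimes n})\le D(\theta\|\widehat\omega_n)+O(\log n)$. It needs $\theta$ invariant under the pinchings, which holds for the Gibbs optimizer described in the proof of Proposition~\ref{alt:controlled-comparison} but must be justified for near-optimal $\theta$. My first attempt would be to replace any $\theta$ by its pinched average, using concavity of the objective and data processing.
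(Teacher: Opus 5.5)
\textbf{There is a genuine gap.} The step you flag as the main obstacle is not merely unjustified; it is false, and your argument depends on it.

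The pinching inequality gives $\bigotimes_i\omega_{BE,z_i}\le V_{B,n}V_{E,n}\,\widehat\omega_z$. Operator monotonicity of $\log$ then yields $D(\theta_z\|\bigotimes_i\omega_{BE,z_i})\ge D(\theta_z\|\widehat\omega_z)-\log(V_{B,n}V_{E,n})$. That is the \emph{opposite} of what you need. The inequality $D(\theta\|\omega^{\otimes n})\le D(\theta\|\widehat\omega_n)+O(\log n)$ fails even for pinching-invariant $\theta$, because pinching enlarges supports. Take $\omega=|+\rangle\langle+|$, the computational-basis pinching, and $\theta=I/2$: the right side is $0$ and the left side is $+\infty$.

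This is exactly the situation for complementary outputs with pure inputs. There each $\bigotimes_i\omega_{BE,z_i}$ is pure, while the relevant tilts are mixed. For example, the Gibbs optimizer $\propto\widehat\omega_z e^{uK_z}$ is supported on all of $\operatorname{supp}\widehat\omega_z$. So $\theta\ll\widehat\omega_n$ does not give $\theta\ll\omega^{\otimes n}$, and Proposition~\ref{alt:change-channel} cannot be applied to the physical ensemble with this $\theta$. The tilts that make $\widetilde P_\alpha^{(1)}(\widehat\omega_n)$ large are precisely those the physical channel cannot realize at finite relative-entropy cost.

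Your fidelity-level repair has the same direction problem. The inequality $\gamma\le v\,\mathcal E(\gamma)$ gives $F(\gamma,\sigma)\le v\,F(\mathcal E(\gamma),\sigma)$. That is an upper bound on the physical fidelity, just like data processing, whereas achievability needs a lower bound.

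\textbf{The missing idea} is to compare only the \emph{accepted Eve operators}, never the states. This forces you to build the code yourself rather than invoke Proposition~\ref{alt:change-channel}. The paper does this as follows.
\begin{itemize}
\item It keeps the penalty $d_\theta$ relative to $\widehat\omega_n$, where it is finite.
\item It uses acceptance tests $M_{z^k}=\Pi_k\Pi_{z^k}\Pi_k$ built from typical projectors of $\theta_B^{\otimes k}$ and $\theta_{B^k,z^k}$. For symmetric $\theta$ these commute with $\prod_jG_{z_j}$.
\item Lemma~\ref{alt:invariant-tests} then gives exact equality of the accepted Eve operators on $\widehat\omega_{z^k}$ and on the commonly pinched outputs $\omega'_{z^k}$. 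This removes both word-dependent center pinchings at no cost.
\item Lemma~\ref{lem:relative-entropy-fidelity} is applied only to the pair $(\theta,\widehat\omega)$.
\item The decoder is $\Lambda_m=c^{-1}\sum_lM_{z(m,l)}$, with $\sum M\le cI$ from matrix Chernoff. It needs invariance only for the matching codeword, since the cross terms are positive.
\item Bob's common pinching is pulled back into his decoder.
\item Eve's common pinching is removed by the Rotfel'd bound $\Tr\sqrt{\mathcal E_E(C)}\le\sqrt{v_E}\Tr\sqrt C$ of Lemma~\ref{alt:accepted-lift}. This is the correct-direction substitute for the pinching inequality and costs $\log v_{E,n}$ per superletter.
\end{itemize}

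Your remarks on Bob's invariant tests point in this direction, but they are disconnected from your final step. The decoders supplied by the ordinary coding theorem carry no such invariance. Moreover, a decoding element summing tests for several codewords is generally not invariant under all their stabilizers, which is why only the matching term can be transferred.
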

The word-dependent center pinches can be removed when Bob's acceptance
tests have the same symmetry. The remaining common pinches have a
controlled fidelity cost.

\begin{lemma}[Invariant acceptance tests]\label{alt:invariant-tests}
Let $0\le M_z\le I_{B^n}$ commute with the representation of $G_z$ on
$B^n$. Then
\begin{equation}
 \Tr_{B^n}[(M_z\otimes I_{E^n})\widehat\omega_{B^nE^n,z}]
 =\Tr_{B^n}[(M_z\otimes I_{E^n})\omega'_{B^nE^n,z}].
 \label{alt:accepted-state-identity}
\end{equation}
\end{lemma}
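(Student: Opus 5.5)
The plan is to reduce the identity to two facts: the center pinching $\mathcal C_{B,z}\otimes\mathcal C_{E,z}$ acts on Bob's and Eve's systems separately, and a $G_z$-invariant test on $B^n$ is a fixed point of the Bob-side pinching in the Heisenberg picture. Recall that $\widehat\omega_z=(\mathcal C_{B,z}\otimes\mathcal C_{E,z})(\omega'_z)$.

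First, Eve's pinching can be pulled out of the partial trace over $B^n$. Since $\mathcal C_{E,z}$ acts only on $E^n$,
\begin{equation*}
 \Tr_{B^n}[(M_z\otimes I)(\mathrm{id}\otimes\mathcal C_{E,z})(X)]
 =\mathcal C_{E,z}\bigl(\Tr_{B^n}[(M_z\otimes I)X]\bigr).
\end{equation*}
Second, Bob's pinching can be moved onto the test by self-duality. The pinching $\mathcal C_{B,z}(Y)=\sum_\lambda\Pi_\lambda Y\Pi_\lambda$ is self-adjoint with respect to the Hilbert--Schmidt inner product. Here $\Pi_\lambda$ are the isotypic projections in \eqref{alt:schur-decomposition}. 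It follows that
\begin{equation*}
 \Tr_{B^n}[(M_z\otimes I)(\mathcal C_{B,z}\otimes\mathrm{id})(X)]
 =\Tr_{B^n}[(\mathcal C_{B,z}(M_z)\otimes I)X].
\end{equation*}
Because $M_z$ commutes with the representation of $G_z$, Schur's lemma gives $M_z=\bigoplus_\lambda M_{z,\lambda}\otimes I_{\mathcal V_{B,z,\lambda}}$. It is therefore block diagonal in the isotypic decomposition, so $\mathcal C_{B,z}(M_z)=M_z$. Together these two steps show that the left side of \eqref{alt:accepted-state-identity} equals $\mathcal C_{E,z}(\Tr_{B^n}[(M_z\otimes I)\omega'_z])$.

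Third, Eve's remaining pinching must be removed. The natural tool is invariance. The state $\omega'_z$ is $G_z$-invariant jointly on $B^nE^n$: the product state $\bigotimes_i\omega_{BE,z_i}$ is invariant under permutations fixing $z$, and the spectral pinchings $\mathcal E_{B,n},\mathcal E_{E,n}$ commute with all permutations because the universal states are permutation invariant. Let $U_g$ and $W_g$ denote the representations of $g\in G_z$ on $B^n$ and $E^n$. Since $U_g$ commutes with $M_z$, cyclicity of the partial trace on $B^n$ gives
\begin{equation*}
 W_g\,\Tr_{B^n}[(M_z\otimes I)\omega'_z]\,W_g^\dagger
 =\Tr_{B^n}[(M_z\otimes I)\omega'_z].
\end{equation*}
The conditional Eve operator therefore commutes with the $G_z$ representation on $E^n$. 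By Schur's lemma it is block diagonal in the isotypic decomposition, so $\mathcal C_{E,z}$ leaves it unchanged. This proves the identity.

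The main obstacle is this last step. The argument needs joint invariance of $\omega'_z$, meaning the diagonal action $U_g\otimes W_g$, so that invariance of the test transfers to invariance of Eve's conditional state. It also needs care in how the partial trace over $B^n$ interacts with $U_g$: the factor $U_g^\dagger(\cdot)U_g$ is absorbed by $\Tr_{B^n}$ only after using $[M_z,U_g]=0$. I would check explicitly that the spectral pinchings of $\Omega_{R,n}$ preserve the invariance, since they commute with all permutation unitaries. Everything else is the standard fact that pinching onto isotypic components fixes commutant elements.
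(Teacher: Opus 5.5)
Your proof is correct and matches the paper's argument: joint $G_z$-invariance of $\omega'_z$ together with $[M_z,U_g]=0$ makes Eve's accepted operator $\gamma_z=\Tr_{B^n}[(M_z\otimes I)\omega'_z]$ invariant, so the center pinchings fix both $M_z$ and $\gamma_z$, and self-adjointness of Bob's pinching plus pulling Eve's pinching outside the partial trace gives the identity. Your explicit checks of the partial-trace cyclicity step and of the permutation covariance of the spectral pinchings fill in details the paper leaves implicit.
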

\begin{proof}
Simultaneous $G_z$ invariance of $\omega'_z$ and invariance of $M_z$
imply that $\gamma_z:=\Tr_{B^n}[(M_z\otimes I)\omega'_z]$ commutes
with the representation on $E^n$. The central pinchings therefore
fix $M_z$ and $\gamma_z$. Self-adjointness of the Bob pinching under
the trace, followed by the Eve pinching, gives the stated identity.
\end{proof}

\begingroup
\mdfsetup{nobreak=true}
\begin{lemma}[Removing common output pinchings]\label{alt:accepted-lift}
Let $\mathcal E_B,\mathcal E_E$ be input-independent pinching maps,
with $v_E$ projections on Eve's system, and set
$\mathcal W'=(\mathcal E_B\otimes\mathcal E_E)\circ\mathcal W$.
Every $(1,M)$ code $\mathcal C'$ for $\mathcal W'$ gives a code
$\mathcal C$ for $\mathcal W$, with the same encoder, satisfying
\begin{equation}
 F(\mathcal C;\mathcal W)\ge v_E^{-1}F(\mathcal C';\mathcal W').
 \label{alt:common-pinching-fidelity}
\end{equation}
\end{lemma}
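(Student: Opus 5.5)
The plan is to keep Alice's encoder unchanged and to modify only Bob's decoder so that it undoes his pinching. Eve's pinching is then absorbed into the optimization over her reference state.

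\textbf{Bob's pinching.} Let $\{\Lambda'_m\}$ be the POVM of $\mathcal C'$ and set $\Lambda_m:=\mathcal E_B^\dagger(\Lambda'_m)=\mathcal E_B(\Lambda'_m)$, since pinchings are self-adjoint and unital. This is again a POVM. Write $\mathcal T_m^\dagger$ and $\mathcal T_m'^\dagger$ for the conditional adjoint maps \eqref{eq:conditional-adjoint} of $\mathcal W(\rho_{A,m})$ and $\mathcal W'(\rho_{A,m})$. Then
\[
 \mathcal T_m'^\dagger(\Lambda'_m)
 =\mathcal E_E\bigl(\Tr_B[(\mathcal E_B(\Lambda'_m)\otimes I_E)\mathcal W(\rho_{A,m})]\bigr)
 =\mathcal E_E\bigl(\mathcal T_m^\dagger(\Lambda_m)\bigr).
\]
So Bob's pinching costs nothing. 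Eve's unnormalized correct-decoding states for the new code $\mathcal C$ on $\mathcal W$ are $\gamma_m:=\mathcal T_m^\dagger(\Lambda_m)$, and those of $\mathcal C'$ on $\mathcal W'$ are $\mathcal E_E(\gamma_m)$.

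\textbf{Eve's pinching.} Use the block formula \eqref{eq:fidelity-blocks}. Let $\sigma'$ be the optimal reference state for $\mathcal C'$; we may assume $\sigma'=\mathcal E_E(\sigma')$. To justify this, note that fidelity data processing under $\mathcal E_E$ gives $F(\mathcal E_E(\gamma),\mathcal E_E(\sigma'))\ge F(\mathcal E_E(\gamma),\sigma')$, because $\mathcal E_E(\gamma)$ is already pinched. For the code $\mathcal C$, choose the same reference $\sigma'$. It then suffices to show, for every $m$,
\[
 \sqrt{F(\gamma_m,\sigma')}\ \ge\ v_E^{-1/2}\sqrt{F(\mathcal E_E(\gamma_m),\sigma')}.
\]
Squaring and averaging the square roots then yields \eqref{alt:common-pinching-fidelity}.

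\textbf{The key inequality (main obstacle).} The natural first tool is the pinching inequality $\gamma\ge v_E^{-1}\mathcal E_E(\gamma)$ together with monotonicity of fidelity in its arguments: $\sqrt F(A,\sigma)=\|\sqrt A\sqrt\sigma\|_1$ is operator monotone in $A$, since $\sqrt{F}(A,\sigma)=\Tr(\sigma^{1/2}A\sigma^{1/2})^{1/2}$ and $t^{1/2}$ is operator monotone. This route gives
\[
 \sqrt F(\gamma,\sigma')\ge \sqrt F\bigl(v_E^{-1}\mathcal E_E(\gamma),\sigma'\bigr)=v_E^{-1/2}\sqrt F(\mathcal E_E(\gamma),\sigma'),
\]
and hence a squared-fidelity factor $v_E^{-1}$, exactly as claimed.

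The step that needs the most care is making sure that the monotonicity direction and the homogeneity (degree $1/2$ in $A$) are right for subnormalized operators and singular $\sigma'$. The argument above handles both without approximation, because $\sigma^{1/2}A\sigma^{1/2}$ is well defined for any positive operators. As a fallback, if one does not want to assume $\sigma'$ is pinched, the argument can instead use $F(\gamma,\mathcal E_E(\sigma))$ together with $\mathcal E_E(\sigma)\le v_E\sigma$ in the second argument. That version gives the same bound by the symmetric argument.
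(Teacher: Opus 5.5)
Your treatment of Bob's pinching is correct and matches the paper: setting $\Lambda_m=\mathcal E_B(\Lambda'_m)$ makes Eve's correct-decoding blocks on $\mathcal W'$ equal to $\mathcal E_E(\gamma_m)$. Your data-processing argument that the optimal reference $\sigma'$ may be taken pinched is also correct, and is a clean substitute for the paper's concavity-and-averaging argument.

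The gap is the key inequality. The pinching inequality reads $\gamma\le v_E\,\mathcal E_E(\gamma)$, i.e.\ $\mathcal E_E(\gamma)\ge v_E^{-1}\gamma$. It does not read $\gamma\ge v_E^{-1}\mathcal E_E(\gamma)$, and the direction you use is false. For a qubit dephased in the computational basis ($v_E=2$) and $\gamma=|+\rangle\langle+|$, one has $\mathcal E_E(\gamma)=I/2$ and $\langle-|\bigl(\gamma-\tfrac12\mathcal E_E(\gamma)\bigr)|-\rangle=-\tfrac14$. With the correct direction, monotonicity of $\sqrt F$ in its first argument only gives $\sqrt F(\mathcal E_E(\gamma),\sigma)\ge v_E^{-1/2}\sqrt F(\gamma,\sigma)$, which is the reverse of what the lemma needs. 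Your fallback $\mathcal E_E(\sigma)\le v_E\sigma$ is the same reversed inequality and fails for $\sigma=|+\rangle\langle+|$. No operator inequality can rescue this route: if $\gamma$ is pure and not block diagonal, then $\mathcal E_E(\gamma)$ has strictly larger rank, so $\mathcal E_E(\gamma)\le c\,\gamma$ fails for every constant $c$.

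What is needed is a trace inequality, and this is where the pinched reference becomes essential rather than optional. If $\sigma=\mathcal E_E(\sigma)$, then $\sigma^{1/2}$ commutes with the pinching projections. Hence $\sigma^{1/2}\mathcal E_E(\gamma)\sigma^{1/2}=\mathcal E_E(C)$ with $C=\sigma^{1/2}\gamma\sigma^{1/2}$, and the claim reduces to $\Tr\sqrt{\mathcal E_E(C)}\le\sqrt{v_E}\,\Tr\sqrt C$.

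The paper proves this by writing $\mathcal E_E(C)=v_E^{-1}\sum_kU^kCU^{-k}$ and applying Rotfel'd's inequality, i.e.\ subadditivity of $X\mapsto\Tr\sqrt X$ on positive operators. This gives $\Tr\sqrt{\mathcal E_E(C)}\le\sum_k\Tr\sqrt{v_E^{-1}U^kCU^{-k}}=\sqrt{v_E}\,\Tr\sqrt C$.

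An elementary alternative also works. For pinched $\sigma$ one has $\sqrt F(\mathcal E_E(\gamma),\sigma)=\sum_j\|XP_j\|_1$ with $X=\gamma^{1/2}\sigma^{1/2}$. A singular value decomposition of $X$ together with Cauchy--Schwarz over the $v_E$ projections then gives $\sum_j\|XP_j\|_1\le\sqrt{v_E}\,\|X\|_1=\sqrt{v_E}\sqrt{F(\gamma,\sigma)}$.

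Averaging over $m$ and squaring then finishes the proof as you intended. The qubit example with $\sigma=I/2$ attains equality, so the factor $v_E^{-1}$ is sharp, even though the operator ordering you invoked does not hold.
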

\endgroup
\begin{proof}
Pull Bob's decoder back through $\mathcal E_B$. If $\gamma_{E,m}$
is its correct-decoding block on $\mathcal W$, the corresponding
block on $\mathcal W'$ is $\mathcal E_E(\gamma_{E,m})$.
By concavity and averaging over the pinching unitaries, the optimizing
reference $\sigma_E$ for the latter code in
\eqref{eq:fidelity-blocks} can be chosen fixed by $\mathcal E_E$.
The unitary-average representation and Rotfel'd's inequality
\cite{Thompson} give
\begin{equation}
 \Tr\sqrt{\mathcal E_E(C)}\le\sqrt{v_E}\Tr\sqrt C,
 \qquad C\ge0.
\end{equation}
Apply this to $C=\sigma_E^{1/2}\gamma_{E,m}\sigma_E^{1/2}$,
average over $m$, and square.
\end{proof}

\begin{proof}[Proof of Proposition~\ref{alt:operational-bridge}]
Fix $n$ and write $B,E$ for $B^n,E^n$ and $z=x^n$ for a superletter.
Let $\mathcal W'_n=(\mathcal E_{B,n}\otimes\mathcal E_{E,n})
\circ\mathcal W^{\otimes n}$, with outputs $\omega'_z$.
By the symmetrization argument in Lemma~\ref{alt:universal-references},
the variational optimizations for $\widehat\omega_n$ may be restricted
to permutation-invariant auxiliary states $\theta_{ZBE}$.
Fix one such $\theta\ll\widehat\omega_n$, with distribution $q_z$, and put
\begin{align}
 I_\theta&:=I(Z:B)_\theta-I(Z:E)_\theta,\\
 d_\theta&:=\sum_zq_zD(\theta_{BE,z}\|\widehat\omega_{BE,z}).
\end{align}
We prove
\begin{equation}
 E_{\mathrm{sc}}(\mathcal W,R)
 \le\frac{d_\theta+[nR-I_\theta]_++\log v_{E,n}}n.
 \label{alt:symmetric-auxiliary-bound}
\end{equation}
For $I_\theta\le0$ this follows from $E_{\mathrm{sc}}\le R$.
Otherwise, choose $r>0$, $\ell>I(Z:E)_\theta$, and $\varepsilon>0$
such that $r+\ell<I(Z:B)_\theta-2\varepsilon$.

\emph{Symmetry-preserving tests and coding.}
For $k$ superletters, let $\Pi_k$ be the spectral typical projector
of $\theta_B^{\otimes k}$. For each $q$-typical word $z^k$, choose
the conditional spectral typical projector $\Pi_{z^k}$ of
$\theta_{B^k,z^k}$, with the typicality tolerances small enough that
\begin{equation}
 \Pi_{z^k}\le
 e^{k[H(B|Z)_\theta+\varepsilon]}\theta_{B^k,z^k},
 \qquad
 \Pi_k\theta_B^{\otimes k}\Pi_k
 \le e^{-k[H(B)_\theta-\varepsilon]}\Pi_k.
 \label{alt:typical-projector-bounds}
\end{equation}
Set $M_{z^k}=\Pi_k\Pi_{z^k}\Pi_k$ on these words and
$M_{z^k}=0$ on all other words. The standard typical-projector
estimates \cite[Section~II]{Devetak} give
\begin{align}
 \mathbb E_{q^{\otimes k}}\Tr M_{Z^k}\theta_{B^k,Z^k}
 &\longrightarrow1,\\
 \left\|\mathbb E_{q^{\otimes k}}M_{Z^k}\right\|_\infty
 &\le e^{-k(I(Z:B)_\theta-2\varepsilon)}.
 \label{alt:packing-norm}
\end{align}
The first follows by gentle measurement and the probability of the
typical set tending to one. For the second, sum the first inequality
in \eqref{alt:typical-projector-bounds} over typical words, extend the
positive sum to all words, and use
$\mathbb E_{q^{\otimes k}}\theta_{B^k,Z^k}=\theta_B^{\otimes k}$
and the second inequality there.
Since $\theta_B$ is permutation invariant and $\theta_{B,z}$
is $G_z$ invariant, both projectors commute with $\prod_jG_{z_j}$.
Thus Lemma~\ref{alt:invariant-tests} equates their accepted Eve
operators on $\omega'_{z^k}$ and $\widehat\omega_{z^k}$.

Draw an independent array $z(m,l)\sim q^{\otimes k}$, with
$M_k=\lceil e^{kr}\rceil$ messages and $L_k=\lceil e^{k\ell}\rceil$
randomization labels. For $\kappa\in\{\theta,\widehat\omega\}$ define
\begin{equation}
 R_\kappa:=\frac1{M_kL_k}\sum_{m,l}|m\rangle\langle m|\otimes
 \Tr_{B^k}[(M_{z(m,l)}\otimes I)\kappa_{B^kE^k,z(m,l)}].
\end{equation}
There are arrays satisfying simultaneously
\begin{align}
 \left\|R_\theta-\pi_{M_k}\otimes\theta_E^{\otimes k}\right\|_1
 &=o(1),\label{alt:accepted-approximation}\\
 \sum_{m,l}M_{z(m,l)}&\le cI_{B^k},
 \label{alt:array-packing}\\
 \frac1{M_kL_k}\sum_{m,l}
 D(\theta_{B^kE^k,z(m,l)}\|\widehat\omega_{B^kE^k,z(m,l)})
 &\le kd_\theta+o(k).
 \label{alt:array-relative-entropy}
\end{align}
Indeed, ordinary quantum covering \cite[Section~II]{Devetak} gives
the first in expectation before acceptance; acceptance changes it
by at most the mean rejection probability, which tends to zero.
The typical truncations used in the covering proof can be removed at
vanishing expected cost; the acceptance tests remain zero on atypical
words. Each test is a positive contraction, and
\eqref{alt:packing-norm} gives
\begin{equation}
 \left\|\mathbb E\sum_{m,l}M_{z(m,l)}\right\|_\infty
 \le e^{-k[I(Z:B)_\theta-2\varepsilon-r-\ell]+o(k)}.
 \label{alt:array-mean-bound}
\end{equation}
Matrix Chernoff \cite[Corollary~5.2]{Tropp} therefore gives, for every
fixed $c\ge1$ and all sufficiently large $k$,
\begin{equation}
 \Pr\!\left\{\sum_{m,l}M_{z(m,l)}\nleq cI_{B^k}\right\}
 \le\exp\!\left\{k\log\dim B
       -ck[I(Z:B)_\theta-2\varepsilon-r-\ell]+o(k)\right\}.
 \label{alt:array-chernoff}
\end{equation}
Choose $c[I(Z:B)_\theta-2\varepsilon-r-\ell]>\log\dim B$.
Thus \eqref{alt:array-packing} holds with probability tending to one.
Markov's inequality applied to the expected trace-distance error in
\eqref{alt:accepted-approximation}, together with this tail bound,
gives an event $G_k$ of probability tending to one on which both
properties hold. The nonnegative
cost in the third has expectation $kd_\theta$, so its conditional
expectation on $G_k$ is at most $kd_\theta/\Pr(G_k)=kd_\theta+o(k)$.
An array on $G_k$ therefore satisfies all three properties.

\emph{Fidelity transfer.}
Fix such an array and set $a=\Tr R_\theta$ and $\xi=R_\theta/a$.
Equation~\eqref{alt:accepted-approximation} gives $a\to1$ and
$T(\xi,\pi_{M_k}\otimes\theta_E^{\otimes k})=o(1)$.
Apply Lemma~\ref{lem:entropy-continuity} with conditioned system $M$
and $d_C=M_k$; its distance hypothesis holds for all sufficiently
large $k$. Since $\log M_k=kr+o(1)$, the resulting entropy error is
$o(k)$, and hence $H(M|E^k)_\xi=\log M_k-o(k)$.
Apply Lemma~\ref{lem:relative-entropy-fidelity} to the normalized
states with uniform array labels $(m,l)$ and outputs $\theta_{z(m,l)}$
and $\widehat\omega_{z(m,l)}$. The acceptance map applies the matching
test $M_{z(m,l)}$ and discards $B^k,l$, giving $R_\theta,R_{\widehat\omega}$.
Their input relative entropy is bounded by
\eqref{alt:array-relative-entropy}, so
\begin{equation}
 \log F(R_{\widehat\omega},\pi_{M_k}\otimes\xi_{E^k})
 \ge-kd_\theta-o(k).
\end{equation}
Alice chooses $l$ uniformly to encode $m$. Bob completes the sub-POVM
$\Lambda_m=c^{-1}\sum_lM_{z(m,l)}$ arbitrarily.
Each matching term in its correct-decoding state on
$(\mathcal W'_n)^{\otimes k}$ equals the corresponding term in
$c^{-1}R_{\widehat\omega}$ by Lemma~\ref{alt:invariant-tests};
the cross terms are positive. Monotonicity of fidelity in the
positive-semidefinite order and its homogeneity,
followed by Lemma~\ref{alt:accepted-lift}, give a physical $(nk,M_k)$ code with
\begin{equation}
 F(\mathcal C_k;\mathcal W)
 \ge v_{E,n}^{-k}c^{-1}
       F(R_{\widehat\omega},\pi_{M_k}\otimes\xi_{E^k}).
\end{equation}
Taking $k\to\infty$ proves
$E_{\mathrm{sc}}(\mathcal W,r/n)
\le(d_\theta+\log v_{E,n})/n$.
Let $\varepsilon\to0$, $\ell\searrow I(Z:E)_\theta$,
and $r\nearrow I_\theta$. Rate monotonicity and the $1$-Lipschitz
bound extend this to \eqref{alt:symmetric-auxiliary-bound}.

Finally, $d_\theta\le D(\theta\|\widehat\omega_n)$ by the relative-entropy
chain rule. Apply the minimax argument of
\eqref{alt:variational-exponent-minimax} on the compact convex set
of symmetric auxiliary states to obtain
\begin{equation}
 E_{\mathrm{sc}}(\mathcal W,R)
 \le\sup_{\alpha>1}\frac{\alpha-1}{\alpha}
       \left[R-\frac1n\widetilde P_\alpha^{(1)}(\widehat\omega_n)\right]
       +\frac{\log v_{E,n}}n.
 \label{alt:finite-block-bridge}
\end{equation}
The uniform estimate \eqref{alt:uniform-weighted-comparison} gives
convergence of the supremum on the right, and the last term vanishes
because $v_{E,n}$ is polynomial. Taking $n\to\infty$ proves the claim.
\end{proof}

\subsection{Regularization by blocking}\label{sec:regularized-achievability}
We apply the preceding construction to arbitrary input blocks and
optimize over their ensembles and lengths.

\begin{theorem}[Achievability bound]\label{shell-interpolation}
For every finite-dimensional channel $\mathcal W:A\to BE$ and every
$R\ge0$,
\begin{equation}
 E_{\mathrm{sc}}(\mathcal W,R)
 \le\sup_{\alpha>1}\frac{\alpha-1}{\alpha}
       [R-P_\alpha(\mathcal W)].
 \label{alt:regularized-achievability}
\end{equation}
\end{theorem}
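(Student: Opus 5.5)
The plan is to combine Proposition~\ref{alt:operational-bridge} with Proposition~\ref{alt:controlled-comparison}, applied to ensembles over blocks of $k$ channel uses, and then pass back to $\mathcal W$ with the block-scaling identity \eqref{eq:exponent-block-scaling}.

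\emph{Single block.} Fix $k\ge1$ and a finite input ensemble for $\mathcal W^{\otimes k}$, with output cqq state $\omega^{(k)}$. Treat $\mathcal W^{\otimes k}$ as one channel. Proposition~\ref{alt:operational-bridge} then gives, for every rate $R'$,
\begin{equation*}
 E_{\mathrm{sc}}(\mathcal W^{\otimes k},R')\le\lim_{n\to\infty}\sup_{\alpha>1}\frac{\alpha-1}{\alpha}\Bigl[R'-\tfrac1n\widetilde P_\alpha^{(1)}(\widehat\omega^{(k)}_n)\Bigr].
\end{equation*}
The uniform estimate \eqref{alt:uniform-weighted-comparison} holds with a constant depending only on the block's output dimensions and alphabet. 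It replaces $\tfrac1n\widetilde P_\alpha^{(1)}(\widehat\omega_n^{(k)})$ by $P_\alpha^{(1)}(\omega^{(k)})$ uniformly in $\alpha$, with error at most $C\log(n+1)/n$. Letting $n\to\infty$ yields
\begin{equation*}
 E_{\mathrm{sc}}(\mathcal W^{\otimes k},R')\le\sup_{\alpha>1}\frac{\alpha-1}{\alpha}\bigl[R'-P_\alpha^{(1)}(\omega^{(k)})\bigr].
\end{equation*}

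\emph{Back to $\mathcal W$.} Set $R'=kR$ and divide by $k$ using \eqref{eq:exponent-block-scaling}:
\begin{equation*}
 E_{\mathrm{sc}}(\mathcal W,R)\le\sup_{\alpha>1}\frac{\alpha-1}{\alpha}\Bigl[R-\tfrac1kP_\alpha^{(1)}(\omega^{(k)})\Bigr].
\end{equation*}
This holds for every ensemble and every $k$, so we infimize the right side over both.

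\emph{Main obstacle: exchanging $\inf$ and $\sup$.} The target is $\sup_\alpha$ of $R$ minus the supremum over ensembles and $k$, that is, $\sup_\alpha\inf_{k,\omega}$, whereas the bound just obtained is $\inf_{k,\omega}\sup_\alpha$. I would handle this in three steps.

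\begin{enumerate}
\item Write $g_{k,\omega}(u):=u\,\tfrac1kP^{(1)}_{1/(1-u)}(\omega^{(k)})$ with $u=(\alpha-1)/\alpha\in(0,1)$. By \eqref{eq:optimized-value-u}, $g_{k,\omega}(u)=\tfrac1k\log\sum_xp_x\ell_x(u)$. The integral representation, Lemma~\ref{lem:integral}, shows $g_{k,\omega}$ is a $u$-integral of private information of tilted states, bounded by $\log d_B$ per use. Monotonicity of the Schatten quantities in the order should give convexity of $g_{k,\omega}$ in $u$; if not, I would use Lemma~\ref{lem:operational-convexity} (the left side is convex in $R$) and apply the same argument to its convex hull.
\item The family $\{g_{k,\omega}\}$ is directed upward. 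Additivity (Lemma~\ref{lem:ensemble-additivity}) gives
\begin{equation*}
 \tfrac1{k+k'}P_\alpha^{(1)}(\omega^{(k)}\otimes\omega^{(k')})=\tfrac{k}{k+k'}\tfrac1kP_\alpha^{(1)}(\omega^{(k)})+\tfrac{k'}{k+k'}\tfrac1{k'}P_\alpha^{(1)}(\omega^{(k')}).
\end{equation*}
Repeating a block many times, $(\omega^{(k)})^{\otimes j}$ at length $jk$ has the same normalized value, so any two members are dominated, up to $\varepsilon$ on compact $u$-sets, by a common member.
\item The functions $uR-g_{k,\omega}(u)$ are concave in $u$, uniformly Lipschitz with constant $R+\log d_B$, and the $u$-domain can be compactified to $[0,1]$. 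A minimax theorem for directed families, or Dini/Sion applied on the compact $u$-interval, then interchanges $\inf_{k,\omega}$ and $\sup_u$.
\end{enumerate}

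This yields $E_{\mathrm{sc}}(\mathcal W,R)\le\sup_\alpha\frac{\alpha-1}{\alpha}[R-P_\alpha(\mathcal W)]$, which is \eqref{alt:regularized-achievability}. I expect the delicate point to be the endpoint behaviour near $u\to1$, that is $\alpha\to\infty$, when justifying the interchange. There I would use the bound $E_{\mathrm{sc}}\ge R-\log d_B$ from Lemma~\ref{lem:operational-convexity} and the continuous extension to $P_\infty^{(1)}$ from Lemma~\ref{lem:ensemble-order-infinity}.
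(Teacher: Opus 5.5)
Your reduction to single blocks is sound. Propositions~\ref{alt:operational-bridge} and~\ref{alt:controlled-comparison} applied to $\mathcal W^{\otimes k}$, together with \eqref{eq:exponent-block-scaling}, give $E_{\mathrm{sc}}(\mathcal W,R)\le\sup_{0<u<1}\{uR-g_{k,\omega}(u)\}$ for every block ensemble. The gap is the interchange of $\inf_{k,\omega}$ and $\sup_u$.

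Step~2 is false as stated. Additivity yields $\tfrac{k}{k+k'}g_1+\tfrac{k'}{k+k'}g_2$, which is a convex combination, and repeating a block only reproduces $g_{k,\omega}$ itself. Nothing in your construction produces a member lying above $\max\{g_1,g_2\}$, and there is no reason one should exist, since near-optimal ensembles generally depend on $\alpha$. So no minimax theorem for directed families applies. What additivity does give is a convex-like family. The exchange (e.g.\ Ky Fan's theorem on the compact $u$-interval) then requires $u\mapsto uR-g_{k,\omega}(u)$ to be concave. Convexity of $g_{k,\omega}$ is therefore indispensable, not optional.

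Your justification does not supply this convexity. Monotonicity in the order gives at most that $g(u)/u=P^{(1)}_{1/(1-u)}$ is nondecreasing, which is weaker than convexity. The convex-hull fallback also fails. A bound $E_{\mathrm{sc}}\le g^*$ only encodes the closed convex envelope of $g$, so without convexity you would end with $\sup_{k,\omega}\operatorname{conv}g_{k,\omega}$, which can be strictly below $uP_\alpha(\mathcal W)$.

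The missing fact is true, for a reason you did not use. The function $\frac un\widetilde P^{(1)}_{1/(1-u)}(\widehat\omega_n)=\frac1n\sup_\theta\{uI_\theta-D(\theta\|\widehat\omega_n)\}$ is a supremum of affine functions of $u$. By \eqref{alt:uniform-weighted-comparison} it converges uniformly in $u$ to $uP_\alpha^{(1)}(\omega)$, which is therefore convex. With that, your route can be completed.

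The paper instead avoids any minimax over ensembles by conjugating first.
\begin{itemize}
\item Evaluating \eqref{alt:symmetric-auxiliary-bound} at $R=I_\theta/n$ gives, pointwise in $u$, $E_{\mathrm{sc}}^*(\mathcal W,u)\ge[u\widetilde P_\alpha^{(1)}(\widehat\omega_n)-\log v_{E,n}]/n$. Letting $n\to\infty$ gives $E_{\mathrm{sc}}^*(\mathcal W,u)\ge uP_\alpha^{(1)}(\omega)$.
\item The scaling $E_{\mathrm{sc}}^*(\mathcal W^{\otimes k},u)=kE_{\mathrm{sc}}^*(\mathcal W,u)$ allows the supremum over ensembles and $k$ at fixed $u$, so $E_{\mathrm{sc}}^*(\mathcal W,u)\ge uP_\alpha(\mathcal W)$.
\item Fenchel--Moreau, using only that $E_{\mathrm{sc}}$ is closed and convex in $R$ (Lemma~\ref{lem:operational-convexity}), returns \eqref{alt:regularized-achievability}.
\end{itemize}
Because everything is done at fixed $u\in(0,1)$, the endpoint $\alpha\to\infty$ you were worried about needs no separate treatment.
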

\begin{proof}
To optimize over input ensembles and channel blocks before optimizing
the R\'enyi order, it is convenient to use the convex conjugate
\begin{equation}
 E_{\mathrm{sc}}^*(\mathcal W,u)
 :=\sup_{R\ge0}\{uR-E_{\mathrm{sc}}(\mathcal W,R)\},\qquad 0\le u\le1.
 \label{eq:conjugate-definition}
\end{equation}
Fix $0<u<1$ and set $\alpha=1/(1-u)$. For the pinched ensemble
constructed from any finite output ensemble $\omega$, evaluate
\eqref{alt:symmetric-auxiliary-bound} at $R=I_\theta/n$ when
$I_\theta>0$. When $I_\theta\le0$, use $E_{\mathrm{sc}}^*\ge0$.
Optimizing over the symmetric auxiliary states gives
\begin{equation}
 E_{\mathrm{sc}}^*(\mathcal W,u)
 \ge\frac{u\widetilde P_\alpha^{(1)}(\widehat\omega_n)
                   -\log v_{E,n}}n.
 \label{alt:bridge-conjugate}
\end{equation}
Here $d_\theta\le D(\theta\|\widehat\omega_n)$ supplies the full
relative-entropy penalty. Letting $n\to\infty$ in
\eqref{alt:uniform-weighted-comparison} yields
\begin{equation}
 E_{\mathrm{sc}}^*(\mathcal W,u)\ge uP_\alpha^{(1)}(\omega).
 \label{shell-main-bound}
\end{equation}
The block scaling \eqref{eq:exponent-block-scaling} implies
\begin{equation}
 E_{\mathrm{sc}}^*(\mathcal W^{\otimes k},u)
 =kE_{\mathrm{sc}}^*(\mathcal W,u).
 \label{eq:conjugate-block-scaling}
\end{equation}
Apply \eqref{shell-main-bound} to every finite output ensemble of
$\mathcal W^{\otimes k}$ and optimize over ensembles and $k$. By
Definition~\ref{def:regularized-private-information},
\begin{equation}
 E_{\mathrm{sc}}^*(\mathcal W,u)\ge uP_\alpha(\mathcal W).
 \label{alt:regularized-conjugate-bound}
\end{equation}

Lemma~\ref{lem:operational-convexity} implies that the exponent is
closed and convex, with supporting slopes in $[0,1]$. Its elementary
bounds make the conjugate finite on this interval. Being convex and
lower semicontinuous, the conjugate is continuous there, including at
the endpoints. Convex duality therefore gives
\begin{align}
 E_{\mathrm{sc}}(\mathcal W,R)
 &=\sup_{0<u<1}\{uR-E_{\mathrm{sc}}^*(\mathcal W,u)\},\\
 &\le\sup_{\alpha>1}\frac{\alpha-1}{\alpha}
       [R-P_\alpha(\mathcal W)].
\end{align}
This proves the achievability bound.
\end{proof}

\subsection{Secret-key generation has the same exponent}\label{sec:generation}

Unassisted secret-key generation allows Alice to prepare the key register
jointly with the channel input, rather than receive an externally supplied
uniform message \cite[Section~3.2]{WTB}. The target remains the uniform
ideal key in \eqref{eq:key-fidelity}.

\begin{corollary}\label{cor:generation}
For every pair of integers $n,M\ge1$, an $n$-use secret-key generation
code with an $M$-valued key and squared fidelity $f$ to the ideal key in
\eqref{eq:key-fidelity} admits an $(n,M)$ secret-key transmission code with squared
fidelity at least $f/4$.
Consequently, the exact exponent in
Theorem~\ref{thm:exact-quantum-exponent} also holds for unassisted
secret-key generation.
\end{corollary}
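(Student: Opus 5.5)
The plan is to reduce a generation code to a classical ensemble over key values, express both fidelities through the block formula \eqref{eq:fidelity-blocks}, and compare the two block sums. The comparison is not fully settled: Step 3 is the main obstacle, and the factor-four route there is only sketched.

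\emph{Step 1 (classical key register).} An $n$-use generation code prepares a state on $KA^n$, sends $A^n$ through $\mathcal W^{\otimes n}$, and lets Bob apply a POVM $\{\Lambda_k\}$ producing $\widehat K$. The ideal key $\kappa_{K\widehat K}\otimes\sigma_{E^n}$ is block diagonal in $K$. Hence completely dephasing $K$ before the channel does not decrease the fidelity, since pinching fixes the target and fidelity obeys data processing. We may therefore assume the prepared state is $\sum_kp_k|k\rangle\langle k|\otimes\rho_{A^n,k}$. Writing $\gamma_{E,k}=\mathcal T_k^\dagger(\Lambda_k)$ and $a_k(\sigma)=\sqrt{F(\gamma_{E,k},\sigma_E)}$, the direct-sum fidelity formula, exactly as in \eqref{eq:fidelity-blocks}, gives
\begin{equation}
 \sqrt f=\max_{\sigma_E}\sum_k\sqrt{\frac{p_k}{M}}\,a_k(\sigma_E).
\end{equation}
Using the same encoding states $\rho_{A^n,m}$ and POVM for a uniform message gives a transmission code with
\begin{equation}
 \sqrt{F'}=\max_{\sigma_E}\frac1M\sum_ma_m(\sigma_E).
\end{equation}

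\emph{Step 2 (AM--GM split).} Fix the maximizing $\sigma_E$ for $f$ and apply $\sqrt{p_k/M}\le\tfrac12(p_k+1/M)$. This gives
\begin{equation}
 \sqrt f\le\tfrac12\sum_kp_ka_k+\tfrac12\cdot\frac1M\sum_ka_k .
\end{equation}
If the first sum can be bounded by the second, we get $\sqrt f\le\frac1M\sum_ka_k\le\sqrt{F'}$, hence $F'\ge f$. A cruder bound that only controls the $p$-weighted term up to a factor two would still give $\sqrt{F'}\ge\sqrt f/2$, i.e.\ the claimed $f/4$.

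\emph{Step 3 (main obstacle: nonuniform key distribution).} The term $\sum_kp_ka_k$ need not be comparable to the uniform average when $p$ is skewed. Key values of large $p_k$ are exactly where the generation code may concentrate its success. To handle this, I would modify the transmission encoder rather than reuse it verbatim. I would first try an encoder that assigns messages to key values through a randomized bijection $\pi$. Alice applies $\rho_{A^n,\pi(m)}$ and Bob relabels his outcome by $\pi^{-1}$. Averaging $\pi$ uniformly over the symmetric group turns the transmission sum into the uniform average $\frac1M\sum_ka_k$ for every $\pi$. That alone therefore does not help, so a genuinely different construction is needed.

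My fallback is a split-and-select argument. Partition keys into a heavy set $\{p_k\ge1/M\}$ and a light set. On the light set, $\sqrt{p_k/M}\le1/M$, so the contribution is directly dominated by the uniform average. On the heavy set, I would use $a_k^2=F(\gamma_{E,k},\sigma)\le\Tr\gamma_{E,k}\le1$ together with $\sum_k\sqrt{p_k/M}\le1$. The aim is to show that the heavy contribution to $\sqrt f$ is at most the light contribution plus the uniform average, which yields the factor $2$ on $\sqrt f$ and hence $4$ on $f$. I have not verified that this heavy-set inequality holds, and it is the step I expect to need the most work; the achievability and converse parts are already in place.

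\emph{Step 4 (exponent).} The exponent follows from the conversion together with two facts. First, every transmission code is a generation code with uniform $p$. Second, the converse of Proposition~\ref{prop:oneshot} then bounds generation codes up to the constant factor $4$, which does not affect the exponent. Hence Theorem~\ref{thm:exact-quantum-exponent} transfers to unassisted secret-key generation.
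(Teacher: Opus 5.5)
\textbf{There is a genuine gap in Step~3.} It cannot be closed along the lines you sketch. Any argument that reuses the generation encoder and POVM verbatim for uniform messages, and then compares $\sum_k\sqrt{p_k/M}\,a_k$ with $\frac1M\sum_ka_k$, must fail.

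Take a deterministic key, $p_1=1$, with $\Lambda_1=I$ and $\Lambda_k=0$ for $k\ne1$. Then $a_1=1$ at $\sigma_E\propto\gamma_{E,1}$, so $f=1/M$. Since $a_k=0$ for $k\neq1$, the reused code has $F'=1/M^2$. In your heavy/light split, the heavy contribution to $\sqrt f$ is $M^{-1/2}$, while the light contribution plus the uniform average is $1/M$. The proposed inequality is therefore false by a factor $\sqrt M$. The AM--GM route in Step~2 fails in the same way, comparing $\sum_kp_ka_k=1$ with $\frac1M\sum_ka_k=1/M$. No inequality among these same numbers $p_k,a_k$ can give $f/4$; the code itself must be changed.

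\textbf{The missing idea is to re-weight the key values by duplication,} which is what the paper does.
\begin{enumerate}
\item For each $k$ with $p_k>0$, allocate $c_k=\lceil Mp_k\rceil$ message labels. Encode all of them by $\rho_{A^n,k}$ and give each the decoding effect $\Lambda_k/c_k$.
\item Fidelity is homogeneous, so each such label has root fidelity $a_k/\sqrt{c_k}$ against the same $\sigma_E$. Together the labels for $k$ contribute $\sqrt{c_k}\,a_k\ge\sqrt{Mp_k}\,a_k$.
\item The number of labels $L=\sum_kc_k$ satisfies $M\le L\le2M$, because at most $M$ key values are supported.
\item For uniform messages the root fidelity is therefore at least $\frac1L\sqrt M\sum_k\sqrt{p_k}\,a_k=\frac ML\sqrt f\ge\frac12\sqrt f$.
\item Keep the $M$ labels with the largest individual contributions; their average does not decrease.
\item Add all unused effects, including $\Lambda_k$ for $p_k=0$, to one retained outcome. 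Fidelity is monotone in the positive-semidefinite order, so this does not hurt.
\end{enumerate}
This yields an $(n,M)$ transmission code with squared fidelity at least $f/4$. In the deterministic example above, the construction reduces to uniform guessing and recovers $f$ exactly. With this conversion in place, your Steps~1 and~4 complete the argument.
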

\begin{proof}
Dephasing both key registers cannot decrease fidelity to the ideal key.
A generation code is therefore specified by probabilities $p_m$,
conditional encoding states, and decoding effects $\Lambda_m$.
For a common reference $\sigma$, put
$f_m=\sqrt{F(\mathcal T_m^\dagger(\Lambda_m),\sigma)}$.
Here $\mathcal T_m^\dagger$ is defined by \eqref{eq:conditional-adjoint} for the
conditional output of message $m$. At this reference the squared fidelity is
$M^{-1}(\sum_m\sqrt{p_m}f_m)^2$.
Duplicate each supported encoding $k_m=\lceil Mp_m\rceil$ times and
divide its decoding effect by $k_m$. The number of labels is
$L=\sum_{p_m>0}k_m$, with $M\le L\le2M$.
For uniform messages, the resulting code has squared fidelity at least
\begin{align}
 \frac1{L^2}\left(\sum_m\sqrt{k_m}f_m\right)^2
 &\ge\frac{M^2}{L^2}\,
       \frac1M\left(\sum_m\sqrt{p_m}f_m\right)^2\\
 &\ge\frac1{4M}\left(\sum_m\sqrt{p_m}f_m\right)^2.
\end{align}
Retain the $M$ labels with largest individual root-fidelity contributions.
Their average cannot decrease. Complete the retained effects to a POVM
by assigning the unused effect to one outcome; positivity makes its
fidelity contribution nondecreasing. Choosing a fidelity-optimal $\sigma$
proves the stated conversion. Every secret-key transmission code is also a generation
code, so the two optimal fidelities differ by at most a factor four.
The same comparison holds after optimizing over $\log M\ge nR$;
dividing its logarithm by $n$ proves equality of the exponents.
\end{proof}

\section{Classical channels}\label{sec:classical}
A classical joint channel $W(b,e|a)$ is a quantum channel that measures
the input in a fixed basis and prepares the diagonal output state with
law $W(\cdot,\cdot|a)$. We show that its R\'enyi private information
has the variational form stated in Section~\ref{sec:main-results}.

Fix $\alpha>1$ and put $u=(\alpha-1)/\alpha$.
The norm representation \eqref{eq:radius-norm} gives a direct diagonal reduction. Every
positive output $A$ of $\mathcal T_x$ is diagonal, so sandwiched data processing
under Bob's dephasing $\Delta_B$ gives
\begin{equation}
 \|(\Delta_B\tau)^{-u/2}A(\Delta_B\tau)^{-u/2}\|_\alpha
 \le\|\tau^{-u/2}A\tau^{-u/2}\|_\alpha.
\end{equation}
This holds for every Eve reference and auxiliary family, so the
minimizing $\tau$ may be diagonal without exchanging optimizations.
In the dual quasi-norm representation of Lemma~\ref{lem:dual-norm},
every output of $\mathcal T_x^\dagger$
is diagonal; Eve's data-processing inequality then permits diagonal
$\sigma$. With $\tau$ diagonal, operator concavity
$\Delta_B(\eta^u)\le\Delta_B(\eta)^u$ permits diagonal $\eta$.
The minimizing $\nu$ in the quasi-norm variational formula is then
diagonal as well. Thus all the optimizations reduce to distributions.

The resulting scalar
optimization has the following variational form.
\begin{proposition}[Classical variational representation]\label{prop:classical-radius}
For every classical distribution $p_{XBE}=(p_{xbe})$ and $\alpha>1$,
\begin{equation}
 P_\alpha^{(1)}(p)
 =\max_{q\ll p}\left\{I(X:B)_q-I(X:E)_q-\frac{\alpha}{\alpha-1}D(q\|p)\right\}.
 \label{eq:classical-radius}
\end{equation}
\end{proposition}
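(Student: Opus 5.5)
The plan is a Gibbs (Donsker--Varadhan) variational formula applied letterwise, followed by explicit evaluation of the four reference optimizations. The optimizations can be exchanged one at a time, using the concavity structure already exploited in \eqref{eq:variational-objective-concavity}.

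\textbf{Step 1: reduce to a log-partition function.} By the diagonal reduction preceding the proposition, all of $\tau,\sigma,\eta_x,\nu_x$ may be taken as distributions. For strictly positive ones, the objective in \eqref{eq:trace-objective} becomes
\begin{equation*}
 G_\alpha=\sum_{x,b,e}p_{xbe}\,e^{uf(x,b,e)},\qquad
 f=\log\eta_x(b)-\log\tau(b)+\log\sigma(e)-\log\nu_x(e),
\end{equation*}
with $u=(\alpha-1)/\alpha$. The Gibbs variational principle gives
\begin{equation*}
 \tfrac1u\log G_\alpha=\max_{q\ll p}\Phi(q;\tau,\sigma,\eta,\nu),\qquad
 \Phi:=\mathbb E_q f-\tfrac1uD(q\|p).
\end{equation*}
Boundary references are handled by the smoothing argument of \eqref{eq:reference-smoothing}, or by continuity.

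\textbf{Step 2 (lower bound).} Fix any $q\ll p$. At the two supremum levels, choose $\sigma=q_E$ and $\eta_x=q_{B|X=x}$, so that $\mathbb E_q\log\eta_X(B)=-H(B|X)_q$ and $\mathbb E_q\log\sigma(E)=-H(E)_q$. For arbitrary $\tau$ and $\nu$, Gibbs' inequality gives $-\mathbb E_q\log\tau(B)\ge H(B)_q$ and $-\mathbb E_q\log\nu_X(E)\ge H(E|X)_q$. Hence, for every $\tau$ and $\nu$,
\begin{equation*}
 \tfrac1u\log G_\alpha\ge\Phi(q;\cdots)\ge I(X:B)_q-I(X:E)_q-\tfrac1uD(q\|p).
\end{equation*}
Taking $\inf_\nu$, $\sup_\eta$, $\inf_\tau$, $\sup_\sigma$ in the order of Definition~\ref{def:radius} gives "$\ge$" for every $q$, and therefore for the supremum over $q$.

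\textbf{Step 3 (upper bound).} I would use the exchanged order $\inf_\tau\sup_\sigma\sup_\eta\inf_\nu$ permitted by Lemma~\ref{lem:dual-norm}, and push $\sup_q$ outward one step at a time.

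First, for fixed $\tau,\sigma,\eta$, the function $\Phi$ is convex in $\nu$ (because of $-\log\nu$) and concave in $q$ (affine minus $D/u$). The set $\{q\ll p\}$ is compact, so Sion's theorem gives $\inf_\nu\sup_q=\sup_q\inf_\nu$, and the inner infimum equals $H(E|X)_q$. The suprema over $\eta$ and $\sigma$ then commute trivially with $\sup_q$ and evaluate to $-H(B|X)_q$ and $-H(E)_q$.

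What remains is
\begin{equation*}
 \inf_\tau\sup_q\bigl\{-H(B|X)_q+H(E|X)_q-H(E)_q-\mathbb E_q\log\tau(B)-\tfrac1uD(q\|p)\bigr\}.
\end{equation*}
This expression is affine, hence convex, in $\tau$. The key point is concavity in $q$. Multiplying by $u$ and replacing $uH(B)$ by $-u\,\mathbb E_q\log\tau$, the expression has exactly the form \eqref{eq:variational-objective-concavity}:
\begin{equation*}
 (1-u)H(XBE)_q+uH(E|XB)_q+uH(X|E)_q+\text{linear terms in }q,
\end{equation*}
which is concave. A second application of Sion's theorem then exchanges $\inf_\tau$ and $\sup_q$, and $\inf_\tau$ evaluates to $H(B)_q$. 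This yields "$\le$".

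\textbf{Step 4: attainment.} The maximum over $q$ is attained because the objective is upper semicontinuous on the compact set $\{q\ll p\}$.

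I expect the main obstacle to be Step 3: verifying the joint concavity in $q$ needed for the two Sion exchanges, together with the boundary behaviour of $\tau$ and $\nu$. For the boundary, I would first restrict to the smoothed reference sets $\Sc_\varepsilon$ and then let $\varepsilon\searrow0$.
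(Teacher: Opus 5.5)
Your proposal is correct and follows essentially the paper's own proof: Gibbs' formula at fixed faithful references, Sion's theorem to exchange $\inf_{\boldsymbol\nu}$ with $\max_q$ in the order $\inf_\tau\sup_{\sigma,\boldsymbol\eta}\inf_{\boldsymbol\nu}$ from Lemma~\ref{lem:dual-norm}, the explicit choices $\eta=q_{B|X}$, $\sigma=q_E$, $\nu=q_{E|X}$, and a final one-compact-domain Sion exchange in $\tau$ justified by the concavity decomposition \eqref{eq:variational-objective-concavity}. Your separate lower bound in Step~2 is harmless but redundant, since Step~3 is already a chain of equalities; and there is one small slip: the remaining expression is convex, not affine, in $\tau$, because $-\mathbb E_q\log\tau(B)$ is convex in $\tau$, and convexity is all Sion needs.
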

\begin{proof}
Keep $u=(\alpha-1)/\alpha$. After the diagonal reduction, the trace sum
in Definition~\ref{def:radius} is a scalar moment. The minimax exchanges
in Lemma~\ref{lem:dual-norm} put its reference optimizations in the order
$\inf_\tau\sup_{\eta,\sigma}\inf_\nu$.
For fixed faithful references, Gibbs' variational formula gives
\begin{equation}
 \frac1u\log\sum_{x,b,e}p_{xbe}
 \left[\frac{\eta(b|x)\sigma(e)}{\tau(b)\nu(e|x)}\right]^u
 =\max_{q\ll p}\left\{
 \mathbb E_q\log\frac{\eta(B|X)\sigma(E)}{\tau(B)\nu(E|X)}
 -\frac1uD(q\|p)\right\}.
\end{equation}
At fixed $\tau,\sigma,\eta$, the expression is concave in $q$ and
convex in $\nu$, so Sion's theorem exchanges the maximum over $q$
and the infimum over $\nu$. The maxima over $q,\eta,\sigma$ then
commute. Optimizing the three auxiliary distributions gives
$\eta=q_{B|X}$, $\sigma=q_E$, and $\nu=q_{E|X}$, by faithful
approximation if necessary. Consequently the radius equals
$\inf_{\tau>0}\max_{q\ll p}G_\tau(q)$, where
\begin{equation}
 G_\tau(q)=I(X:B)_q-I(X:E)_q+D(q_B\|\tau)-\frac1uD(q\|p).
\end{equation}
For faithful $\tau$, the entropy part of $uG_\tau(q)$ is
\begin{equation}
 (1-u)H_q(XBE)+uH_q(E|XB)+uH_q(X|E),
\end{equation}
so $G_\tau$ is continuous and concave in $q$ on the compact simplex
supported on $\operatorname{supp}p$. It is continuous and convex in
$\tau$ on the convex set of faithful distributions.
Sion's minimax theorem in its one-compact-domain form \cite{Sion}
therefore interchanges the infimum and maximum.
Finally $\inf_{\tau>0}D(q_B\|\tau)=0$, including singular $q_B$ by
faithful approximation. This proves \eqref{eq:classical-radius}.
\end{proof}

For a classical channel $W$, Definition~\ref{def:channel-radius} reads
\begin{equation}
 P_\alpha^{(1)}(W)=\sup_{p_{XA}}P_\alpha^{(1)}(p_{XBE}),
 \qquad p_{XABE}=p_{XA}W.
 \label{eq:classical-channel-identification}
\end{equation}
We now verify the channel formula~\eqref{eq:classical-penalty-capacity}.
For the upper bound in~\eqref{eq:classical-penalty-capacity}, extend a candidate changed output law $q_{XBE}$ by the original posterior $p_{A|XBE}$. This preserves $D(q_{XBE}\|p_{XBE})$. The KL chain rule bounds that divergence below by the conditional channel-law defect in ~\eqref{eq:classical-penalty-capacity}. For the reverse bound, take any candidate $q_{XABE}$ in that formula and choose the physical prefix $p_{XA}=q_{XA}$. Discarding $A$ gives
\begin{align}
 D(q_{XBE}\|(q_{XA}W)_{XBE})
 &\le D(q_{XABE}\|q_{XA}W)\\
 &=D(q_{BE|XA}\|W_{BE|A}\mid q_{XA}).
\end{align}
The variational formula \eqref{eq:classical-radius} then gives the reverse inequality in~\eqref{eq:classical-penalty-capacity}. Both arguments apply to every blocklength, so the general exponent theorem applies directly to classical channels, including stochastic encodings.

\section{Conclusions}
We determined the exact strong converse exponent of squared joint-key
fidelity for finite-dimensional quantum wiretap channels in terms of a
regularized R\'enyi private information. A matching change-of-channel
and pinching argument and a tilted-state integral prove exponential fidelity
decay at every rate above private capacity. For classical channels,
the R\'enyi private information reduces to a variational expression
involving ordinary private information and a relative-entropy penalty.
The examples in Appendix~\ref{sec:criteria} show that joint-key fidelity and marginal
error criteria can have different exponents and must be treated separately.

An important open question is to identify classes of quantum wiretap
channels for which the R\'enyi private information, optimized over input
ensembles, is additive under tensor powers. For such channels, the exact
strong converse exponent would admit a single-letter characterization.
A second question is to determine the exact strong converse exponent of
$[p_{\rm succ}-\delta_T]_+$ and its relation to the squared-fidelity exponent.

\paragraph*{Statement on the use of artificial intelligence.}
The authors used OpenAI's ChatGPT and
Codex to assist with proof exploration, language editing, organization, literature searches, and
checks of mathematical derivations. The authors reviewed and verified all
resulting text, citations, and calculations and take full responsibility for
the content of the manuscript.

\paragraph*{Acknowledgments.}
HC acknowledges support from National Science and Technology Council (NSTC 115-2628-E-002-005, NSTC 114-2119-M-001-002, and NSTC 115-2124-M-002-014) and Ministry of Education (NTU-115V2016-1, NTU-CC115L893705, and NTU-115L900702).
CH received funding by the Deutsche Forschungsgemeinschaft (DFG, German
Research Foundation) -- 550206990, the Federal Ministry of Research,
Technology and Space (BMFTR), Germany, under the QC service center QUICS
(grant no.~13N17418), and by Germany's Excellence Strategy -- EXC 2123/2
QuantumFrontiers -- 390837967. MT is supported by the National Research Foundation, Singapore, via an NRF Investigatorship award (NRF-NRFI10-2024-0006).

\appendix
\section{On other error criteria}\label{sec:criteria}
The exact formula \eqref{eq:exact-quantum-exponent} concerns squared
joint-key fidelity for a specified joint channel. Independently
specified quantum marginals need not be compatible, whereas every
pair of classical channels admits a joint implementation. We use $p_{\rm succ},\delta_T$ from
\eqref{eq:gap-definition}. For an $(n,M)$ secret-key transmission code $\mathcal C$
with uniformly distributed messages, write the joint trace-distance
expression in Proposition~\ref{prop:oneshot-trace} as
\begin{equation}
 G_T(\mathcal C;\mathcal W)=1-\min_{\sigma_{E^n}}T(\rho_{M\widehat M E^n},
                    \kappa_{M\widehat M}\otimes\sigma_{E^n}).
\end{equation}
Fidelity data processing, the triangle inequality, and the Fuchs--van de Graaf bounds \cite{FuchsGraaf} give
\begin{align}
[p_{\rm succ}-\delta_T]_+
 &\le G_T(\mathcal C;\mathcal W)\le\sqrt{F(\mathcal C;\mathcal W)},\label{eq:criterion-comparison-start}\\
 1-\sqrt{1-F(\mathcal C;\mathcal W)}
 &\le G_T(\mathcal C;\mathcal W).
\label{eq:criterion-comparison}
\end{align}
Thus these criteria detect the same vanishing-error regime, but the bounds alone do not identify their strong converse exponents. Two examples show that the exponents can differ.

\begin{proposition}[Reliability--secrecy gap separation]\label{prop:gap-separation}
If Bob's marginal channel is a postprocessing of Eve's, the largest
reliability--secrecy gap $[p_{\rm succ}-\delta_T]_+$ at fixed $n,M$ equals $1/M$;
hence its exponent at rate $R$ is $R$. For the complementary outputs of the
binary erasure channel with erasure probability $1/2$,
\begin{equation}
 E_{\mathrm{sc}}(\mathcal W,\log2)\le\log(8/5)<\log2.
\end{equation}
\end{proposition}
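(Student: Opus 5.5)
The two claims are independent, and I would prove each directly from the definitions in \eqref{eq:gap-definition} and the block formula \eqref{eq:fidelity-blocks}, without using the R\'enyi machinery.

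\emph{Degraded-from-Eve case.} Suppose Bob's marginal channel is $\mathcal N\circ\mathcal W_E$. Then $\mathcal W^{\otimes n}$ has Bob marginal $\mathcal N^{\otimes n}\circ\mathcal W_E^{\otimes n}$, so it suffices to treat a single $(1,M)$ code for an arbitrary such channel.
\begin{itemize}
\item \emph{Upper bound.} Pull Bob's POVM back to Eve by setting $\Lambda'_m=\mathcal N^\dagger(\Lambda_m)$. This is a POVM on $E$ with $\Tr\Lambda'_m\omega_{E,m}=\Tr\Lambda_m\omega_{B,m}$. For any $\sigma_E$ we have $\Tr\Lambda'_m\omega_{E,m}\le\Tr\Lambda'_m\sigma_E+T(\omega_{E,m},\sigma_E)$. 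Averaging over $m$ and using $\sum_m\Lambda'_m=I$ gives $p_{\rm succ}\le 1/M+\frac1M\sum_mT(\omega_{E,m},\sigma_E)$. Minimizing over $\sigma_E$ yields $[p_{\rm succ}-\delta_T]_+\le1/M$.
\item \emph{Attainment.} Use a constant encoder, for which $\delta_T=0$, and let Bob guess uniformly, so $p_{\rm succ}=1/M$.
\item \emph{Exponent.} The largest gap at fixed $n,M$ is therefore exactly $1/M$. Taking $M=\lceil e^{nR}\rceil$ (larger $M$ only lowers it) gives exponent $R$.
\end{itemize}

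\emph{Erasure example.} For the binary erasure channel with erasure probability $1/2$ and its complementary output, restricted to classical bit inputs, the joint channel is as follows. With probability $1/2$ Bob receives $x$ and Eve receives the erasure flag $\mathrm{e}$. With probability $1/2$ Bob receives $\mathrm{e}$ and Eve receives $x$. All outputs are diagonal.

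I will use the $(1,2)$ code that sends the message bit uncoded. Bob outputs the received bit, and on an erasure he guesses uniformly. The correct-decoding Eve operator is then
\begin{equation}
 \gamma_{E,m}=\tfrac12|\mathrm{e}\rangle\langle\mathrm{e}|+\tfrac14|m\rangle\langle m|.
\end{equation}

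To evaluate \eqref{eq:fidelity-blocks}, note that by symmetry and concavity we may take $\sigma_E$ diagonal with weights $(s,t,t)$ on $(\mathrm{e},0,1)$, subject to $s+2t=1$. Each root fidelity is then $\sqrt{s/2}+\sqrt{t/4}$. Cauchy--Schwarz under the constraint $s+2t=1$ gives the maximum
\begin{equation}
 \sqrt{\tfrac12+\tfrac14\cdot\tfrac12}=\sqrt{5/8},
\end{equation}
so $F=5/8$.

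The $n$-fold product code has $M=2^n$, meaning $\log M=n\log2$, and since the root fidelity in \eqref{eq:fidelity-blocks} factorizes over product reference states, its fidelity is at least $(5/8)^n$. Hence $E_{\mathrm{sc}}(\mathcal W,\log 2)\le\log(8/5)<\log2$. The only delicate point is confirming that this joint channel models the complementary erasure outputs with diagonal Eve states. That holds because Eve's output is either the flag or a classical copy of the input.
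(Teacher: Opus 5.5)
Your proposal is correct and follows the paper's proof essentially step by step. Eve simulates Bob's decoder (your pull-back $\mathcal N^\dagger(\Lambda_m)$ makes this explicit) to give $\delta_T\ge p_{\rm succ}-1/M$, and the constant encoder attains $1/M$. For the erasure bound, the uncoded-bit code has Eve block $\tfrac12|\mathrm{e}\rangle\langle\mathrm{e}|+\tfrac14|m\rangle\langle m|$, and product codes finish the argument; your Cauchy--Schwarz equality case $s=4/5$, $t=1/10$ recovers exactly the paper's reference state.

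One small correction to your justification: for the complementary (Stinespring) pair, the joint output is the coherent state $\tfrac1{\sqrt2}(|m\rangle_B|\mathrm{e}\rangle_E+|\mathrm{e}\rangle_B|m\rangle_E)$, not a classical mixture. This does not affect your result. Bob's POVM is diagonal in $\{|0\rangle,|1\rangle,|\mathrm{e}\rangle\}$, so $\mathcal T_m^\dagger(\Lambda_m)$ depends only on the Bob-diagonal blocks, which are the same for the coherent state and the mixture. Your $\gamma_{E,m}$ and the value $5/8$ therefore stand.
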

\begin{proof}
Eve can simulate Bob's decoder. Its probability of guessing the actual message is $p_{\rm succ}$, whereas a key independent of Eve gives probability $1/M$. Trace-distance testing therefore gives $\delta_T\ge p_{\rm succ}-1/M$. A constant encoder and uniform guessing attain $[p_{\rm succ}-\delta_T]_+=1/M$.

For the erasure channel, both marginal outputs are binary erasures with probability $1/2$, so the first assertion applies. Send either basis bit, let Bob read it when present and guess uniformly otherwise. Eve's positive correct-decoding block for message $m\in\{0,1\}$ is
\(\tfrac12|e\rangle\langle e|+\tfrac14|m\rangle\langle m|\).
Against the common reference
\(\sigma_E=\tfrac45|e\rangle\langle e|+\tfrac1{10}(|0\rangle\langle0|+|1\rangle\langle1|)\),
the fidelity of this positive block is $5/8$. Product codes establish the displayed upper bound on the fidelity exponent.
\end{proof}

\begin{proposition}[Joint trace-distance separation]\label{prop:trace-separation}
For the $d$-symbol public-copy channel $W(b,e|a)=\mathbf1\{b=e=a\}$, with $d\ge2$,
\begin{align}
E_{\mathrm{sc}}(W,R)&=R,\label{eq:trace-separation-start}\\
 E_T(W,R)&:=\lim_{n\to\infty}-\frac1n\log
 \sup_{\mathcal C}G_T(\mathcal C;W)\\
 &=\max\{R/2,R-\log d\}.
\label{eq:trace-separation}
\end{align}
The supremum is over all $(n,M)$ secret-key transmission codes with $\log M\ge nR$,
as in \eqref{eq:optimal-fidelity}.
\end{proposition}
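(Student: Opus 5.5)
Since $W$ is classical and Eve receives a perfect copy of the input, every quantity in \eqref{eq:fidelity-blocks} and in the trace-distance expression of Proposition~\ref{prop:oneshot-trace} reduces to sums over the $n$-letter input word $e=a\in[d]^n$. For an $(n,M)$ code with encoding distributions $P(\cdot|m)$ on $[d]^n$ and decoding functions $0\le\Lambda_m(\cdot)$, $\sum_m\Lambda_m\le1$, Eve's correct-decoding block is the diagonal operator
\[
 \gamma_m(e)=P(e|m)\Lambda_m(e).
\]
I first record the two block formulas I will use:
\[
 \sqrt{F(\gamma_m,\sigma)}=\sum_e\sqrt{\gamma_m(e)\sigma(e)},\qquad
 G_T=\max_\sigma\frac1M\sum_e\sum_m\min\{\gamma_m(e),\sigma(e)\}.
\]
The second is the classical case of the block identity in the proof of Proposition~\ref{prop:oneshot-trace}; the mass of $\rho_{M\widehat ME}$ on blocks with $m\ne\widehat m$ is counted there automatically, because the ideal key vanishes on those blocks.

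\emph{Claim \eqref{eq:trace-separation-start}.} The upper bound $E_{\mathrm{sc}}\le R$ is Lemma~\ref{lem:operational-convexity}. For the lower bound I prove $F(\mathcal C;W)\le1/M$ for every code. Write $q(e)=M^{-1}\sum_mP(e|m)$. Cauchy--Schwarz over $m$, using $\sum_m\Lambda_m(e)\le1$, gives
\[
 \sum_m\sqrt{P(e|m)\Lambda_m(e)}\le\sqrt{Mq(e)}.
\]
A second Cauchy--Schwarz over $e$ then gives
\[
 \sum_e\sqrt{\sigma(e)Mq(e)}\le\sqrt M.
\]
Hence $\frac1M\sum_m\sqrt{F(\gamma_m,\sigma)}\le M^{-1/2}$ for every $\sigma$. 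Squaring yields $F\le 1/M$, and with $\log M\ge nR$ this gives $E_{\mathrm{sc}}\ge R$.

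\emph{Converse for $E_T$.} I bound the inner sum in two ways. First, $\min\{\gamma,\sigma\}\le\sqrt{\gamma\sigma}$, so the Cauchy--Schwarz estimate above gives $G_T\le M^{-1/2}$. Second, $\sum_m\min\{\gamma_m(e),\sigma(e)\}\le\sum_m\Lambda_m(e)\le1$ for each $e$, so summing over the $d^n$ words gives $G_T\le d^n/M$. Together these give
\[
 G_T\le\min\{M^{-1/2},d^n/M\},
\]
so $E_T\ge\max\{R/2,R-\log d\}$. I expect this step to be routine.

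\emph{Achievability for $E_T$.} I would use deterministic grouped encoding. Fix a set $S\subseteq[d]^n$ of size $K$, with $K$ dividing $M$. Map the messages onto $S$ with exactly $M/K$ messages per word, and let Bob, on seeing $e$, guess uniformly among its $M/K$ preimages, so $\Lambda_m(a(m))=K/M$. Take $\sigma$ uniform on $S$. Each $e\in S$ then contributes $(M/K)\min\{K/M,1/K\}$, and hence
\[
 G_T\ge\frac KM\min\Bigl\{1,\frac M{K^2}\Bigr\}=\min\Bigl\{\frac KM,\frac1K\Bigr\}.
\]
Choosing $K\approx\min\{\sqrt M,d^n\}$ gives $G_T\gtrsim\min\{M^{-1/2},d^n/M\}$ up to constant factors. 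Those factors come from rounding $M=\lceil e^{nR}\rceil$ to a multiple of $K$, and I would absorb them using rate monotonicity. The main care point is exactly this integer bookkeeping together with existence of the limit defining $E_T$. For the latter I would use matching upper and lower bounds whose exponents converge, so the limit exists and equals $\max\{R/2,R-\log d\}$.
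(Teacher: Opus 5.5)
Your proposal is correct and follows the paper's proof. You use the same Cauchy--Schwarz bound $F(\mathcal C;W)\le 1/M$, the same two converse bounds $G_T\le M^{-1/2}$ (via $\min\{a,b\}\le\sqrt{ab}$) and $G_T\le d^n/M$, and the same grouped encoding with about $\min\{\sqrt M,d^n\}$ groups. The only cosmetic difference is that you force equal group sizes by rounding $M$ up to a multiple of $K$, whereas the paper uses nearly equal groups, each containing at least $L$ messages.
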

\begin{proof}
For a stochastic encoder $p_{y|m}$ and decoder $D(m|y)$ on $n$ uses, Eve observes Bob's public output $y$ among $D_0=d^n$ possibilities. Optimizing Eve's reference in the c-q fidelity formula gives
\begin{equation}
 F(\mathcal C;W)=\frac1{M^2}\sum_y
   \left(\sum_m\sqrt{p_{y|m}D(m|y)}\right)^2\le\frac1M,
\end{equation}
by Cauchy--Schwarz and $\sum_mD(m|y)=1$. A constant encoder and uniform guessing attain equality. For trace distance the overlap formula is
\begin{equation}
 G_T(\mathcal C;W)=\max_{\sigma}\frac1M\sum_{m,y}
              \min\{p_{y|m}D(m|y),\sigma_y\}.
\end{equation}
The inequality $\min(a,b)\le\sqrt{ab}$ followed by Cauchy--Schwarz yields $G_T(\mathcal C;W)\le M^{-1/2}$. Also $G_T(\mathcal C;W)\le p_{\rm succ}\le D_0/M$. Conversely, partition the messages into $L=\min\{D_0,\lfloor\sqrt M\rfloor\}$ groups of nearly equal size, send the group label, and guess uniformly within it. With $\sigma$ uniform on the used labels, every group contains at least $L$ messages and the overlap is $L/M$. The upper and lower estimates differ by at most a constant, proving \eqref{eq:trace-separation-start}--\eqref{eq:trace-separation}.
\end{proof}

\end{document}